


 \documentclass[final,3p,authoryear]{elsarticle}

\pdfoutput=1
\usepackage{graphicx}
\usepackage{caption}

\usepackage[utopia]{mathdesign}
\usepackage[OMLmathrm,OMLmathbf]{isomath} 
    \usepackage{lmodern}
\usepackage{amsmath}
\usepackage{amsfonts}
\usepackage{graphicx}
\usepackage{dsfont}
\usepackage{color}
\usepackage{tensor}
\usepackage[mathscr]{eucal}
\numberwithin{equation}{section}

\def \alphaII{{\mathbfit{\alpha}}^{\mathrm{II}} }
\def \kappaII{{\mathbfit{\kappa}}^{\mathrm{II}} }
\def \kappavec{{\mathbfit{\kappa}} }

\def \km{{\left(km\right)}}
\def \mm{{\left(m\right)}}
\def \R{{\mathbfit{R}} }
\def \Y{{\mathbfit{Y}} }

\def \Psibf{{\mathbfit{\Psi}} }
\def \Phibf{{\mathbfit{\Phi}} }

\def \q{{\mathbfit{q}} }
\def \Q{{\mathbfit{Q}} }

\def \L{{\mathbfit{L}} }
\def \H{{\mathbfit{H}} }
\def \J{{\mathbfit{J}} }
\def \l{{\mathbfit{l}} }
\def \V{{\mathbfit{V}} }
\def \v{{\mathbfit{v}} }
\def \e{{\mathbfit{h}} }
\def \i{{\mathrm{\iota}}}

\def \Vel{{\mathrm{V}}}
\def \Lie{{\mathscr{L}}}

\def \grad{{\nabla}}
\def \dex{{\operatorname{d}}}

\def \bpartial{{\boldsymbol{\partial}}}
\def \bparx{{\bpartial_1}}
\def \bpary{{\bpartial_2}}
\def \bparz{{\bpartial_3}}

\def \parphi{{\partial_{\varphi}}}
\def \b{{\mathbfit{b}}}
\def \g{{\mathbfit{g}}}

\def \lhat{{\hat{l}}}
\def \n{{\mathbfit{n}}}
\def \m{{\mathbfit{m}}}

\def \u{{\mathbfit{u}}}
\def \dV{{\ensuremath{dV}}}
\def \dVS{{\ensuremath{dU}}}
\def \dS{{\ensuremath{dS}}}
\def \kappabf{{\mathbfit{\kappa}}^{\mathrm{b}}}
\def \betapl{{\mathbfit{\beta}^{\mathrm{p}}}}
\def \thetabf{{\mathbfit{\vartheta}}}
\def \alphabf{{\mathbfit{\alpha}}}
\def \etabf{{\mathbfit{\eta}}}
\def \rhobf{{\mathbfit{\rho}}}
\def \totimes{{\widehat{\otimes}}}
\def \sotimes{{\otimes_\mathrm{s}}}

\newcommand  {\dx 	}[1]{\ensuremath{\dex x^{#1}} }
\newcommand  {\tr 	}{\ensuremath{\operatorname{tr}} }

\newcommand  {\divergence}{\nabla \cdot} 
\newcommand  {\curl}{\ensuremath{\nabla \times} }
\renewcommand  {\phi}{\varphi}
\usepackage{amsthm}
\newtheorem{proposition}{Proposition}
\newtheorem{definition}{Definition}


\journal{Journal of the Mechanics and Physics of Solids}

\begin{document}

\begin{frontmatter}



\title{Multipole expansion of continuum dislocations dynamics in terms of alignment tensors}

\author[label1]{Thomas Hochrainer}

\ead{hochrainer@mechanik.uni-bremen.de}
\address[label1]{BIME - Bremer Institut f\"ur Strukturmechanik und Produktionsanlagen, Universit\"at Bremen, IW3, Am Biologischen Garten 2, 28359 Bremen, Germany}

\begin{abstract}
Dislocation based modeling of plasticity is one of the central challenges at the crossover of materials science and continuum mechanics. Developing a continuum theory of dislocations requires the solution of two long standing problems: (\emph{i}) to find a faithful representation of dislocation kinematics with a reasonable number of variables and (\emph{ii}) to derive averaged descriptions of the dislocation dynamics (i.e. material laws) in terms of these variables.  In the current paper we solve the first problem. This is achieved through a multipole expansion of the dislocation density in terms of so-called alignment tensors containing the directional distribution of dislocation density and dislocation curvature. A hierarchy of evolution equations of these tensors is derived from a higher dimensional dislocation density theory. Low order closure approximations of this hierarchy lead to continuum dislocation dynamics models with only few internal variables. Perspectives for more refined theories and current challenges in dislocation density modeling are discussed. 
\end{abstract}

\begin{keyword}
Dislocations \sep Alignment tensors \sep Multipole expansion \sep Crystal plasticity




\end{keyword}

\end{frontmatter}


\section{Introduction} \label{Sec: Introduction}
About 60 years after the introduction of the dislocation density tensor introduced independently and largely simultaneously by \citet{kondo52, nye53, bilby_bs55} and \citet{kroener_r56}, a dislocation density based theory of plasticity is still not available. Of course, phenomenological hardening laws for polycrystals benefit a lot from dislocation theory \citep{kocks_m03, devincre_hk08} and composite models \'a la \citet{Mughrabi19831367} have come a long way in describing dislocation structures. But with regard to a plasticity theory based on controlled averaging of the behavior of an ensemble of dislocations the pessimistic summary presented by \citet{kroener01} in his last paper is essential still valid. Kr\"oner himself remained dissatisfied with the incomplete information contained in the dislocation density tensor his whole life. In the course of time he suggested several ways how to overcome this deficiency. One of them was the introduction of so-called correlation tensors \citep{kroener69} which generalizes the concept of higher order correlation functions from statistical mechanics to the tensorial description needed for dislocations. With this idea he was far ahead of his time but he had no chance of obtaining reasonable information on the pair correlation tensors back then. The concept has been introduced successfully in dislocation theory in a series of papers around the the last turn of century \citep{groma97, zaiser_mg01, groma_cz03}, but only for the non-tensorial case of straight parallel edge dislocations. In these works pair correlations were derived from a series of two-dimensional discrete dislocation simulations. Obtaining the pair correlation tensors for fully three-dimensional ensembles of curved dislocations has not yet been seriously attempted, aside from two preliminary studies by \citet{csikor_etal07} and \citet{deng_e07}. But while pair correlations are of obvious importance for developing an averaged theory of dislocations they do not contribute to completing the insufficient information on the dislocation state contained in the dislocation density tensor. This completion requires the definition of dislocation density measures which contain enough information to predict at least kinematically the development of plastic deformation and their own evolution. We call such theories \emph{continuum dislocation dynamics} (CDD) theories. In the following we give a brief overview of existing CDD theories and recapitulate the kinematic challenge in averaging dislocation systems.

The basic CDD theory is what we consider a pseudo-continuum theory of moving dislocations based solely on the Kröner-Nye-tensor $ \alphabf = \curl \mathbfit{\beta}^\mathrm{p} $, which derives as the \emph{curl} of the plastic distortion tensor $ \betapl $. This elementary CDD theory is valid if the spatial resolution is such that all dislocations are resolved individually or if dislocations are continuously distributed but all dislocations are \emph{geometrically necessary dislocations} (GND). In this case the Kröner-Nye-tensor contains all information on the dislocation state. Necessarily this means that the Kröner-Nye tensor needs to be known on a level where the dislocation line direction $ \l $ can be inferred from it. Only then the local dislocation velocity vector $ \v $ which is orthogonal to $ \l $ can be defined uniquely. This is the case if the Kröner-Nye tensor is known on each slip system $\varsigma$ separately, $ \alphabf_\varsigma $, such that the overall Kröner Nye-tensor may be obtained as \ $ \alphabf = \sum_\varsigma \alphabf_\varsigma $. On each slip system the Kröner-Nye-tensor is then decomposable into a tensor product of a dislocation density vector $ \kappavec_\varsigma $ and the Burgers vector $ \b_\varsigma $ as $ \alphabf_\varsigma = \kappavec_\varsigma \otimes \b_\varsigma $. The total dislocation density $ \rho_\varsigma $ per slip system is in this case given by the norm of the dislocation density vector, $ \rho_\varsigma =\kappa_\varsigma $, such that $ \alphabf_\varsigma = \rho_\varsigma \l_\varsigma \otimes \b_\varsigma $. Given the line direction $ \l_\varsigma $ the dislocation velocity vector $ \v_\varsigma $ may be derived from the current stress state by some material law involving the Peach-Koehler-force. Leaving aside the concrete form of the material law we obtain the rate of plastic distortion tensor as 
\begin{equation}
	\partial_t \mathbfit{\beta}^\mathrm{p}_\varsigma = \v_\varsigma \times \alphabf_\varsigma.
\end{equation}
In the case of only GND we consequently find the evolution equation as introduced for example by \citet{mura63}
\begin{equation} \label{Eq: mura}
	\partial_t \alphabf_\varsigma = \curl \left( \v_\varsigma \times \alphabf_\varsigma \right), 
\end{equation}
which translates due to the fixed Burgers vector directly into the form for the dislocation density vector,
\begin{equation} \label{Eq: evolution kappa intro}
	\partial_t \kappavec_\varsigma = \curl \left( \v_\varsigma \times \kappavec_\varsigma \right). 
\end{equation}
At the heart of the challenge in developing an averaged description of dislocations lies the fact that in averaging the right hand side of Eq.\ \eqref{Eq: evolution kappa intro} the cross product does not commute with the averaging \citep{acharya_r06}. That means one generally finds $ \overline{ \v_\varsigma \times \kappavec_\varsigma } \neq \bar{ \v}_\varsigma \times \bar{ \kappavec }_\varsigma $ unless all dislocation are nicely aligned GNDs (the overbar indicates spatial or ensemble averaging). The reason is that the average of the cross product expression yields the total plastic slip rate which derives from the total dislocation density while the averaged GND-vector $ \kappavec_\varsigma $ usually lacks the total dislocation density information.

To solve this problem there were suggested several workarounds before, as for example: \citet{sedlacek_kw03oa} proposed to work with so called multivalued fields, representing, for example, separable pure GND configurations stemming from multiple sources. This is a useful approach in special situations but it remains of limited scope. \citet{acharya_r06} proposed to split the product expression from the averaging procedure and to substitute the unknown rest with a phenomenological evolution law for the plastic slip rate tensor $ \dot{\mathbfit{L}}^{\mathrm{p}} $ such that
\begin{equation} \label{Eq: evolution acharya}
	\partial_t \bar{\alphabf}_\varsigma = \curl \left( \bar{\v}_\varsigma \times \bar{\alphabf}_\varsigma + \dot{\mathbfit{L}}^{\mathrm{p}} \right).
\end{equation}
While this is a useful approach for deriving effective material laws this does not really add to the question of how to obtain a closed theory from averaging. There have been several attempts to build averaged theories based on the evolution of densities of edge and screw dislocations of two different signs, see for example \citet{arsenlis_etal04, zaiser_h06, reuber_etal14}. These models are kinematically self-contained in the sense that the density information suffices for deriving the plastic slip rate $ \dot{\gamma_\varsigma} $ and the density evolution including dislocation fluxes on the system in closed form. However, they suffer from three drawbacks, that are (\emph{a}) the artificial consideration of only edge and screw character, (\emph{b}) the ignorance to changes in character of the dislocations, and (\emph{c}) the introduction of cumbersome rules for line length changes during the motion of the dislocations. \citet{elazab00}, essentially taking up an idea of \citet{kosevich79}, developed a statistical mechanics theory of dislocation systems rooted in a higher dimensional phase space containing the line direction and the velocity as independent variables. Due to the density description of oriented segments the cancellation problem is successfully solved by such an approach. But the theory as developed by \citet{elazab00} turned out to be a description of unconnected line-segments, a drawback which could be resolved by \citet{hochrainer_zg07} through the introduction of a tensorial measure on the higher dimensional phase space (without the velocity as independent variable). The tensorial measure, called the second order dislocation density tensor (SODT) in the sequel, is a natural generalization of the Kröner-Nye tensor to the higher dimensional configuration space and the evolution equation for it likewise generalizes Eq.\ \eqref{Eq: mura}. However,  working on a higher dimensional configuration space is of course problematic in practical calculations \citep{sandfeld_etal10}. A simplified version of the higher dimensional theory was developed in a series of papers \citep{hochrainer_zg09,sandfeld_etal11} until it was understood as a low order closure to a Fourier expansion of the higher order dislocation density in \citet{hochrainer_etal14}. This simplified CDD contains an evolution equation for the total dislocation density $ \rho_\varsigma $, the classical dislocation density vector $ \kappavec_\varsigma $ and a so-called curvature density $ q_\varsigma $. The simplified theory recovers important features of the higher dimensional CDD and numerical results obtained with it were already successfully compared to discrete dislocation simulations. Nevertheless, it seemed unsatisfactory to develop a theory from a Fourier expansion which seems to suggest a spurious role of the chosen coordinate system. Though it is well known it is not obvious that this multipole expansion corresponds to an expansion of the higher-dimensional distribution function into a series of symmetric traceless tensors of increasing order \citet{applequist89, zheng_z01}. The interpretation as a tensor expansion removes the flavor of being bound to a given coordinate system because tensors are invariantly defined and follow well known transformation rules. Such tensors are known in other branches of continuum mechanics as \emph{alignment tensors}. The notion of alignment tensors was introduced within the theory of liquid crystals \citep{hess75} and finds further applications, e.g.\, in theories for polymers \citep{kroeger1998332} and fiber reinforced composites \citep{advani_t87}. First steps towards the use of alignment tensors in dislocation theory have been presented in \citet{hochrainer13b, hochrainer14}. In the current paper we will present the tensor expansion including the time evolution in full generality.

The outline of the paper is as follows: the tensor expansion provides a hierarchy of tensors which describe the local dislocation state of a crystal in increasing detail. The conservation law for dislocation lines on the higher dimensional space (cf. \citet{hochrainer_zg07}) leads to a hierarchy of evolution equations for the alignment tensors. These evolution equations are themselves expressions of dislocation conservation. They constitute a hierarchy in the sense that the evolution of a tensor of $n$-th order requires information on higher order tenors. But an infinite list of tensors is not a useful set of state variables. Useful theories may only be based on a few low order tensors. To obtain a closed set of evolution equations for low order tensors we need to express the higher order tensors appearing in the evolution equation based on information contained in the low-order tensors. This closing is exemplified for the two lowest order cases. Each time the closing is based on the case where all dislocations are GNDs, which always needs to be recovered from averaged theories as a special case. This work includes the derivation of the evolution equation of the second order alignment tensor presented in \citet{hochrainer14}.

The paper is structured as follows: in Section \ref{Sec: Notation} we introduce some notations and mathematical preliminaries for working on the higher dimensional configuration space. The higher dimensional dislocation density theory is recalled in Section \ref{Sec: hCDD}. In Section \ref{Sec: Multipole expansion} we introduce the multipole expansion in terms of vector spherical harmonics and Fourier series and essentially equivalent expansions into series of irreducible and reducible alignment tensors. In either case we distinguish the general three-dimensional case from the case of only planar dislocations confined to their slip planes. Section \ref{Sec: Evolution alignment} presents the full hierarchy of evolution equations for the multipole expansions and in terms of the alignment tensors. After briefly discussing a similar expansion for the higher dimensional velocity field in Section \ref{Sec: Velocity expansion}, the termination of the hierarchy of evolution equations at low order is topic of Section \ref{Sec: Closure approximations}. The results are discussed and put into the wider context of dislocation density modeling in Section \ref{Sec: Discussion}.

\emph{Remark}: Note that in this Introduction we employed some notations deviating from the use in the rest of the paper. For example we will leave out the slip system index $ \varsigma$ and the overbar for averaged quantities in the sequel. Moreover, the plain symbols $ \rho $ and $ \l $ will be reserved for similar objects on a higher dimensional configuration space.
\section{Notation and the structure of the configuration space} \label{Sec: Notation}
Because we are dealing with a multitude of different mathematical objects we do not strictly apply common rules of using specific classes of symbols for scalars, vectors and tensors. But as usual we denote scalar objects with italic lowercase symbols, and tensorial objects (which includes vectors) with bold faced symbols. However, upper or lowercase bold face symbols do not necessarily indicate a specific type of tensorial object. The frequent use of coordinate notation will usually enlighten the nature of the objects dealt with. It is important to note that although we only deal with the small deformation case, we will distinguish co- and contravariant vectors by means of lower and upper indices, respectively. Inspired by the use in differential geometry (cf. \citet{marsden_h83}) where (contravariant) vectors are identified with partial differential operators we use boldface partial differential symbols to denote the local basis vectors, $\bpartial_i$; however, we use lightface partial differential symbols, $ \partial_i f $, when they operate on functions $f$ as actual partial derivatives. The basis one-forms dual to $\bpartial_i$ are denoted by $\dx{i}$ without using boldface.\footnote{An often employed notation for these bases in continuum mechanics (e.g. \citet{Holzapfel00}) is $\g_i  = \bpartial_i $ and $ \g^i = \dx{i} $.} The Einstein summation convention strictly applies to contracting pairs of one lower and one upper index.

In this paper we deal with objects on the configuration space $ U = M \times S^2 $ made up of spatial points in the crystal manifold $M$ and directions viewed as points on the unit sphere $S^2$. In this case we distinguish the Cartesian spatial parametrization by the co-ordinate functions $x^1$, $x^2$ and $x^3$ and the spherical coordinates with polar angle $\theta$ (north pole in 3-direction) and azimuthal angle $\phi$ taken from the 1-axis. In index notation we distinguish spatial (or horizontal) and directional (or vertical) coordinates (cf. \citet{svendsen01}) by introducing Latin indices $ i,j,k, \ldots \in \lbrace 1,2,3 \rbrace$ and Greek indices $ \mu, \nu, \ldots \in \lbrace \theta,\phi \rbrace$, respectively, such that a tangent vector $ \mathbfit{Y} $ to the configuration space is written as
\begin{equation}
	\mathbfit{Y} = Y^i \bpartial_i + Y^\mu \bpartial_\mu = Y^1 \bparx + Y^2 \bpary + Y^3 \bparz + Y^\theta \bpartial_\theta + Y^\phi \bpartial_\phi.
\end{equation}
The natural metric on the unit tangent bundle accordingly splits into a horizontal and a vertical part as
\begin{equation}
	\g = g_{ij} \dx{i} \otimes \dx{j} + g_{\mu\nu} \dex \mu \otimes \dex \nu,
\end{equation}
where $g_{ij} = \delta_{ij}$ are the Cartesian metric coefficients which equal the Kronecker symbol, while $ g_{\theta \theta} = 1$, $ g_{\theta \phi} = g_{\phi \theta} =  0 $ and $ g_{\phi \phi} = \sin^2 \theta $ are the metric coefficients on the unit sphere. As a consequence of the `block-matrix' structure of the metric, the volume element $ \dVS $ on the configuration space $U$ has a multiplicative structure
\begin{equation}
	\dVS = \dV \wedge \dS,
\end{equation}
where $\dV$ denotes the standard spatial volume element and $ \dS = \sin \theta \, \dex \theta \wedge \dex \phi $ is the solid angle element on the sphere.

In the important special case when dislocations are confined to conservative motion within their glide plane, the configuration space reduces to $ U = M \times S^1 $, with $ S^1 $ being the unit circle representing directions within the glide plane. In that case we usually assume a coordinate system adapted to the glide system, such that the slip direction (normalized Burgers vector $\b$) $\m = \bparx $ points in 1-direction, the glide plane normal $ \n = \bparz $ is the 3-direction and $ \n \times \m = \bpary$ is the 2-direction. The circle is parametrized by the angle $ \phi$ taken from the 1-axis. As there is only one angular direction in this case we do not need free Greek indices, i.e., a tangent vector $ \mathbfit{Y} $ has the form
\begin{equation}
	\mathbfit{Y} = Y^i \bpartial_i + Y^\phi \bpartial_\phi = Y^1 \bparx + Y^2 \bpary + Y^3 \bparz + Y^\phi \bpartial_\phi.
\end{equation}
As in earlier publications we will in this setting usually introduce an extra symbol for the last component function of a vector field, for instance $ \eta = Y^\phi $, although this disguises the vectorial nature of the angular part. In the pure glide scenario the metric and volume element read
\begin{align}
	\g &= g_{ij} \dx{i} \otimes \dx{j} + \dex \phi \otimes \dex \phi, \text{   and} \\
	\dVS &= dV \wedge \dex \phi.
\end{align}

On the configuration spaces we have each time a so called canonical vector field $\l$, which we usually call -- for reasons which will become apparent in the next Section -- the canonical line direction. This vector field is given through the horizontal unit vector pointing in the direction defined through the angular coordinates $ \theta $ and $ \phi $. The line direction in the general case is
\begin{equation}
	\l = \l \left( \theta, \phi \right) = \sin \theta \cos \phi \bparx + \sin \theta \sin \phi \bpary + \cos \theta \bparz.
\end{equation}
That is, the line direction is the position vector of the points on the sphere. In the case of conservative motion we have the canonical line direction as the position vector of the points on the unit circle,
\begin{equation}
	\l = \l \left( \phi \right) = \cos \phi \bparx + \sin \phi \bpary.
\end{equation}
In both cases the canonical line direction is a horizontal vector field which depends on the angular coordinates. In the general case there are two further such vector fields which arise naturally, namely
\begin{align}
	\e_{\theta} &:= \partial_\theta \l = \cos \theta \cos \phi   \bparx + \cos \theta \sin \phi  \bpary - \sin \theta \, \bparz, \\
	\e_{\phi} &:= \partial_\phi \l = - \sin \theta \sin \phi   \bparx + \sin \theta \cos \phi \bpary.
\end{align}
In the case of pure glide we only have the additional base field
\begin{equation}
	\e_{\phi} := \partial_\phi \l = - \sin \phi   \bparx + \cos \phi \bpary.
\end{equation}
In the latter case we will also occasionally use the notation $ \l^\perp = - \e_{\phi} $ as introduced in earlier publications \cite{hochrainer13b, hochrainer_etal14}. The vector fields $ \{ \l, \e_{\phi}, \e_{\theta} \} $ in the general case and $ \{ \l, \e_{\phi}, \bparz \} $ in the pure glide case form an alternative (to the standard basis  $ \{ \bparx, \bpary, \bparz \} $) orthogonal basis of the so called horizontal sub-space of the tangent space to $ M \times S^2 $ and $ M \times S^1 $ respectively. The base vectors $ \bpartial_\phi $ and $ \bpartial_\theta $ on the other hand span the so-called vertical subspace of the tangent space. This denomination will be discussed after introducing a mapping between horizontal and vertical vectors.
\begin{figure}
\begin{center}
\begin{minipage}[b]{0.32\columnwidth}
\includegraphics[width=1\columnwidth]{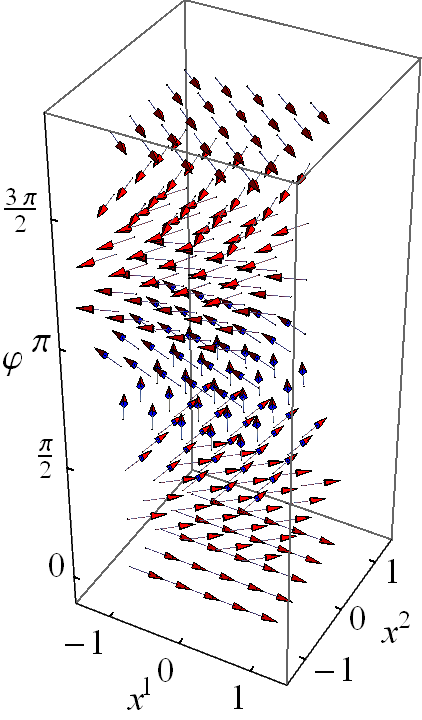} \caption*{\Large{$\l$}}
\end{minipage}
\begin{minipage}[b]{0.32\columnwidth}
\includegraphics[width=1\columnwidth]{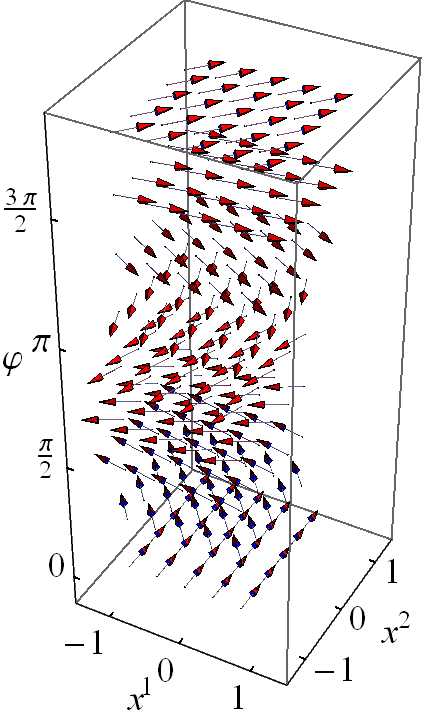} \caption*{\Large{$\e_\phi$}}
\end{minipage}
\begin{minipage}[b]{0.32\columnwidth}
\includegraphics[width=1\columnwidth]{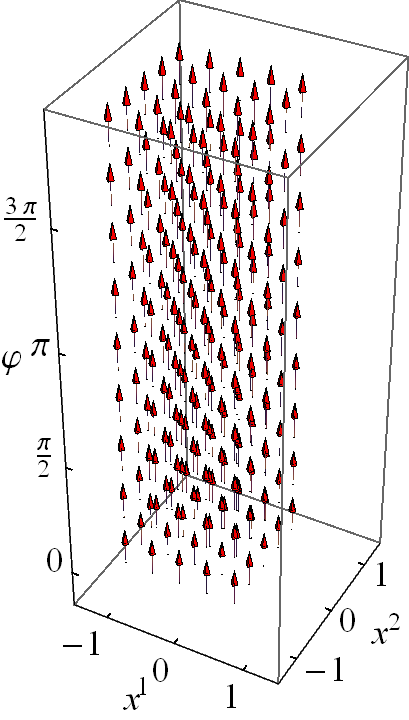} \caption*{\Large{$\bpartial_\phi$}}
\end{minipage}
\caption{Visualization of the basis fields $ \l $ , $\e_\phi$ and $\bpartial_\phi$ in a three dimensional slice of the four dimensional configuration space in the case of glide only.}%
\label{fig_1}
\end{center}
\end{figure}

A mapping from vertical vectors $\etabf = \eta^\mu \bpartial_\mu $ to horizontal vectors can be defined by
\begin{equation}
	\etabf = \eta^\mu \bpartial_\mu \mapsto  \eta^\mu \partial_\mu \l = \eta^\mu \e_{\mu} =: \etabf_\mathrm{h} .
\end{equation}
If restricted to horizontal vectors which are orthogonal to $\l$ (and lie in the glide plane in the case of conservative motion) there is a unique inverse of the map, such that horizontal vectors of the form $ \u = u^\mu \e_{\mu} $ are mapped to vertical vectors through
\begin{equation}
	\u = u^\mu \e_{\mu} \mapsto  u^\mu \bpartial_\mu =: \u_\mathrm{v}.
\end{equation}
In Figure \ref{fig_1} the vector fields $ \l $ , $\e_\phi$ and $\bpartial_\phi$ are visualized in a three dimensional slice (disregarding the normal direction $x^3$ to the glide plane) of the four dimensional configuration space in case of glide only. The $x^1$ and $x^2$ directions span the glide plane, while the third dimension shown is the angular direction parametrized by $ \phi $. One easily sees that the three vector fields are pointwise orthogonal to each other. This visualization also motivates the denomination of horizontal vector fields (tangent to the base-manifold) and vertical ones (connected to the extra dimension). Moreover this helps realizing that the mapping from horizontal to vertical vector fields, i.e., mapping $ \e_\phi $ into $\bpartial_\phi$ in the pure glide case, actually involves very distinct objects. This needs to be kept in mind also when visualizing the base fields in the five-dimensional case as attempted in Figure \ref{fig_2}. In that figure the spatial basis $\{ \l, \e_{\phi}, \e_{\theta} \}$ is depicted within the sphere. The canonical line direction $ \l$ points towards the current point on the sphere along the red ray (color version is available online). The base vectors $ \e_{\phi} $ and $ \e_{\theta} $ apparently coincide with the vertical base vectors $ \bpartial_\phi $ in direction of the azimuthal and $ \bpartial_\theta $ in direction of the polar angle, respectively, which are depicted tangent to the sphere; note, however, that the latter base vectors actually belong to two further dimensions which can not be visualized.
\begin{figure}
\begin{center}
\includegraphics[width=0.5\columnwidth]{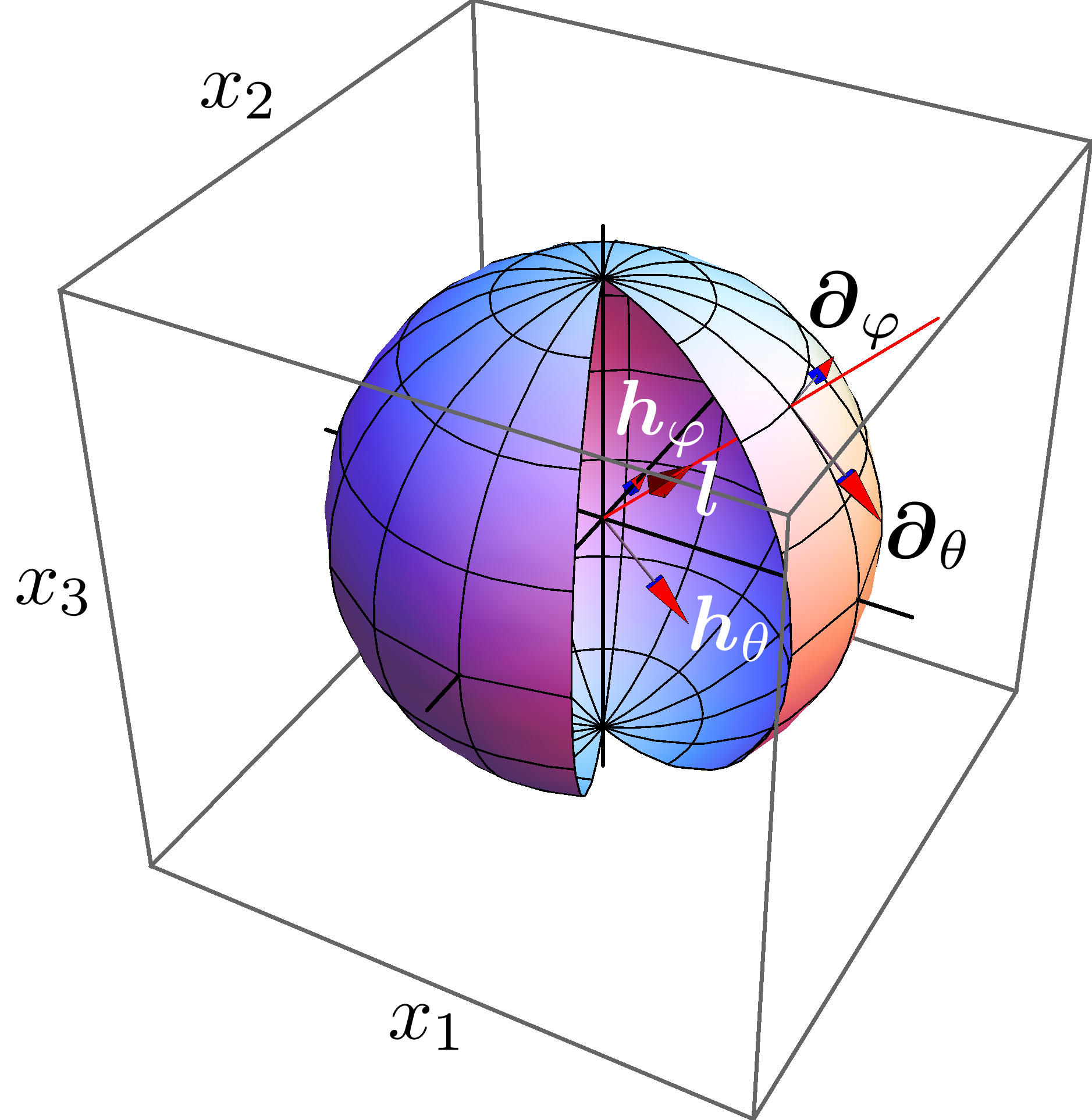}
\caption{Visualization of the horizontal basis $\{ \l, \e_{\phi}, \e_{\theta} \}$ (within the sphere) and the vertical basis $\{ \bpartial_\phi, \bpartial_\theta \}$ (tangent to the sphere) at a point in the sphere bundle. }%
\label{fig_2}
\end{center}
\end{figure}
\subsection{Differential forms on the configuration space} \label{Sec: Diff forms}
As discussed in the last section we deal with objects on five and four dimensional spaces. As a consequence, field theoretic descriptions cannot make use of classical vector calculus but are most conveniently formulated in terms of differential forms. Although the use of differential forms is not wide spread we refer to standard text books (e.g., \cite{do_carmo94, marsden_h83}) for an introduction to the formalism. The reader should be familiar with (\emph{i}) the wedge product $ \wedge $, which generalizes the cross and the scalar triple product, (\emph{ii}) the interior multiplication $ \i_\v $ with a vector field $ \v $, which generalizes the scalar product and also the cross product, (\emph{iii}) the exterior derivative $ \dex $, which generalizes the gradient, the curl and the divergence operators, (\emph{iv}) the Lie-derivative along a vector field $ \Lie_\v = \dex \i_\v + \i_\v \dex $, which provides operators appearing in conservation laws of differential forms and (\emph{v}) the generalized Stokes theorem. The only mathematical tool discussed here is the projection of differential forms on the configuration spaces onto spatial differential forms by integration. This is a generalization of what is called Fubini's theorem by \citet{svendsen01}.

The configuration spaces $ M \times S^2 $ and $ M \times S^1 $ are product manifolds. We may therefore project differential forms to purely spatial or basal differential forms by integration over the directional space. We regard a $p$-form  $ \mathbfit\omega $ on the configuration space, where $p \geq 2$ for general configurations and $ p \geq 1 $ in the case of only planar dislocations. Let $\u_{(j)} = u^i_{(j)} \bpartial_i$ denote vector fields on the base manifold, that is, such that the coordinate functions $ u^i_{(j)} $ are independent of the angular coordinates $ \theta $ and $\phi $. In the general configuration space we take $p-2$ such vectors to obtain from $ \mathbfit\omega $ by repeated interior multiplication the differential $2$-form $ \i_{\u_{(j_{p-2})}} \cdots \i_{\u_{(j_1)}} \mathbfit\omega $ on the configuration space. This $2$-form can be integrated over two-dimensional subspaces as the sphere at every point. We then define a $(p-2)$-form on the base manifold through
\begin{equation} \label{Eq: base projection}
	\mathbfit\omega^\mathrm{b} \left( \u_{(j_1)}, \cdots, \u_{j_{(p-2)}} \right) = \int_{S^2} \i_{\u_{(j_{p-2})}} \cdots \i_{\u_{(j_1)}} \mathbfit\omega
\end{equation}
In local coordinates, $\mathbfit\omega$ is of the form
\begin{equation}
	\mathbfit\omega = \omega_{i_1 \cdots i_p} \dx{i_1} \wedge \ldots \wedge \dx{i_p} + \omega_{i_1 \cdots i_{p-1} \mu} \dx{i_1} \wedge \ldots \wedge \dx{i_{p-1}} \wedge \dex \mu + \omega_{i_1 \cdots i_{p-2} \mu \nu} \dx{i_1} \wedge \ldots \wedge \dx{i_{p-2}} \wedge \dex \mu \wedge \dex \nu.
\end{equation}
Only the last term of this contributes to the integral in (\ref{Eq: base projection}) such that we find
\begin{equation}
	\omega^\mathrm{b}_{i_1 \cdots i_{p-2}} = \mathbfit\omega^\mathrm{b} \left( \bpartial_{i_1}, \cdots, \bpartial_{i_{p-2}} \right) = \int_{S^2} \omega_{i_1 \cdots i_{p-2} \mu \nu} \, \dex \mu \wedge \dex \nu.
\end{equation}
This justifies the short hand notation
\begin{equation}
	\mathbfit\omega^\mathrm{b} =  \int_{S^2} \mathbfit\omega = \int_{S^2} \left( \omega_{i_1 \cdots i_{p-2} \mu \nu} \dex \mu \wedge \dex \nu \right) \dx{i_1} \wedge \ldots \wedge \dx{i_{p-2}}.
\end{equation}
Because the sphere is a closed manifold we find that this basal projection $ \left( \cdot \right)^\mathrm{b} $ commutes with the exterior derivative $ \dex $, i.e., $ \left( \dex \mathbfit\omega\right) ^\mathrm{b} = \dex \mathbfit\omega^\mathrm{b} $. To see this we take $ U \subseteq M $ to be a submanifold of dimension $ p+1 $, meaning that the integral of $ \left( \dex \mathbfit\omega\right) ^\mathrm{b} $ over $ U $ is well defined. For this we obtain from Stokes theorem
\begin{equation}
	\int_{U} \left( \dex \mathbfit\omega\right) ^\mathrm{b} = \int_{U} \int_{S^2} \dex \mathbfit\omega = \int_{U \times S^2}  \dex \mathbfit\omega = \int_{\partial \left(U \times S^2\right)}  \mathbfit\omega = \int_{\partial U \times S^2} \mathbfit \omega = \int_{\partial U} \int_{S^2} \mathbfit \omega = \int_{\partial U} \mathbfit \omega^\mathrm{b} = \int_{U} \dex \mathbfit \omega^\mathrm{b}.
\end{equation}
Because this integral equality holds for any submanifold $ U $ of dimension $ p+1 $ this requires the integrands to be the same, i.e. that $ \left( \dex \mathbfit\omega\right) ^\mathrm{b} = \dex \mathbfit\omega^\mathrm{b} $.

With the proper dimensional changes the basal projection is analogously defined for the case of glide only, where the configuration space is $ M \times S^1 $. In this case we have
\begin{equation}
	\mathbfit\omega^\mathrm{b} =  \int_{S^1} \mathbfit\omega = \int_{S^1} \left( \omega_{i_1 \cdots i_{p-1} \phi} \dex \phi \right) \dx{i_1} \wedge \ldots \wedge \dx{i_{p-1}}.
\end{equation}
The basal projection in the case of glide only likewise commutes with the exterior derivative because also $ S^1 $ is a closed manifold.
\subsection{Vector valued differential forms} \label{Sec: Vector valued}
Besides the differential forms discussed above we also deal with tensor valued differential forms on the configuration space. Spatial projections of such tensor valued differential forms on the configuration space will in the sequel define the tensor expansion of the higher dimensional dislocation density variables. Even though these differential forms are tensor valued one usually denotes them as vector valued because the decisive point is that they take values in a vector bundle. We refer to the appendix in \citet{hochrainer13b} for a brief introduction to the calculus on vector valued differential forms. In the current case the differential forms take values in tensor bundles built upon the tangent space to the base manifold $ TM $. In the case of small deformations this is a trivial vector bundle, that is, it may be identified with a product manifold\footnote{We note that the flat crystal connection also renders the tangent bundle trivial in the case of large deformations, see e.g. \citet{hochrainer13b}}: $ TM \cong M \times \mathbb{R}^3 $. Because of this product structure the vector valued differential forms assume values in a vector space and we may perform differential and integrals calculus for the tensor coefficients in a fixed basis. This is what one routinely does when integrating vector valued fields, as. e.g. force fields in mechanics. The exterior derivative $ \dex $, the basal projection $ \left( \cdot \right)^\mathrm{b} $ and Stokes theorem may thence be defined coefficient-wise. Most notably, the basal projection and the exterior derivative also commute for the tensor valued differential forms. For readers who are interested in further details of the vector valued forms we provide a quick overview of the discussed concepts below. The biggest part of these details may be skipped without impairing the readability of the rest of the paper, with exception of Eq.\ \eqref{Eq: exterior VVDF} below.

Let $ \mathbfit{\Omega} \in \Lambda^p (M \times S^2 ) \otimes T^r M $ be a differential $p$-form on the configuration space assuming purely contravariant $r$-th order tensors on the base manifold as values, i.e.
\begin{multline}
	\mathbfit{\Omega} = \tensor{\Omega}{_{i_1 \cdots i_{p} \mu \nu}^{j_1 \cdots j_r}} \dex x^{i_1} \wedge \ldots \wedge \dex x^{i_p} \otimes \bpartial_{j_1} \otimes \cdots \otimes \bpartial_{j_r} + \\
	\tensor{\Omega}{_{i_1 \cdots i_{p-1} \mu }^{j_1 \cdots j_r}} \dex x^{i_1} \wedge \ldots \wedge \dex x^{i_{p-1}} \wedge \dex \mu \otimes \bpartial_{j_1} \otimes \cdots \otimes \bpartial_{j_r} + \\
	\tensor{\Omega}{_{i_1 \cdots i_{p-2} \mu \nu}^{j_1 \cdots j_r}} \dex x^{i_1} \wedge \ldots \wedge \dex x^{i_{p-2}} \wedge \dex \mu \wedge \dex \nu \otimes \bpartial_{j_1} \otimes \cdots \otimes \bpartial_{j_r}.
\end{multline}
In analogy to the definition for differential forms we define the basal projection
\begin{equation}
	\mathbfit{\Omega}^\mathrm{b}  = \int_{S^2} \left( \tensor{\Omega}{_{i_1 \cdots i_{p-2} \mu \nu}^{j_1 \cdots j_r}} \, \dex \mu \wedge \dex \nu \right) \dex x^{i_1} \wedge \ldots \wedge \dex x^{i_{p-2}} \otimes \bpartial_{j_1} \otimes \cdots \otimes \bpartial_{j_r}.
\end{equation}
Also this projection commutes with the exterior derivative on vector valued differential forms. In order to show this we first discuss how the exterior derivative is defined on vector valued differential forms and then how this yields a generalization of Stokes theorem for the vector valued case. The exterior derivative on vector valued differential forms is defined using the Levi-Civita-connection on the vector bundle $ \nabla $ by requiring the following product rule for decomposable vector valued differential forms of the form $ \mathbfit{\Omega} = \mathbfit{\omega} \otimes T^r $:
\begin{equation} \label{Eq: exterior VVDF}
	\dex \left( \mathbfit{\omega} \otimes T^r  \right) = \dex \mathbfit{\omega} \otimes T^r + (-1)^p \mathbfit{\omega} \wedge \nabla T^r,
\end{equation}
where $ \nabla T^r $ is considered an $r$-tensor valued $1$-form. In order to formulate a generalized Stokes theorem we need to define integrals of the tensor valued differential forms. Because of the trivial structure of the tensor bundles we may view the tensor-valued differential form as a collection of $ 3^r $ differential forms
\begin{multline}
	\mathbfit{\omega}^{j_1 \cdots j_r } = \tensor{\Omega}{_{i_1 \cdots i_{p} \mu \nu}^{j_1 \cdots j_r}} \dex x^{i_1} \wedge \ldots \wedge \dex x^{i_p} + 
	\tensor{\Omega}{_{i_1 \cdots i_{p-1} \mu }^{j_1 \cdots j_r}} \dex x^{i_1} \wedge \ldots \wedge \dex x^{i_{p-1}} \wedge \dex \mu + \\
	\tensor{\Omega}{_{i_1 \cdots i_{p-2} \mu \nu}^{j_1 \cdots j_r}} \dex x^{i_1} \wedge \ldots \wedge \dex x^{i_{p-2}} \wedge \dex \mu \wedge \dex \nu.
\end{multline}
We use these differential forms to define tensor components from integrals over a $p$-dimensional submanifold $\mathscr{U} \subseteq M \times S^2 $ in a fixed basis through 
\begin{equation}
T_\mathscr{U} ^{\mathbfit{\Omega} \; j_1 \cdots j_r} =  \int_\mathscr{U} \mathbfit{\omega}^{j_1 \cdots j_r },
\end{equation}
and define the tensor valued integral of the vector valued differential form as
\begin{equation}
  \int_\mathscr{U} \mathbfit\Omega := T_\mathscr{U} ^{\mathbfit{\Omega} \; j_1 \cdots j_r} \bpartial_{j_1} \otimes \cdots \otimes \bpartial_{j_r}.
\end{equation}
For each $p$-form $\mathbfit{\omega}^{j_1 \cdots j_r }$ Stokes theorem applies, such that for a $ p+1 $ dimensional submanifold $ \mathscr{U} $ we find
\begin{equation}
T_\mathscr{U} ^{\dex \mathbfit{\Omega} \; j_1 \cdots j_{r}} =  \int_\mathscr{U} \dex \mathbfit{\omega}^{j_1 \cdots j_r } =  \int_{\partial \mathscr{U}} \mathbfit{\omega}^{j_1 \cdots j_r } = T_{\partial \mathscr{U}} ^{\mathbfit{\Omega} \; j_1 \cdots j_{r}},
\end{equation}
such that we obtain Stokes theorem for vector valued differential forms
\begin{equation}
  \int_\mathscr{U} \dex \mathbfit\Omega = \int_\mathscr{\partial U} \mathbfit\Omega.
\end{equation}
With Stokes theorem at hand we thus obtain with the same calculation as for the real valued differential forms ($U \subseteq M$ being again a $p+1$-dimensional submanifold)
\begin{equation}
	\int_{U} \left( \dex \mathbfit\Omega\right) ^\mathrm{b} = \int_{U} \dex \mathbfit\Omega^\mathrm{b},
\end{equation}
and similarly conclude that the basal projection commutes with the exterior derivative also for vector valued differential forms, i.e. that $ \left( \dex \mathbfit\Omega\right) ^\mathrm{b} = \dex \mathbfit\Omega^\mathrm{b} $.
\section{Higher dimensional continuum dislocation dynamics } \label{Sec: hCDD}
At the heart of the mesoscopic continuum theory of dislocations developed by \cite{hochrainer_zg07} lies the so called dislocation density tensor of second order (SODT) $ \alphaII $. This tensor is a natural generalization of the classical dislocation density tensor to a higher dimensional configuration space, defined as direct product of the real space and the space of possible line directions. The tensor is closely related to the phase space densities of dislocations as introduced by \cite{elazab00}, but as tensor it explicitly accounts for the curved and connected nature of dislocation lines. The SODT is a vector valued 4-form on the five-dimensional space $M\times S^2$ and a vector valued 3-form on the four-dimensional space $M\times S^1 $ in the case of glide only. Dislocations have a constant Burgers vector which stays unchanged while moving. This means that if we obtain a density measure solely for dislocations of a given Burgers vector, the evolution of the density measure does not involve transfer to or from densities of another Burgers vector. We may thus split the density measure in its contributions from given Burgers vectors in the general case and for given glide systems in the case of glide only. The general \emph{kinematic} theory then derives by summing over all slip systems.\footnote{The \emph{dynamic} theory will of course have to consider interactions between dislocations of different Burgers vector or on distinct glide planes. We will focus on the kinematic theory in most of the paper and only touch upon the huge challenge of the dynamic case in Section \ref{Sec: Discussion}.} This means that from now on we consider dislocations of a given Burgers vector, $ \b $, or a given glide system $\n, \b$ only. The SODT is then multiplicatively decomposable in its differential form part $ \kappaII $ and the Burgers vector as
\begin{equation}
 	\alphaII = \kappaII  \otimes \b.
\end{equation}
Due to this structure the evolution of the SODT is solely determined by the evolution of the differential form $\kappaII$ which contains the geometric information of the dislocation distribution. We note that the Burgers vector is needed to determine the Peach-Koehler force and thus the dislocation velocity, but it plays no role for the kinematics of the dislocation density description once the dislocation velocity $ \v $ is known.

Before we turn to the evolution equation, however, we provide a less abstract form for $\kappaII$. In either setting this is a differential $n-1$ form on an $n$ dimensional space. Consequently, there is each time a unique associated vector field $ \R $ such that
\begin{equation}
 	\kappaII = \i_\R \dVS.
\end{equation}
We call $ \R $ the dislocation density vector of second order. It represents a density of lifted dislocation line directions on the configuration space \citep{hochrainer_zg07}. As each tangent vector to the configuration space $ \R $ can be split into its horizontal and vertical part. Because the horizontal part of the tangent to the lifted dislocation lines is the canonical line direction $\l$ the horizontal part is given by $\rho \l$, with a density function $\rho$. We denote the vertical part with $ \q $. The dislocation density vector of second order consequently takes the form
\begin{equation} \label{Eq: R}
   \R = \rho \sin \theta \cos \phi \bparx + \rho \sin \theta \sin \phi \bpary +\rho \cos \theta \bparz + q^\theta \bpartial_{\theta} + q^\phi \bpartial_{\phi},
\end{equation}
in the general case and
\begin{equation} \label{Eq: R cons}
   \R = \rho \cos \phi \bparx + \rho \sin \phi \bpary +  q \bpartial_{\phi},
\end{equation}
in the case of only conservative motion. Accordingly, the differential form part of the SODT is given by
\begin{multline} \label{Eq: kappaII}
   \kappaII = \left( \rho \sin \theta \cos \phi \dx{2} \wedge \dx{3} - \rho \sin \theta \sin \phi \dx{1} \wedge \dx{3} + \rho \cos \theta \dx{1} \wedge \dx{2} \right) \wedge \dS + \\ \sin \theta \dV \wedge \left( - q^\theta \dex \phi + q^\phi \dex \theta \right),
\end{multline}
and
\begin{equation} \label{Eq: kappaII cons}
   \kappaII = \left( \rho \cos \phi \dx{2} \wedge \dx{3} - \rho \sin \phi \dx{1} \wedge \dx{3} \right) \wedge \dex \phi  - q \dV,
\end{equation}
respectively. From this differential form on the configuration space we can obtain the spatial dislocation density 2-form $ \kappabf $ as the spatial projection, i.e.,
\begin{equation} \label{Eq: kappa}
   \kappabf = \int_{S^2} \left( \rho \sin \theta \cos \phi \, \dS \right) \dx{2} \wedge \dx{3} - \int_{S^2} \left( \rho \sin \theta \sin \phi \, \dS \right) \dx{1} \wedge \dx{3} +  \int_{S^2} \left( \rho \cos \theta \, \dS \right) \dx{1} \wedge \dx{2},
\end{equation}
or, in case of glide only,
\begin{equation} \label{Eq: kappa cons}
   \kappabf = \int_{S^1} \kappaII = \int_{S^1} \left( \rho \cos \phi \, \dex \phi \right) \dx{2} \wedge \dx{3} - \int_{S^1} \left( \rho \sin \phi \, \dex \phi  \right) \dx{1} \wedge \dx{3}.
\end{equation}
Also for the spatial differential 3-form $\kappabf $ we may define a vector $ \kappavec $ such that $ \kappabf = \i_{\kappavec} \dV $ with
\begin{equation} \label{Eq: kappavec}
   \kappavec = \int_{S^2} \left( \rho \sin \theta \cos \phi \, \dS \right) \bparx + \int_{S^2} \left( \rho \sin \theta \sin \phi \, \dS \right) \bpary +  \int_{S^2} \left( \rho \cos \theta \, \dS \right) \bparz,
\end{equation}
or 
\begin{equation} \label{Eq: kappavec cons}
   \kappavec = \int_{S^1} \left( \rho \cos \phi \, \dex \phi \right) \bparx + \int_{S^1} \left( \rho \sin \phi \, \dex \phi  \right) \bpary.
\end{equation}
Before we turn to the time evolution we note that also the lifted dislocation lines do not end inside the configuration space and therefore $\kappaII$ (and consequently $\alphaII$) turns out to be a closed differential form, i.e.,
\begin{equation} \label{Eq: d kappaII}
   \dex \kappaII = 0.
\end{equation}
This likewise means that $ \R $ is solenoidal. If we denote with $\nabla^S \cdot $ the divergence on the sphere, we find from Eq.\ \eqref{Eq: R} the solenoidality of $ \R $ to be equivalent to
\begin{equation} \label{Eq: d kappaII coord}
   \dex \kappaII = 0 \Leftrightarrow \nabla_\l \rho + \nabla^S \cdot \q = 0.
\end{equation}
For glide only systems this turns into the condition that $\nabla_\l \rho + \parphi q = 0$. Because the projection procedure commutes with the exterior derivative, we immediately see that also the projected differential form $ \kappabf $ is closed,
\begin{equation} \label{Eq: kappa cons glide only}
   \dex \kappabf = 0.
\end{equation}
This likewise implies that the vector field $ \kappavec $ is solenoidal as required from classical dislocation density theory.
\subsection{Time evolution} \label{Sec: Time evolution}
The motion of the lifted dislocation lines in the configuration space is determined from the motion of the dislocations in the crystal manifold $ M$. The velocity vector on the configuration space $ \V = \v + \thetabf $ is composed of a horizontal vector $ \v $ which is orthogonal to $ \l $ and a vertical part $\thetabf$ which characterizes the rotational direction of dislocation segments while moving. In the current small deformation formulation the horizontal image of the rotational velocity $\thetabf_\mathrm{h} $ is given through the component of the directional derivative of the velocity vector $ \v $ along the lifted line direction $ \L = 1/ \rho \R $, which (the component) is orthogonal to the canonical line direction $\l$, that is
\begin{equation} \label{Eq: theta}
   \thetabf_\mathrm{h} = \left( \nabla_\L \v \right)^\perp= \nabla_\L \v - \g \left(  \nabla_\L \v, \l \right) \l. 
\end{equation}
Consequently $\thetabf = \vartheta^\mu \bpartial_\mu $, where the $\vartheta^\mu = \vartheta^\mu_\mathrm{h}$ are obtained from the representation $ \thetabf_\mathrm{h} = \vartheta^\mu_\mathrm{h} \e_\mu $. We summarize that the lifted velocity vector field is of the form
\begin{equation} \label{Eq: V}
   \V  = \v + \thetabf = v^i \bpartial_i + \vartheta^\mu \bpartial_\mu = v^\mu \e_\mu + \vartheta^\mu \bpartial_\mu,
\end{equation}
where the right most representation of $ \v $ points to the fact that the spatial dislocation velocity is orthogonal to the line direction $\l$.

Again, we introduce a special notation for the pure glide case, where also the velocity vector has only one component in $ \e_\phi $ direction, which we call the scalar velocity. However, in order to be consistent with earlier publications, we introduce a \emph{sign convention} for this scalar velocity such the $ \v = -v \e_\phi = v \left(-\e_\phi \right) $. Note that appended to a point on the unit circle $ \e_\phi $ points to the center of the loop. The sign convention therefore ensures that expanding positively oriented loops will have a positive velocity. In the case of glide only we usually leave out the vector index of the rotational velocity and write $ \thetabf = \vartheta \partial_\phi $. With the sign convention from above we find $ \vartheta = - \nabla_\L v $, such that
\begin{equation} \label{Eq: V cons}
   \V  =  - v \e_\phi + \vartheta \bpartial_\phi = v \l^\perp - \nabla_\L v \, \bpartial_\phi,
\end{equation}
where we remind that $\l^\perp = -\e_\phi $.

Given the lifted dislocation velocity field $ \V $ the time evolution of the SODT is given by a Lie derivative $\Lie$ in the direction of the velocity, i.e.\ 
\begin{equation} \label{Eq: dt kappaII}
   \partial_t \kappaII  =  - \Lie_\V \kappaII =  - \dex \i_\V \kappaII.
\end{equation}
The right most expression follows from Cartan's magic formula $ \Lie_\V = \dex \i_\V + \i_\V \dex $ and the fact that the SODT is closed, $ \dex \kappaII = 0 $. The simplicity of evolution equation Eq.\ \eqref{Eq: dt kappaII} is one important reason to work with differential forms in the higher dimensional theory. From the higher dimensional evolution equation we obtain the evolution equation for the spatial dislocation density two-form as
\begin{equation} \label{Eq: dt kappa}
   \partial_t \kappabf  =  - \int_{S^{2(1)}} \dex \i_\V \kappaII = - \dex \int_{S^{2(1)}} \i_\V \kappaII.
\end{equation}
The term $ \J = \i_V \kappaII $ is the higher dimensional dislocation flux form and from the last equation we immediately see that its spatial projection is the classical dislocation flux 1-form, which upon multiplication with the Burgers vector yields the plastic distortion rate tensor $\partial_t \mathbfit\beta^\mathrm{p}$ ,
\begin{equation} \label{Eq: dt beta}
   \partial_t \mathbfit\beta^\mathrm{p}  = - \int_{S^{2(1)}} \i_\V \kappaII \otimes \b = - \int_{S^{2(1)}} \left( \rho \varepsilon_{ikj} v^j l^k \dS \right) \dx{i} \otimes \b.
\end{equation}
Note that in the case of glide only Eq.\ \eqref{Eq: dt beta} takes the well known form 
\begin{equation} \label{Eq: dt beta glide only}
  \partial_t \mathbfit\beta^\mathrm{p}  = \int_{S^{1}} \rho v \dS \n \otimes \b.
\end{equation}
With this stipulation and upon translating the equations into classical vector calculus we obviously recover the rate form of what Kr\"oner called the fundamental equation of dislocation theory, i.e.\ $ \partial_t \alphabf  = \curl \left( \partial_t \mathbfit\beta^\mathrm{p} \right) $.

\section{Multipole expansion of the SODT} \label{Sec: Multipole expansion}

In the last Section we introduced the higher dimensional theory as a natural extension of the classical dislocation density theory; the latter being recovered from the SODT and its evolution by spatial projections. The central motivation for the higher dimensional theory is the fact that from the classical dislocation density tensor $\alphabf $ one may only in very limited special cases recover the plastic distortion rate tensor $\partial_t \mathbfit\beta^\mathrm{p}$, as required for a closed theory. This is only possible if dislocations form smooth line bundles, which is a strong limitation in view of typical dislocation configurations. The higher dimensional theory allows a meaningful closure with much weaker restrictions on the represented dislocation distribution, namely if the \emph{lifted} dislocations form smooth line bundles on the configuration space, as would be the case if neighboring segments of the same orientation share the same curvature. This restriction may even be relieved if dislocations move by glide only and the magnitude of the dislocation velocity does not depend on the dislocation character \citep{hochrainer07}. Hence, the higher dimensional theory is an important generalization which allows predicting the evolution of non-trivial dislocation distributions \citep{sandfeld_etal10,sandfeld_etal11}. But even though the higher dimensional theory acts as an ensemble simulation possibly replacing a whole series of discrete dislocation simulations, the numerical effort is still far beyond what could be accepted for engineering applications of small scale structures.

Although it seems obvious in the back view, we only recently found that a multipole expansion of the SODT would pave the way for a systematic reduction of the higher dimensional theory into manageable objects and evolution equations \citep{hochrainer_zg09,sandfeld_etal10,hochrainer13b, hochrainer_etal14}. A multipole expansion using spherical harmonics in general and trigonometric functions (Fourier expansion) in the conservative case at first sight looks rather like a numerical approach for discretizing the equations, as done in (Fast) Fourier methods for solving differential equations. Though it is well known it is not obvious that the multipole expansions each time correspond to an expansion of a higher-dimensional distribution function into a series of symmetric traceless tensors of increasing order \citet{applequist89}. The interpretation as a tensor expansion is vital for a geometrical understanding of the expansion. Furthermore, the tensor interpretation removes the flavor of being bound to a given coordinate system because tensors are invariantly defined and follow well known transformation rules. Such tensor expansions are well known in other branches of materials science dealing with distributions of line like objects as, e.g., in the theory of liquid crystals, polymers or fiber reinforced composites. Within these fields the appearing tensors are called \emph{alignment tensors}, a term which we adopt for the dislocation case. For dislocations, however, there are two important extensions to the established tensor expansions in the above mentioned fields. For most liquid crystals, polymers and short fibers the orientation of the line like objects is irrelevant. This \emph{head-tail-symmetry} leads to density functions on the orientation space which are symmetric with respect to inversion, i.e., $ \rho \left(\l\right) = \rho \left(-\l\right) $. As a consequence, only alignment tensors of even order are non-trivial in these cases. This is not true for dislocations, such that also tensors of odd order will have to be considered. Moreover, the dislocation state on the configuration space is not fully characterized by the density function $ \rho $ alone, but by a 4-form $ \kappaII $ or equivalently the tangent vector field $\R$. As a consequence the SODT must be expanded using vector spherical harmonics rather than scalar spherical harmonics. This likewise leads to an expansion into each time three tensors of a given order in general and a series of each time two tensors of a given order in the case of conservative dislocation motion. One tensor series will be the well known expansion of the density $ \rho $ into alignment tensors while the other series of tensors expand the curvature vector $ \q $. The tensor expansion for vectors is seemingly not so common, but rather straight forward from the scalar tensor expansion and the definition of the vector spherical harmonics.

\subsection{Expansion with vector spherical harmonics and vector Fourier expansion} \label{Sec: Expansion}

Let $ Y_\km \left( \theta, \phi \right) $ denote the scalar \emph{real} (or tesseral) spherical harmonics. Then the orthonormalized vector spherical harmonics (on the unit sphere) in contravariant form are defined as \citep{regge_w57,barrera_eg85}
\begin{align}
	\Y_\km &= Y_\km \l = Y_\km l^i \bpartial_i, \\
	\Psibf_\km &= \frac{1}{\sqrt{k\left(k+1\right)} } \grad Y_\km = \frac{1}{\sqrt{k\left(k+1\right)} } g^{\mu \nu} \partial_\nu Y_\km \bpartial_{\mu} \text{ and} \\
	\Phibf_\km &= \frac{1}{\sqrt{k\left(k+1\right)} } \grad^{*} Y_\km = \frac{1}{\sqrt{k\left(k+1\right)} } \varepsilon^{\mu \nu} \partial_\nu Y_\km \bpartial_{\mu}.
	\end{align}
The coefficients of the contravariant metric tensor on the sphere are given through $g^{\theta \theta} = 1$, $ g^{\theta \phi} = g^{\phi \theta} = 0$ and $g^{\phi \phi}=1/\sin^2 \theta$. Furthermore, the contravariant components of the Levi-Civita tensor are $\varepsilon^{\theta \theta} = \varepsilon^{\phi \phi} = 0$ and $ \varepsilon^{\theta \phi} = -\varepsilon^{\phi \theta} = 1/\sin \theta $.
The vector field may then be expressed as
\begin{equation} \label{Eq: expansion R}
	\R =\sum_{k=0}^\infty\sum_{m=-k}^k\left(\rho_\km \mathbfit{Y}_\km+q_\km \Psibf_\km+q^*_\km \Phibf_\km\right),
\end{equation}
where the coefficients are obtained from
\begin{definition}[Vector spherical harmonics coefficients]\label{VSH coefficients}
\begin{align}
	\rho_\km &:= \int_{S^2} \R \cdot \Y_\km \dS = \int_{S^2} \rho Y_\km \dS\\
	q_\km &:= \int_{S^2} \R \cdot \Psibf_\km \dS = \frac{1}{\sqrt{k\left(k+1\right)} } \int_{S^2} q^\mu \partial_\mu Y_\km \dS\\
	q^*_\km &:= \int_{S^2} \R \cdot \Phibf_\km \dS = \frac{1}{\sqrt{k\left(k+1\right)} } \int_{S^2} q^\mu \tensor{\varepsilon}{_\mu^\nu} \partial_\nu Y_\km \dS.
\end{align}
\end{definition}
We note that $ \tensor{\varepsilon}{_\theta^\theta} = \tensor{\varepsilon}{_\phi^\phi} = 0 $, $ \tensor{\varepsilon}{_\theta^\phi} = 1/\sin \theta$ and $ \tensor{\varepsilon}{_\phi^\theta} = - \sin \theta$. From Equation (\ref{Eq: expansion R}) we immediately derive the expansion of the differential form $\kappaII$ as
\begin{align} \label{Eq: expansion kappa}
	\kappaII &= \sum_{k=0}^\infty\sum_{m=-k}^k\left(\rho_\km \i_{\mathbfit{Y}_\km} \dVS +q_\km \i_{\Psibf_\km} \dVS +q^*_\km \i_{\Phibf_\km} \dVS \right) \\
	&=: \sum_{k=0}^\infty\sum_{m=-k}^k\left(\mathbfit{\kappa}^\Y_\km + \mathbfit{\kappa}^\Psibf_\km + \mathbfit{\kappa} ^\Phibf_\km \right)
\end{align}
For the case of pure glide we note that instead of spherical harmonics we deal with the trigonometric functions as used in Fourier expansion. In order to keep a similar appearance we define the functions
\begin{equation} 
	Y_{\left(m\right)} = \left\{\begin{array}{rl} \cos \left( m \phi \right), & \mbox{if } m > 0 \\ 1, & \mbox{if } m = 0 \\ \sin\left( | m | \phi \right), & \mbox{if } m < 0. \end{array} \right. 
\end{equation} 
Then the vector basis reads\footnote{It seems worth noting that there appears a non-trivial $ \Psibf_{(0)} = \bpartial_{\phi} $ in the two-dimensional case which has no counterpart in the three-dimensional vector spherical harmonics, where $ \Psibf_{(00)} = 0 $ and $ \Phibf_{(00)} = 0 $. The reason is that a `constant' vector field $ \bpartial_{\phi} $ only exists on the unit circle while for topological reasons there are no such fields on the sphere.}
\begin{align}
	\Y_\mm &= Y_\mm \l = Y_\mm l^i \bpartial_i, \\
	\Psibf_\mm &=  \left\{\begin{array}{rl} \bpartial_{\phi}, & \mbox{if } m = 0 \\ \frac{1}{|m|} \partial_\phi Y_\mm \bpartial_{\phi}, & \mbox{if } m \neq 0. \end{array} \right. 
\end{align}
 The expansion of $\R$ then takes the form
\begin{equation} \label{Eq: expansion R cons}
	\R =\frac{\rho_{\left(0 \right)} \l}{2 \mathrm{\pi}} + \frac{q_{\left(0 \right)} \bpartial_\phi }{2 \mathrm{\pi}} + \sum_{\scriptsize\begin{array}{c} m =-\infty \\ m \neq 0 \end{array}}^\infty \frac{1}{\mathrm{\pi}}  \left(\rho_\mm \mathbfit{Y}_\mm + q_\mm \Psibf_\mm\right),
\end{equation}
with the coefficients derived from
\begin{definition}[Vector Fourier coefficients]\label{Vector Fourier coefficients} 
\begin{align}
	\rho_\mm &:= \int_{S^1} \R \cdot \Y_\mm \dex \phi = \int_{S^1} \rho Y_\mm \dex \phi \\
	q_\mm &:= \int_{S^1} \R \cdot \Psibf_\mm \dex \phi = \left\{\begin{array}{rl} \int_{S^1} q \, \dex \phi, & \mbox{if } m = 0 \\ \frac{1}{|m|} \int_{S^1} q \partial_\phi Y_\mm \dex \phi, & \mbox{if } m \neq 0. \end{array} \right. 
\end{align}
\end{definition}
Of course we find the according expansion of $\kappaII $
\begin{align} \label{Eq: expansion kappa cons}
	\kappaII &= \frac{\rho_{\left(0 \right)} \i_\l \dVS }{2 \mathrm{\pi}} + \frac{q_{\left(0 \right)} \i_{\bpartial_\phi} \dVS }{2 \mathrm{\pi}} + \sum_{\scriptsize\begin{array}{c} m =-\infty \\ m \neq 0 \end{array}}^\infty \frac{1}{\mathrm{\pi}}  \left(\rho_\mm \i_{\mathbfit{Y}_\mm} \dVS + q_\mm \i_{\Psibf_\mm} \dVS \right), \\
	&=: \frac{\mathbfit{\kappa}^\Y_{(0)} }{2 \mathrm{\pi}} + \frac{\mathbfit{\kappa}^\Psibf_{(0)} }{2 \mathrm{\pi}} + \sum_{\scriptsize\begin{array}{c} m =-\infty \\ m \neq 0 \end{array}}^\infty \frac{1}{\mathrm{\pi}} \left(\mathbfit{\kappa}^\Y_\mm + \mathbfit{\kappa}^\Psibf_\mm \right)
\end{align}
As introduced so far, the expansion in vector spherical harmonics seemingly does not yield a meaningful spatial expansion of the dislocation state. To arrive at the tensor expansion we make use of the equivalence of the expansion in spherical harmonics and likewise trigonometric functions with an expansion into traceless symmetric tensors of increasing order.
\subsection{Irreducible tensor expansion} \label{Sec: Irreducible tensor expansion}
The tensor expansion for the density function as known, e.g., from the theory of liquid crystals \citep{hess75,blenk_em91}, takes the form
\begin{align}
	\hat{\rhobf}_{(0)} &:= \int_{S} \rho (\l) \dS  \\
	\hat{\rhobf}_{(1)} &:= \int_{S} \rho (\l) \l \dS  \\
	\hat{\rhobf}_{(2)} &:= \int_{S} \rho (\l) \l \totimes \l   \\
	\hat{\rhobf}_{(n)} &:= \int_{S} \rho (\l) \underbrace{\l \totimes \cdots  \totimes \l}_{k\mathrm{-times}} \dS = \int_{S} \rho (\l) \l^{\totimes k} \dS\label{Eq: standard expansion, dislocation},
\end{align} 
where we denote with $\totimes$ a tensor product followed by a symmetrizing operation and an operation removing the trace from all contractions of two indices, such that the result is a fully symmetric traceless tensor. Furthermore, we introduced an exponential notation for the $n$-fold such product of a single vector with itself as seen in the right most expression of the last line. The density function for three-dimensional distributions may then be recovered from the following series expansion
\begin{equation} \label{Eq: tensor series}
	\rho(\l) =\frac{1}{4 \mathrm{\pi}} \left( \hat{\rho}_{(0)} + \sum_{k=1}^\infty  \frac{\left(2k+1\right)!!}{k!} \hat{\rho}_{i_1 \cdots  i_k} l^{i_1} \cdots l^{i_k} \right).
\end{equation}
The double factorial symbol is defined for odd numbers through $ \left(2k+1\right)!! := \left(2k+1\right)\left(2k-1\right)\cdots 3 \cdot 1 $. We note that the tensors in Eq.\ \eqref{Eq: standard expansion, dislocation} are defined contravariantly, that is with upper indices. The covariant tensor components in the series expression Eq.\ \eqref{Eq: tensor series} are obtained through lowering all indices with help of the standard metric $ g_{ij} = \delta_{ij} $. Such connection is always implied when we use a symbol with the same number of indices in different (upper or lower) positions.

The relation to spherical harmonics is hidden in the components of the symmetric traceless tensor $ \l^{\totimes k} $ \citep{applequist89}. The coefficients of the tensor $\l^{\totimes k} $ are either zero or linear combinations of the spherical harmonics $Y_\km$. In turn, every $Y_\km$ with $-k \leq m \leq k$ appears in $\l^{\totimes k} $. Because the general formulas for the coefficients are cumbersome we use the following short hand notation for the coefficients of the traceless tensors
\begin{equation}
	\l^{\totimes k} = \lhat^{i_1 \cdots i_k} \bpartial_{i_1} \otimes \cdots \otimes \bpartial_{i_k}.
\end{equation} 

The tensor expansion Eq.\ \eqref{Eq: standard expansion, dislocation} corresponds to the expansion of a scalar with spherical harmonics. From the definition of the vector spherical harmonics it is straight forward to define a `vectorial' tensor expansion using the base tensors
\begin{align}
	\hat{\mathbfit\Lambda}_{(k)} &:= \l \otimes \l^{\totimes k}  = l^i \lhat^{i_1 \cdots i_k} \bpartial_i \otimes \bpartial_{i_1} \otimes \cdots \otimes \bpartial_{i_k} \\
	\hat{\mathbfit\Gamma}_{(k)} &:= \frac{1}{\sqrt{k(k+1)}} \grad \l^{\totimes k} = \frac{1}{\sqrt{k(k+1)}} g^{\mu \nu} \partial_\nu \lhat^{i_1 \cdots i_k} \bpartial_\mu \otimes \bpartial_{i_1} \otimes \cdots \otimes \bpartial_{i_k} \\
	\hat{\mathbfit\Gamma}^*_{(k)} &:= \frac{1}{\sqrt{k(k+1)}} \grad^* \l^{\totimes k} = \frac{1}{\sqrt{k(k+1)}} \varepsilon^{\mu \nu} \partial_\nu \lhat^{i_1 \cdots i_k} \bpartial_\mu \otimes \bpartial_{i_1} \otimes \cdots \otimes \bpartial_{i_k}
\end{align} 
Note that the degree of the tensors with index $k$ each time is $k+1$, which may be confusing but conforms with the notation used in the case of vector spherical harmonics. With these tensors we obtain a series of tensors through
\begin{definition}[Irreducible tensor expansion]\label{Def: Irreducible tensor expansion}
\begin{align}
	\hat{\rhobf}_{(k)} &:= \int_{S^2} \R \cdot \hat{\mathbfit\Lambda}^{(k)} \dS \\
	\hat{\q}_{(k)} &:=  \int_{S^2} \R \cdot \hat{\mathbfit\Gamma}^{(k)} \dS  \\
	\hat{\q}^{*}_{(k)} &:= \int_{S^2} \R \cdot \hat{\mathbfit\Gamma}^*_{(k)}  \dS.
\end{align} 
\end{definition}
We remark that for the tensors on the right hand side, as opposed to the base tensors within the integrals, the index $k$ \emph{is} the degree of the tensor. For the higher dimensional vector field $ \R $ we now obtain the series representation
\begin{equation} 
	\R(\l) = \frac{1}{4\mathrm{\pi}} \left[ \rho_{0} \l + \sum_{k=1}^\infty  \frac{\left(2k+1\right)!!}{k!} \left( \hat\rho_{i_1 \cdots  i_k} 	\hat{\Lambda}^{i i_1 \cdots i_k} \bpartial_i + \hat{q}_{i_1 \cdots  i_k} \hat{\Gamma}^{\mu i_1 \cdots i_k } \bpartial_\mu  + \hat{q}^{*}_{i_1 \cdots  i_k} \hat{\Gamma}^{\mu *i_1 \cdots i_k } \bpartial_\mu \right) \right].
\end{equation}

In the case of pure glide the tensor expansion Eq.\ \eqref{Eq: standard expansion, dislocation} remains unchanged but the series representation takes the form \citep{zheng_z01}\footnote{In \cite{hochrainer13b} I mistakenly assumed Eq.\ \eqref{Eq: standard expansion, dislocation} to be valid for the 2D case by just changing  $ 4 \mathrm{\pi} $ to $ 2 \mathrm{\pi} $. This remained unrecognized because it had no consequences for the rest of that paper.}
\begin{equation} \label{Eq: tensor series pure glide}
	\rho(\l) =\frac{1}{2 \mathrm{\pi}} \left( \hat{\rho}_{(0)} + \sum_{k=1}^\infty  2^k \hat{\rho}_{i_1 \cdots  i_k} l^{i_1} \cdots l^{i_k} \right).
\end{equation}
For the quasi two-dimensional case we define the traceless base tensors as
\begin{align}
	\hat{\mathbfit\Lambda}_{(k)} &:= l^i \lhat^{i_1 \cdots i_k} \bpartial_i \otimes \bpartial_{i_1} \otimes \cdots \otimes \bpartial_{i_k} \\
  \hat{\mathbfit\Gamma}_{(0)} &:= \bpartial_\phi \\
	\hat{\mathbfit\Gamma}_{(k)} &:= \frac{1}{k} \partial_\phi \lhat^{i_1 \cdots i_k} \bpartial_\phi \otimes \bpartial_{i_1} \otimes \cdots \otimes \bpartial_{i_k}
\end{align} 
and obtain the expansion
\begin{equation} 
	\R(\l) = \frac{1}{2\mathrm{\pi}} \left[ \rho_{0} \l + q_{0} \bpartial_\phi +\sum_{k=1}^\infty  2^k \left( \hat\rho_{i_1 \cdots  i_k} 	\hat{\Lambda}^{i i_1 \cdots i_k} \bpartial_i + \hat{q}_{i_1 \cdots  i_k} \hat{\Gamma}^{\varphi i_1 \cdots i_k} \bpartial_\phi  \right) \right],
\end{equation}
with the coefficients given by
\begin{definition}[Irreducible tensor expansion in pure glide]\label{Irreducible tensor expansion (pure glide)}
\begin{align}
	\hat{\rhobf}_{(k)}  &:= \int_{S^1} \R \cdot \hat{\mathbfit\Lambda}^{(k)} \dex \phi \\
	\hat{\q}_{(k)}  &:=  \int_{S^1} \R \cdot \hat{\mathbfit\Gamma}^{(k)} \dex \phi. 
\end{align} 
\end{definition}
\subsection{Reducible tensor expansion} \label{Sec: Reducible tensor expansion}
Besides the irreducible expansion discussed in the last subsection we also introduce a reducible expansion. The reducible tensors are more easily interpreted geometrically than the irreducible ones, similar to the stress tensor being more easily visualized than the stress deviator.  For the reducible expansion we introduce the base tensors with non-vanishing traces
\begin{align}
	\mathbfit\Lambda_{(k)} &:= \l \otimes \l^{\otimes k} = l^i l^{i_1} \cdots l^{i_k} \bpartial_i \otimes \bpartial_{i_1} \otimes \cdots \otimes \bpartial_{i_k} \\
	\mathbfit\Gamma_{(k)} &:= \frac{1}{\sqrt{k(k+1)}} \grad \l^{\otimes k} = \frac{1}{\sqrt{k(k+1)}} g^{\mu \nu} \partial_\nu \left( l^{i_1} \cdots l^{i_k} \right) \bpartial_\mu \otimes \bpartial_{i_1} \otimes \cdots \otimes \bpartial_{i_k}  \\ &= \frac{k}{\sqrt{k(k+1)}} g^{\mu \nu} \bpartial_\mu \otimes \e_\nu \sotimes \l^{\otimes (k-1)}   \\
	\mathbfit\Gamma^*_{(k)} &:= \frac{1}{\sqrt{k(k+1)}} \grad^* \l^{\otimes k} = \frac{1}{\sqrt{k(k+1)}} \varepsilon^{\mu \nu} \partial_\nu \left( l^{i_1} \cdots l^{i_k} \right) \bpartial_\mu \otimes \bpartial_{i_1} \otimes \cdots \otimes \bpartial_{i_k} \\ &= \frac{k}{\sqrt{k(k+1)}} \varepsilon^{\mu \nu} \bpartial_\mu \otimes \e_\nu \sotimes \l^{\otimes (k-1)},
\end{align} 
where $\sotimes$ denotes the tensor product followed by a complete symmetrization of the resulting tensor (without removing the trace). Accordingly, we define reducible series of spatial tensors in
\begin{definition}[Reducible tensor expansion]\label{Def: Reducible tensor expansion}
\begin{align}
	\rhobf_{(k)} := \int_{S^2} \R \cdot \mathbfit\Lambda_{(k)} \dS &= \int_{S^2} \rho \l^{\otimes k} \dS \\
	\q_{(k)} :=  \int_{S^2} \R \cdot \mathbfit\Gamma_{(k)} \dS  &= \frac{k}{\sqrt{k(k+1)}}  \int_{S^2} q^\nu \e_\nu \sotimes \l^{\otimes (k-1)}  \\
	\q^{*}_{(k)} := \int_{S^2} \R \cdot \mathbfit\Gamma^*_{(k)} \dS &=\frac{k}{\sqrt{k(k+1)}} \int_{S^2} q^\mu \tensor{\varepsilon}{_\mu^\nu} \e_\nu \sotimes \l^{\otimes (k-1)}.
\end{align} 
\end{definition}
These reducible tensors `contain' each other, such that the tensors of different order are not independent of each other. The relation between the tensors is found by the trace operation as
\begin{align}
	\tr \rhobf_{(k)}&= \rhobf_{(k-2)} \\
	\tr \q_{(k)}  &= \sqrt{\frac{(k-2)(k-1)}{k(k+1)}}  \q_{(k-2)} \\
	\tr \q^*_{(k)}  &= \sqrt{\frac{(k-2)(k-1)}{k(k+1)}}  \q^*_{(k-2)},
\end{align} 
which is true for $ k \geq 2 $. The irreducible series maybe obtained from the reducible one by a detracer operation introduced in general by \citet{applequist89}. 

The reducible basis tensors for the case of pure glide are given as
\begin{align}
	\mathbfit\Lambda_{(k)} &:= \l \otimes \l^{\otimes k} \\
  \mathbfit\Gamma_{(0)} &:= \bpartial_\phi \\
	\mathbfit\Gamma_{(k)} &:= \frac{1}{k} \partial_\phi \left( l^{i_1} \cdots l^{i_k} \right) \bpartial_\phi \otimes \bpartial_{i_1} \otimes \cdots \otimes \bpartial_{i_k} = \bpartial_\phi \otimes \e_\phi \sotimes \l^{\otimes (k-1)}.
\end{align} 
The coefficients are provided in 
\begin{definition}[Reducible tensor expansion in pure glide]\label{Def: Reducible tensor expansion (pure glide)}
\begin{align}
	\rhobf_{(k)}  &= \int_{S^1} \R \cdot \mathbfit\Lambda^{(k)} \dex \phi \\
	\q_{(k)}  &=  \int_{S^1} \R \cdot \mathbfit\Gamma^{(k)} \dex \phi.   
\end{align}
\end{definition}
The relation between the tensors and their traces reads in this case
\begin{align}
	\tr \rhobf_{(k)}&:= \rhobf_{(k-2)} \\
	\tr \q_{(k)}  &:= \frac{(k-2)}{k}  \q_{(k-2)}
\end{align} 
which is again valid for $ k \geq 2 $.
\subsection{Reducible expansion in tensor valued differential forms} \label{Reducible expansion in vvdf}
Finally we define another series of tensors, which are essentially equivalent to the reducible tenors introduced for the density $ \rho $ above. These tensor valued spatial two-forms are provided in
\begin{definition}[Reducible differential form expansion]\label{Reducible differential form expansion}
\begin{equation}
	\mathbfit{K}_{(k)} := \int_{S^{2\left(1\right)}} \kappaII \otimes \l^{\otimes k}. 
\end{equation} 
\end{definition}
We note that in this definition there is no formal difference between the general case and the case of pure glide, which are only distinguished by the domain of integration being $ S^2 $ or $ S^1 $. As is easily checked, the tensors are related to the reducible tensors $\rhobf_{(k)} $ through
\begin{equation}
	\tensor{K}{_{ij} ^{i_1 \cdots i_k}} = \varepsilon_{ijl} \rho^{l i_1 \cdots i_k} \text{  and  }  \rho^{i_1 \cdots i_k} = \frac{1}{2} \varepsilon^{i_1 ij} \tensor{K}{_{ij} ^{i_2 \cdots i_k}}.
\end{equation} 
Although it may seem idle to introduce this extra series it will turn out to be beneficial in closing the hierarchy of evolution equations derived in Section \ref{Sec: Evolution alignment}, as discussed in Section \ref{Sec: Preliminaries}. Furthermore, these tensors come with a different mathematical flavor. As differential two-forms these tensors may be meaningfully integrated over surfaces, while the tensors $\rhobf_{(k)} $ are more naturally integrable over volumes. We will also find seemingly different evolution laws for these series which reflect the natural conservation law for two- or three-forms. It should be noted that there is no scalar element in the series $\mathbfit{K}_{(0)}$ and the zeroth element corresponds to $ \rhobf_{(1)}$. However, the zeroth order term $ \rho_{(0)} $ will still be obtainable by a trace operation. In fact we find a direct relation between the two series by performing a specific trace operation on the tensors $\mathbfit{K}_{(k)} $. This trace operation is defined for tensor valued two-forms such that it turns a tensor valued two-form with $k$ contra-variant indices into a tensor-valued 3-form (volume form) with $k-1$ contra-variant indices through
\begin{equation}
	\tr \mathbfit{K}_{(k)} = \tr \left( \tensor{K}{_{ij} ^{m_1 \cdots m_k} } \dx{i} \wedge \dx{j} \otimes \bpartial_{m_1} \otimes \cdots \otimes \bpartial_{m_k} \right) = g_{n m_1} \tensor{K}{_{ij} ^{m_1 m_2 \cdots m_k} } \dx{i} \wedge \dx{j} \wedge \dx{n} \otimes \bpartial_{m_2} \otimes \cdots \otimes \bpartial_{m_k}.
\end{equation} 
On a fully decomposable vector valued two-form $ \mathbfit{\omega} = \omega \otimes \u_{(1)} \otimes \cdots \otimes \u_{(k)} $ the trace yields
\begin{equation}
	\tr \mathbfit{\omega} = \omega \wedge \u^\flat_{(1)} \otimes \u_{(2)} \otimes \cdots \otimes \u_{(k)},
\end{equation} 
where the symbol $\left(\cdot\right)^\flat$ denotes the operation of lowering indices, i.e., for turning a vector into a one-form,
\begin{equation}
	\u^\flat = \left(u^i \bpartial_i \right)^\flat = g_{ij} u^j \dx{i}.
\end{equation}
This directly yields that the traces of the tensors $ \mathbfit{K}_{(k)} $ are up to the volume form part $\dV $ the tensors $\rhobf_{(k)} $
\begin{equation}
	\tr \mathbfit{K}_{(k)} = \int_{S^{2\left(1\right)}} \kappaII \wedge \l^\flat \otimes \l^{\otimes k-1} \dS = \left( \int_{S^{2\left(1\right)}} \rho \l^{\otimes k-1} \dS \right) \dV = \rhobf_{(k-1)} \dV.
\end{equation} 
Because of this relation we interpret $\rhobf_{(k)} $ as being differential three-forms `by nature'. Most obviously this makes sense in the light of the scalar $ \rho_0 $, which gives the line length of dislocations per unit volume; a quantity which clearly may be integrated over volumes. On the other hand the zeroth element of the two-form series is, $ \mathbfit{K}_{(0)} = \kappabf $, i.e., the classical dislocation density two-form which may be meaningfully integrated over surfaces.
\subsection{Interpretation of the alignment tensors} \label{Sec: Interpretation}
As just seen the zeroth terms in the two reducible tensor expansions are well known dislocation density measures. The higher order tensors are geometrically interpreted as follows: the tensors $ \mathbfit{K}_{(1)} $ (or likewise $ \rhobf_{(2)} $) assigns to a surface element spanned by two vectors $ \u $ and $ \mathbfit{w} $ (or defined by its normal $ \n $) the average non-oriented (!) tangent direction of the dislocations intersecting the surface. Diagonal components of $ \rhobf_{(2)} $ assign the density of dislocations parallel to the normal while off-diagonal elements account for dislocations parallel to the surface. As a symmetric tensor $ \rhobf_{(2)} $ may be visualized by an ellipsoid, the average radius of which defines the total dislocation density, while the ellipticity, i.e., the deviation of the dislocation direction distribution from being isotropic, is defined by its traceless part (or deviator) $ \hat\rhobf_{(2)} $. The next higher order tensors $ \mathbfit{K}_{(2)} $ and $ \rhobf_{(3)} $ assign to a surface element the average tensors projecting vectors on their component in the direction of the dislocations. In analogy to the ellipsoid of the second order tensor being shaped by three diametrically opposed symmetric extrema of each time the same hight, the traceless third order tensor introduces on a deformed sphere diametrically opposed antisymmetric extrema where local maxima are opposed to local minima \citep{hochrainer13b}. This is also how the tensors generalize to higher orders: the traceless tensors of even rank characterize symmetric extrema (directions of high or low total dislocation density) while those of odd rank deal with an increasing number of antisymmetric extrema, that is, with directions of high net dislocation density.

In the case of glide only, some components of the second order tensor $ \rhobf_{(2)} $ already appeared in theories distinguishing densities of screw and edge dislocation, $\rho_\mathrm{s}$ and $ \rho_\mathrm{e} $, respectively. Within the coordinate system introduced for the pure glide case, such that the glide direction $\bparx = \m $ and $ \bpary = \n \times \m $ these densities correspond to the diagonal elements, i.e.,
\begin{align}
	\rho_\mathrm{s} &= \rho_{11} \\
	\rho_\mathrm{e} &= \rho_{22}.
\end{align} 
As to the authors knowledge, the off-diagonal element $ \rho_{12} $ was not yet introduced to the literature. This component is obviously needed to obtain the density of dislocations with an arbitrary line direction within the glide plane. Moreover, the off-diagonal element is necessary to be able to represent arbitrary orientations of the ellipse connected to the quasi-two-dimensional tensor as without it the eigendirections are bound to be aligned with the edge and screw directions.
\subsubsection{The curvature tensors} \label{Sec: Curvature tensors}
The meaning of the tensors $ \mathbfit{K}_{(k)} $ or $ \rhobf_{(k)} $ is the standard interpretation of alignment tensors as known from other disciplines dealing with distributions of line like objects -- even though odd tensors are barely considered there. However, the tensors characterizing the curvature, $\q_{(k)}$ and $\q^*_{(k)}$, appear to be uncommon. Their interpretation is not that well understood so far, at least not in the three-dimensional case. Before we try to make sense out of the tensors, we first report a striking observation, namely that the solenoidality of the SODT, i.e., the fact that dislocation do not end inside a crystal, is reflected in the remarkable relation
\begin{align}
	\dex \mathbfit{K}_{(k)} &= \int_{S^2} \dex \left( \kappaII \otimes \l^{\otimes k} \right)  \nonumber\\
	&= \int_{S^2} \left( \dex \kappaII \otimes \l^{\otimes k} + \kappaII \wedge \nabla \l^{\otimes k} \right) \nonumber \\
	&= \left( \int_{S^2}  \R \cdot \nabla \l^{\otimes k} \dS \right) \dV \nonumber \\
	&= \sqrt{k(k+1)} \q_{(k)}  \dV.
\end{align} 
This directly translates to
\begin{equation}
	\divergence  \rhobf_{(k)} = \sqrt{(k-1)k}  \q_{(k-1)},
\end{equation} 
which may alternatively be derived from Eq.\ \eqref{Eq: d kappaII coord} using the product rule of the divergence operator and Gauss integration theorem. In the two-dimensional case we find
\begin{equation}
	\dex \mathbfit{K}_{(k)}  = k \q_{(k)}  \dV \text{ and  } \divergence  \rhobf_{(k)} = (k-1) \q_{(k-1)},
\end{equation} 
respectively. The prefactors each time stem from the normalizing factors introduced in the definition of the base vectors.

That the curvature tensors appear as the divergence of the dislocation density tensors makes sense in light of the curvature being the change of the line direction along the dislocation line. The divergence operation contracts the derivative with the directional information contained in the density tensor. 

The vector $ \q_{(1)} $ yields the average (or net) curvature vector similar to the net dislocation density vector $ \kappavec $, as 
\begin{equation}
	\q_{(1)} =  \frac{1}{\sqrt{2}}  \int_{S^2} q^\nu \e_\nu,
\end{equation} 
where we note that $ q^\nu \e_\nu $ is the spatial representation of the curvature vector. The second order curvature tensor is a symmetric traceless tensor
\begin{equation}
	\q_{(2)} =  \frac{1}{\sqrt{6}}  \int_{S^2} q^\nu \e_\nu \sotimes \l,
\end{equation} 
which assigns to a volume the average projection tensor composed of line direction and curvature vector. The higher order curvature tensors $\q_{(k)}$ contain information on higher frequency fluctuations of the curvature vector $ \q $ on the sphere. Seemingly the series $\q_{(k)}$ contains all information on the curvature state. Moreover the series is obtained from the divergence of the tensors $ \rhobf_{(k)} $ which renders the information contained in the series $\q_{(k)}$ redundant. So the question arises, whether the vectorial nature of $ \kappaII $, or equivalently $ \R $, needs a separate representation in terms of curvature tensors, as seemingly all information is already contained in the alignment tensors $ \rhobf_{(k)} $. This question may be restated as to what additional information is contained in the series $\q^*_{(k)}$ in the general, and in the coefficient $ q_{(0)} $ in the pure glide case. We first take a look at the two-dimensional case of pure glide. In that case the zeroth coefficient $ q_{(0)} $ is not derivable as the divergence of the first order tensor $ \rhobf_{(1)} = \kappavec $, as the latter is solenoidal. The zeroth order term is given by $ q_{(0)} = \int_{S^1} q \dex \phi $ and represents the constant part of the curvature distribution. In the general case it takes the whole series $\q^*_{(k)}$ to be able to reconstruct $ \q $ on the configuration space. To understand this series we take an exemplary look at the first term of it, which is defined as
\begin{equation}
	\q^*_{(1)} =  \frac{1}{\sqrt{2}}  \int_{S^2} q^\mu \tensor{\varepsilon}{_\mu^\nu} \e_\nu.
\end{equation} 
Notably, $ q^\mu \tensor{\varepsilon}{_\mu^\nu} \e_\nu $ is the curvature vector $ q^\nu \e_\nu $ tilted by $90^{\circ}$ within the tangent plane to the sphere. This may be rationalized by realizing that $  q^\mu \tensor{\varepsilon}{_\mu^\nu} \e_\nu = \l \times q^\nu \e_\nu $, where $ \times $ denotes the cross product. Due to this rotation $\q^*_{(1)}$ contains comparable information about the constant part of the curvature as found in $ q_{(0)} $ in the pure glide case. The higher order tensors $\q^*_{(k)}$ contain similar information as the $ \q_{(k)} $, only that the averages are obtained for the curvature vector tilted by $90^{\circ}$.

We summarize that we obtained three series of traceless or non-traceless tensors which together contain the same information on the local dislocation state as the second order dislocation density tensor, which is defined on a  higher dimensional configuration state. In light of this expansion, the two well known dislocation density measures,the scalar total dislocation density $\rho_0 $ and the dislocation density vector $\kappavec$ are only the first two elements of the series of alignment tensors. The second order tensor $ \rhobf_{(2)} $ contains information also found in screw--edge approaches to dislocation densities on a slip system level. The series of divergence tensors of the alignment tensors turn out to be one part of the series expansion of the dislocation curvature $ \q $ in the SODT. This series is seconded by another series of curvature tensors $ \q^*_{(k)}$ and only both series together allow the reconstruction of the SODT. In the two-dimensional case of glide only the second series reduces to a single scalar quantity, $ q_0$. The curvature tensors seem to be of limited value for a static description of dislocations in terms of tensors of increasing order. But, as we will see in the next Section, the curvature information is essential for the evolution of the alignment tensors.

\section{Expansion of the evolution equations} \label{Sec: Evolution alignment}

In the current section we provide the evolution equations for the coefficients of the second order dislocation density tensor in the expansions in terms of vector spherical harmonics, Fourier coefficients, and in terms of the various tensor coefficients introduced in the last Section. The derivation is provided in detail for the evolution of the coefficients in the expansion into vector spherical harmonics. The evolution equations for the Fourier coefficients and the tensor coefficients do then mostly follow by analogy. In each case we obtain infinite hierarchies of evolution equations. The evolution equations of the reducible alignment tensor will be used in Section \ref{Sec: Closure approximations} to entertain geometrically motivated closure assumptions needed to terminate the infinite hierarchy at low order. The basis for the derivations is evolution equation Eq.\ \eqref{Eq: dt kappaII}.

\subsection{Evolution in terms of vector spherical harmonics and Fourier coefficients} \label{Sec: Evolution VSH Fourier}

The evolution equation Eq.\ \eqref{Eq: dt kappaII} of the SODT can be used to derive evolution equations for the coefficients in the expansion via vector spherical harmonics which we summarize in
\begin{proposition}\label{Evolution VSH} The evolution of the coefficients in the expansion of the SODT with vector spherical harmonics introduced in \emph{Definition \ref{VSH coefficients}} is given by
\begin{align} \label{Eq: evolution coefficients VSH}
	\partial_t \rho_\km  &=  - \partial_i \int_{S^2} Y_\km \rho v^i \dS + \int_{S^2} \left[ \rho \vartheta^\mu \partial_\mu Y_\km -  g_{\mu \nu} v^\mu q^\nu Y_\km \right] \dS \\
	\partial_t q_\km &= - \frac{1}{\sqrt{k \left(k+1 \right)}} \partial_i \int_{S^2} \left[ q^\mu \partial_\mu Y_\km v^i - \vartheta^\mu \partial_\mu Y_\km \rho l^i \right] \dS \\
	\partial_t q^*_\km &= - \frac{1}{\sqrt{k \left(k+1 \right)}} \partial_i \int_{S^2} \left[ q^\mu \tensor{\varepsilon}{_\mu^\nu} \partial_\nu Y_\km v^i - \vartheta^\mu \tensor{\varepsilon}{_\mu^\nu} \partial_\nu Y_\km \rho l^i \right] \dS - \nonumber \\ &{} \qquad \qquad \qquad \qquad \qquad \qquad \qquad \qquad \qquad \qquad \qquad \sqrt{k\left(k+1\right)} \int_{S^2} Y_\km \varepsilon_{\mu \nu} \vartheta^\mu q^\nu \dS.  
\end{align}
\end{proposition}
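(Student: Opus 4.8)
The plan is to obtain all three evolution equations from the single higher dimensional law Eq.~\eqref{Eq: dt kappaII}, $\partial_t \kappaII = -\dex \J$ with $\J = \i_\V \kappaII$, together with one algebraic identity. The identity is that for any (time independent) vector field $\mathbfit{W}$ on the configuration space, $\kappaII \wedge \mathbfit{W}^\flat = \left(\R\cdot\mathbfit{W}\right)\dVS$; this follows at once from $\kappaII = \i_\R\dVS$, from $\mathbfit{B}^\flat\wedge\dVS = 0$ (a $6$-fold wedge exceeding the dimension), and from $\i_\A\mathbfit{B}^\flat = \A\cdot\mathbfit{B}$. Choosing $\mathbfit{W}$ to be $\Y_\km$, $\Psibf_\km$ and $\Phibf_\km$ in turn and integrating over the directional fibre converts the defining integrals of Definition~\ref{VSH coefficients} into basal projections, e.g.\ $\rho_\km\dV = \int_{S^2} Y_\km\,\kappaII\wedge\l^\flat$ and analogously for $q_\km$, $q^*_\km$ with $\Psibf_\km^\flat$, $\Phibf_\km^\flat$ in place of $Y_\km\l^\flat$.

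Since the basis fields are time independent, I would differentiate these relations and insert Eq.~\eqref{Eq: dt kappaII}, obtaining for a generic coefficient with basis field $\mathbfit{W}$ the expression $\left(\partial_t\,\text{coeff}\right)\dV = -\int_{S^2}\left(\dex\J\right)\wedge\mathbfit{W}^\flat$. The key step is the graded Leibniz rule, which for the $3$-form $\J$ reads $\left(\dex\J\right)\wedge\mathbfit{W}^\flat = \dex\left(\J\wedge\mathbfit{W}^\flat\right) + \J\wedge\dex\mathbfit{W}^\flat$, combined with the fact -- established in Section~\ref{Sec: Diff forms} -- that the basal projection commutes with $\dex$. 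This yields
\begin{equation}
  \left(\partial_t\,\text{coeff}\right)\dV = -\dex\int_{S^2}\J\wedge\mathbfit{W}^\flat - \int_{S^2}\J\wedge\dex\mathbfit{W}^\flat,
\end{equation}
splitting every evolution equation into a flux (divergence) part and a source part, the leading $\dex$ producing precisely the spatial divergences $\partial_i$ of the Proposition.

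The three cases are then distinguished solely by $\mathbfit{W}^\flat$ and $\dex\mathbfit{W}^\flat$. For the $\rho$-series, $\mathbfit{W}^\flat = Y_\km\l^\flat$, and using $\V\cdot\l = 0$ one finds $\J\wedge\l^\flat = \rho\,\i_\V\dVS$, so the flux integral reduces to $\big(\int_{S^2}Y_\km\rho v^i\dS\big)\,\i_{\bpartial_i}\dV$; expanding $\dex\mathbfit{W}^\flat = \dex Y_\km\wedge\l^\flat + Y_\km\dex\l^\flat$ isolates the rotational term $\rho\vartheta^\mu\partial_\mu Y_\km$ (from the first piece) and the line-length source $-g_{\mu\nu}v^\mu q^\nu Y_\km$ (from the second, using $\dex\l^\flat = \dex\mu\wedge\e_\mu^\flat$ together with $\l\cdot\e_\mu = 0$, which kills the competing $\rho l^i\vartheta^\mu$ cross term). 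For the $q$-series one observes that $\Psibf_\km^\flat = \tfrac{1}{\sqrt{k(k+1)}}\dex Y_\km$, whence $\dex\Psibf_\km^\flat = 0$ and the source term vanishes identically, leaving only the divergence of $\int_{S^2}\J\wedge\dex Y_\km$, whose two surviving horizontal--vertical cross terms are exactly $q^\mu\partial_\mu Y_\km v^i$ and $-\vartheta^\mu\partial_\mu Y_\km\rho l^i$. For the $q^*$-series the extra summand is the crux: since $\Phibf_\km^\flat = \tfrac{1}{\sqrt{k(k+1)}}\left(\grad^*Y_\km\right)^\flat$ is the surface co-gradient one-form, the Laplace--Beltrami relation $\Delta_{S^2}Y_\km = -k(k+1)Y_\km$ gives $\dex\Phibf_\km^\flat = \sqrt{k(k+1)}\,Y_\km\dS$; wedging $\J$ with $\dS$ retains only the purely spatial ($\dV$) part of $\i_\V\i_\R\dVS$, namely the one proportional to $\varepsilon_{\mu\nu}\vartheta^\mu q^\nu\dVS$, which reproduces the final term $-\sqrt{k(k+1)}\int_{S^2}Y_\km\varepsilon_{\mu\nu}\vartheta^\mu q^\nu\dS$ (the sign being fixed by the orientation conventions).

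The main obstacle is the explicit coordinate bookkeeping of the $3$-form $\J = \i_\V\i_\R\dVS$ on the five-dimensional space: one must carry the $\sin\theta$ from $\dVS$, the sphere metric $g_{\mu\nu}$ and the Levi-Civita coefficients, and correctly identify, in each wedge product $\J\wedge\dex\mathbfit{W}^\flat$, which horizontal--vertical cross term ($\rho l^i\vartheta^\mu$ versus $q^\mu v^i$) and which purely vertical term survive the fibre integration -- the interior-product signs are the only place where genuine care is required. The case of glide only follows verbatim on $M\times S^1$ using Eq.~\eqref{Eq: R cons}, with the simplification that $\Phibf$ is absent and the single scalar $q_{(0)}$ takes over the role played by the $q^*_\km$ in three dimensions.
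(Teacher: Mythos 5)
Your proposal is essentially the paper's own proof: the same conversion of the coefficients into basal projections via $\H^\flat\wedge\kappaII = (\H\cdot\R)\,\dVS$, the same graded Leibniz split of $-\int(\dex\i_\V\kappaII)\wedge\mathbfit{W}^\flat$ into a flux and a source part using the commutation of basal projection with $\dex$, and the same case distinction via $\dex\mathbfit{W}^\flat$ (including $\dex\Psibf_\km^\flat=0$ and the Laplace--Beltrami argument for $\Phibf_\km^\flat$). The one point to pin down rather than attribute to ``orientation conventions'' is the sign of $\dex\Phibf_\km^\flat$: you state $+\sqrt{k(k+1)}\,Y_\km\dS$ while the paper's proof uses $-\sqrt{k(k+1)}\,Y_\km\dS$, and since this factor multiplies the extra source term $\int_{S^2}Y_\km\varepsilon_{\mu\nu}\vartheta^\mu q^\nu\dS$ directly, your stated intermediate sign would flip that term relative to the Proposition, so the two signs cannot both be left as written.
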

\begin{proof}
We start by introducing a covariant version of the vector spherical harmonics as differential one forms, i.e.,
\begin{align}
	\Y_\km^\flat &= Y_\km \l^\flat = Y_\km \delta_{ij} l^j \dex x^i \\
	\Psibf_\km^\flat &= \frac{1}{\sqrt{k\left(k+1\right)} } \dex Y_\km =\frac{1}{\sqrt{k\left(k+1\right)} } \partial_\mu Y_\km \dex \mu \\
	\Phibf_\km^\flat &= \frac{1}{\sqrt{k\left(k+1\right)} } \dex^* \left(Y_\km \dS \right) = \frac{1}{\sqrt{k\left(k+1\right)} } \tensor{\varepsilon}{_\mu^\nu} \partial_\nu Y_\km \dex \mu,
\end{align}
where $ \left( \cdot \right) ^\flat $ denotes the operation of lowering indices and $ \dex^* $ is the so-called interior derivative, that is, the $L^2$ adjoint operator to the exterior derivative $ \dex$. While the latter's definition in general requires using the Hodge-star operator, the reader not familiar with this terminology may take the above coordinate expression as the definition. With this definition the integrals for obtaining the coefficients may be reformulated in differential form formalism. The integrals each time turn into integrals of a five-form over the two-dimensional directional space leaving a volume form on the base space as result: 
\begin{align}
	\rho_\km dV &= \int_{S^2}  \Y_\km^\flat \wedge \kappaII \\
	q_\km dV &= \int_{S^2} \Psibf_\km^\flat  \wedge \kappaII \\
	q^*_\km dV &= \int_{S^2} \Phibf_\km^\flat \wedge \kappaII
\end{align}
The relations to those of Definition \ref{VSH coefficients} follow from the product rule of the interior multiplication $ \i $ which yields for a 1-form $ \H^\flat = H_i \dex x^i + H_\mu \dex \mu $ obtained from a vector field $ \H = H^i \bpartial^i + H^\mu \bpartial_\mu $ on the configuration space 
\begin{equation} 
  \H^\flat \wedge \kappaII = \H \cdot \R \dVS.
\end{equation}

In preparation for the derivation of the evolution equations of the coefficients we collect a few formulas which are obtained from the product rules of interior multiplication $ \i $ and exterior derivative $ \dex $ for a 1-form $ \H^\flat $ as above. The time derivative of the wedge product of $ \H $ and $ \kappaII $ is easily found as  
\begin{equation} \label{Eq: kappa product evol general}
	\partial_t \int_{S^2} \H^\flat  \wedge \kappaII = \dex \int_{S^2} \left( \H^\flat  \wedge \i_\V \kappaII \right) - \int_{S^2} \dex \H^\flat  \wedge \i_\V \kappaII.
\end{equation}
We furthermore take advantage of the following two identities
\begin{align} 
	\H^\flat  \wedge \i_\V \kappaII &= - \i_\V \left( \H^\flat  \wedge \kappaII \right) + \i_\V \H^\flat \wedge \kappaII \text{   and} \\
	\dex \H^\flat \wedge \i_\V \kappaII &= - \left( \i_\R \i_\V \dex \H^\flat \right) \dVS. 
\end{align}
For the exterior derivatives of the vector spherical harmonics base forms we find
\begin{align}
	\dex \Y_\km^\flat &= \dex Y_\km \wedge \l^\flat + Y_\km \dex \l^\flat = \sqrt{k\left(k+1\right)} \Psibf_\km^\flat \wedge \l^\flat + Y_\km \dex \l^\flat, \\
	\dex \Psibf_\km^\flat &= \frac{1}{\sqrt{k\left(k+1\right)} } \dex \dex Y_\km = 0, \\
	\dex \Phibf_\km^\flat &= \frac{1}{\sqrt{k\left(k+1\right)} } \dex \dex^* \left(Y_\km \dS \right) = - \sqrt{k\left(k+1\right)}  Y_\km \dS.
\end{align}
The last identity follows from the fact that $ Y_\km $ are eigenfunction of the Laplace-Beltrami-operator $ \dex^* \dex$ to the eigenvalue $ - k\left(k+1\right)$, which makes $ Y_\km \dS $ an eigenform to the same eigenvalue of the Hodge-de Rham-Laplace-operator $ \dex \dex^* $ on differential two-forms. Because of the product structure of the sphere bundle $ M \otimes S^2 $ this relation remains true on the configuration space.

We make a few observations concerning the terms appearing in the following calculations. For their derivation we employ (additional to the product rules used before) that for two vector fields $ \mathbfit{G} $ and $ \H $ on the configuration space we have $ \i_\mathbfit{G} \H^\flat = \mathbfit{G} \cdot \H $. We obtain the following identities
\begin{align}
	\i_\V \Y_\km^\flat &= 0 \\
	\i_\V \left(  \dex Y_\km \wedge \l^\flat \right) &= \vartheta^\mu \partial_\mu Y_\km \l^\flat  \\
	\i_\V Y_\km \dex \l^\flat &= Y_\km \left( - v^i \partial_\mu l_i \dex \mu + \vartheta^\mu \partial_\mu l_i \dex x^i \right).
\end{align}
In order to understand the geometric meaning of the last line, we calculate
\begin{equation}
	\partial_\mu l_i =\partial_\mu g_{ij} l^j = g_{ij} \partial_\mu l^j = g_{ij} \dex x^j (\e_\mu) = \g\left( \bpartial_i , \e_\mu \right).
\end{equation}
Accordingly we find
\begin{align}
	- v^i \partial_\mu l_i \dex \mu &=  - \g \left( \v, \e_\mu \right) \dex \mu, \\
	\vartheta^\mu \partial_\mu l_i \dex x^i &= \g\left( \bpartial_i , \thetabf_\mathrm{h} \right) \dex x^i,
\end{align}
such that
\begin{equation}
	\i_\V \dex \l^\flat = - \g \left( \v, \e_\mu \right) \dex \mu + \g\left( \bpartial_i , \thetabf_\mathrm{h} \right) \dex x^i.
\end{equation}
As a last preparatory calculation we provide the direct consequence
\begin{align}
	\i_\R \i_\V \dex \l^\flat &= - \g \left( \v, q^\mu \e_\mu \right) + \g\left( \rho l^i \bpartial_i , \thetabf_\mathrm{h} \right) = - \g \left( \v, \q_\mathrm{h} \right) \nonumber \\ &= -  g_{\mu \nu} v^\mu q^\nu,
\end{align}
where we used that $ \thetabf_\mathrm{h} $ is orthogonal to $ \l $. The result is the scalar product of the velocity vector and the curvature vector, which characterizes the rate of line-length increase (or shrinkage) due to bowing out of curved dislocations.

With these preparations we find the evolution equation for the normal coefficients
\begin{align}
	\partial_t \rho_\km dV &= \dex \int_{S^2} \Y_\km^\flat \wedge \i_\V \kappaII  - \int_{S^2} \dex \Y_\km^\flat \wedge \i_\V \kappaII \nonumber  \\
	&= \dex \int_{S^2} - \i_\V \left( \Y_\km^\flat \wedge \kappaII \right) + \int_{S^2} \left[ \i_\R \i_\V  \left( \dex Y_\km \wedge \l^\flat \right)  + \i_\R \i_\V Y_\km \dex \l^\flat \right] \dVS \nonumber \\
	&= - \dex \int_{S^2} Y_\km \rho \i_\V \dVS + \int_{S^2} \left[ \rho \vartheta^\mu \partial_\mu Y_\km -  g_{\mu \nu} v^\mu q^\nu Y_\km \right] \dVS
\end{align}
Similarly we find for the first tangential coefficient-forms
\begin{align}
	\partial_t q_\km dV &= \dex \int_{S^2} \Psibf_\km^\flat \wedge \i_\V \kappaII \nonumber \\
	&= \dex \int_{S^2} \left[ - \i_\V \left( \Psibf_\km^\flat \wedge \kappaII \right) + \i_\V \Psibf_\km^\flat \wedge \kappaII  \right] \nonumber \\
	&= - \frac{1}{\sqrt{k \left(k+1 \right)}} \dex \int_{S^2} \left[ q^\mu \partial_\mu Y_\km \i_\V \dVS - \vartheta^\mu \partial_\mu Y_\km \i_\R \dVS \right],
\end{align}
and the second tangential coefficient-forms
\begin{align}
	\partial_t q^*_\km dV &= \dex \int_{S^2} \Phibf_\km^\flat \wedge \i_\V \kappaII - \int_{S^2} \dex \Phibf_\km^\flat \wedge \i_\V \kappaII  \nonumber \\
	&= \dex \int_{S^2} \left[ - \i_\V \left( \Phibf_\km^\flat \wedge \kappaII \right) + \i_\V \Phibf_\km^\flat \wedge \kappaII \right] - \int_{S^2} -\sqrt{k\left(k+1\right)} Y_\km \dS \wedge \i_\V \kappaII \nonumber \\
	&= \dex \int_{S^2} \left[ - \i_\V \left( \Phibf_\km^\flat \wedge \kappaII \right) + \i_\V \Phibf_\km^\flat \wedge \kappaII \right] - \int_{S^2} \sqrt{k\left(k+1\right)} Y_\km \left( \i_\R \i_\V \dS \right) \nonumber \dVS \\
	&= - \frac{1}{\sqrt{k \left(k+1 \right)}} \dex \int_{S^2} \left[ q^\mu \tensor{\varepsilon}{_\mu^\nu} \partial_\nu Y_\km \i_\V \dVS - \vartheta^\mu \tensor{\varepsilon}{_\mu^\nu} \partial_\nu Y_\km \i_\R \dVS \right] - \nonumber \\ &{} \qquad \qquad \qquad \qquad \qquad \qquad \qquad \qquad \qquad \qquad \sqrt{k\left(k+1\right)} \int_{S^2} Y_\km \varepsilon_{\mu \nu} \vartheta^\mu q^\nu \dVS.
\end{align}
The results of the last three calculations can be translated into the evolution of the scalar coefficients as given in the proposition by considering the definition of the divergence $ \divergence $ of a vector field $ X $ through $ \divergence X \dV = \dex \left( \i_X \dV \right)$ (see, e.g., \citet{marsden_h83}).
\end{proof}
We additionally introduce a symbolic reformulation of Proposition \ref{Evolution VSH} which reads
\begin{align} \label{Eq: evolution spherical symbol}
	\partial_t \rho_\km  &=  - \divergence \int_{S^2} \R  \cdot \Y_\km \, \v \dS + \int_{S^2} \left[ \rho \, \V \cdot \nabla Y_\km -  \v \cdot \q \, Y_\km \right] \dS \\
	\partial_t q_\km &= - \divergence \int_{S^2} \left[ \R \cdot \Psibf_\km \v - \V \cdot \Psibf_\km  \rho \l \right] \dS \\
	\partial_t q^*_\km &= - \divergence \int_{S^2} \left[ \R \cdot \Phibf_\km \v - \V \cdot \Phibf_\km  \rho \l \right] \dS - \sqrt{k\left(k+1\right)} \int_{S^2} \thetabf \cdot \q^\perp Y_\km \, \dS,
\end{align}
where we introduced $ \q^\perp := \tensor{\varepsilon}{^\mu_\nu} q^\nu \bpartial_\mu $. Before we regard the evolution of the Fourier coefficients we provide a brief interpretation of the evolution equations derived above.

The time rate of change of the coefficients each time contains a divergence part and a part which would be looked at as source term. Notably, for the coefficients $ q_\km $ the source term is missing, such that all these coefficients define conserved quantities.

In the evolution equation of $ \rho_\km $ the divergence term accounts for the spatial flux of density while the term containing the rotational velocity $ \thetabf $ accounts for fluxes on the orientation space. Although this flux is also conservative globally it introduces a transfer of density `between different coefficients' and therefore appears as source term in the individual equations. The last term contains the scalar product between the spatial velocity vector $ \v $ and the curvature density vector $ \q $ and accounts for line length changes due to the bow out of dislocations.

In the evolution equation of $ q_\km $ the divergence term contains one term accounting for `curvature fluxes' and one term dealing with curvature changes due to rotations. The divergence term in the evolution of $ q^*_\km $ contains the according terms with a $90^\circ$ rotation. The source term in the $ q^*_\km $ evolution contains an antisymmetric product of the vectors $ \thetabf $ and $ \q $. Similar to the cross product in three dimensions this measures the area spanned by the two vectors. This term will for example be non trivial if a planar curve would be deformed into a non-planar one (out of plane rotation). This will be of importance when the geometry of cutting processes of dislocation are considered. In this case jogs may form by cutting polarized forests while moving or through the motion of other dislocations even for otherwise stationary dislocations \citep{arsenlis_etal04, hochrainer13}.

The evolution equations for the Fourier coefficients in the case of conservative dislocation motion are the topic of
\begin{proposition}\label{Evolution Vector Fourier} The evolution of the Fourier coefficients of the SODT introduced in \emph{Definition \ref{Vector Fourier coefficients}} is given by
\begin{align} \label{Eq: evolution coefficients Fourier}
	\partial_t \rho_\mm  &=  - \partial_i \int_{S^1} \rho v^i Y_\mm  \dS + \int_{S^1} \left[ \rho \vartheta \partial_\phi Y_\mm +  v q  Y_\mm \right] \dS \\
	\partial_t q_0 &= - \partial_i \int_{S^1} \left[ q v^i - \vartheta \rho l^i \right] \dS \\
	\partial_t q_\mm &= - \frac{1}{|m|} \partial_i \int_{S^1} \left[ q v^i \partial_\phi Y_\mm  - \vartheta \rho l^i \partial_\phi Y_\mm \right] \dS
\end{align}
\end{proposition}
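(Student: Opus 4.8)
The plan is to reproduce the differential-forms computation that establishes Proposition \ref{Evolution VSH}, but carried out on the four-dimensional configuration space $M \times S^1$ instead of on $M \times S^2$. First I would introduce the covariant (one-form) versions of the Fourier basis vectors, namely $\Y_\mm^\flat = Y_\mm \l^\flat$ together with $\Psibf_\mm^\flat = \tfrac{1}{|m|}\dex Y_\mm$ for $m \neq 0$ and the special element $\Psibf_{(0)}^\flat = \dex\phi$ obtained by lowering $\bpartial_\phi$ (using $g_{\phi\phi}=1$). With the identity $\H^\flat \wedge \kappaII = \H \cdot \R\, \dVS$, the coefficients of Definition \ref{Vector Fourier coefficients} can then be rewritten as basal projections of top-degree wedge products, $\rho_\mm \dV = \int_{S^1} \Y_\mm^\flat \wedge \kappaII$ and $q_\mm \dV = \int_{S^1} \Psibf_\mm^\flat \wedge \kappaII$, exactly mirroring the opening reformulation of the three-dimensional proof.

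Next I would differentiate these integrals in time, substitute the evolution law $\partial_t \kappaII = -\dex \i_\V \kappaII$, and invoke the $S^1$ analog of the product-rule identity \eqref{Eq: kappa product evol general} together with the two auxiliary identities used there; pulling $\dex$ outside the fibre integral is legitimate because the basal projection over the closed manifold $S^1$ commutes with the exterior derivative. The requisite exterior derivatives collapse: one has $\dex \Psibf_\mm^\flat = 0$ for every $m$ (for $m=0$ because $\dex\dex\phi = 0$), and $\dex \Y_\mm^\flat = |m|\,\Psibf_\mm^\flat \wedge \l^\flat + Y_\mm \dex \l^\flat$. Worth stressing is that the entire $\grad^*$/$\Phibf$ branch of the three-dimensional proof --- and with it the Laplace--Beltrami eigenvalue term responsible for the source in the $q^*_\km$ equation --- has no counterpart here, since the tangent space to the one-dimensional $S^1$ admits no independent rotational vector harmonic. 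This is the structural reason the conservative case carries only the two series $\rho_\mm$ and $q_\mm$.

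I would then evaluate the interior multiplications just as in the general case: $\i_\V \Y_\mm^\flat = Y_\mm(\V\cdot\l) = 0$, $\i_\V(\dex Y_\mm \wedge \l^\flat) = \vartheta\,\partial_\phi Y_\mm\, \l^\flat$, and the key scalar contraction $\i_\R \i_\V \dex \l^\flat = -\g(\v,\q_\mathrm{h})$. Here the sign convention $\v = -v\e_\phi$ enters decisively: with $v^\phi = -v$ and $q^\phi = q$ it gives $-g_{\mu\nu} v^\mu q^\nu = vq$, so the line-length source surfaces as $+vq\,Y_\mm$ rather than with the minus sign of the general case. For the curvature coefficients I would use $\i_\V \Psibf_\mm^\flat = \tfrac{\vartheta}{|m|}\partial_\phi Y_\mm$ when $m \neq 0$, and $\i_\V \Psibf_{(0)}^\flat = \i_\V \dex\phi = \vartheta$ together with $\Psibf_{(0)} \cdot \R = \bpartial_\phi \cdot \R = q$ when $m = 0$.

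Finally I would assemble the pieces and translate the form identities into spatial divergences via $\divergence X\, \dV = \dex(\i_X \dV)$, obtaining the three stated equations. The main care --- and the chief obstacle to a purely mechanical transcription of the three-dimensional argument --- is the coefficient $q_0$: the basis field $\Psibf_{(0)} = \bpartial_\phi$ is the topologically privileged constant angular field that exists on $S^1$ but, as noted in the footnote to the expansion, has no analog on $S^2$, and it carries a different normalization ($\Psibf_{(0)}^\flat = \dex\phi$ rather than $\tfrac{1}{|m|}\dex Y_\mm$). Treating $m=0$ on its own yields the source-free conservation law $\partial_t q_0 = -\partial_i \int_{S^1}\left[q v^i - \vartheta \rho l^i\right] \dS$, reflecting that the mean curvature $q_0 = \int_{S^1} q\, \dex\phi$ changes by pure flux.
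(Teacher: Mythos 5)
Your proposal is correct and follows essentially the same route as the paper, which itself proves this proposition by declaring the computation \emph{mutatis mutandis} identical to the vector-spherical-harmonics case and then checking exactly the points you highlight: the sign convention $v = -v^\varphi$ turning the line-length source into $+vq\,Y_\mm$, the absence of the $\Phibf$ branch, and the separately normalized $m=0$ coefficient with its source-free conservation law.
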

\begin{proof}
The abstract derivation in terms of differential forms makes the derivation of the evolution equations in the case of glide only largely analogous to the treatment of the general case. The originally defined base fields are turned into differential forms as 
\begin{align}
	\Y_\mm^\flat &= Y_\mm \l^\flat = Y_\mm l_i \dex x^i, \\
	\Psibf_\mm^\flat &=  \left\{\begin{array}{rl} \dex \phi, & \mbox{if } m = 0 \\ \frac{1}{|m|} \partial_\phi Y_\mm \dex \phi, & \mbox{if } m \neq 0. \end{array} \right. 
\end{align}
The derivation remain \emph{mutas mutandis} the same as in the proof of Proposition \ref{Evolution VSH}. The evolution equations for the Fourier coefficients of the density is obtained as
\begin{equation}
	\partial_t \rho_{(k)} dV = - \dex \int_{S^1} Y_{(k)} \rho \i_\V \dVS + \int_{S^1} \left[ \rho \vartheta^\varphi \partial_\varphi Y_{(k)} -  v^\varphi q^\varphi g_{\varphi \varphi} Y_{(k)} \right] \dVS.
\end{equation}
We rewrite the last expression in coordinate form for the coefficients and use the notation $ \vartheta = \vartheta^\varphi $, $ q = q^\varphi $, $ v = -v^\varphi $ and $ g_{\varphi \varphi} = 1 $ to find
\begin{equation}
	\partial_t \rho_{(k)} = - \partial_i \int_{S^1}  \rho v^i Y_{(k)} \dS + \int_{S^1} \left[ \rho \vartheta \partial_\varphi Y_{(k)} +  v q Y_{(k)} \right] \dS.
\end{equation}
Also in the evolution of the curvature coefficients forms for $ k \geq 1 $ we only need to consider the different prefactor and the one-dimensional nature of the directional space to find
\begin{equation}
	\partial_t q_{(k)} dV = - \frac{1}{|k|} \dex \int_{S^1} \left[ q^\varphi \partial_\varphi Y_{(k)} \i_\V \dVS - \vartheta^\varphi \partial_\varphi Y_{(k)} \i_\R \dVS \right],
\end{equation}
such that we obtain the coordinate expression
\begin{equation}
	\partial_t q_{(k)} = - \frac{1}{|k|} \partial_i \int_{S^1} \left[ q v^i \partial_\varphi Y_{(k)}  \dS - \rho \vartheta l^i \partial_\varphi Y_{(k)} \dS \right].
\end{equation}
The evolution equation of the zeroth curvature coefficient misses the prefactor but otherwise largely conforms with the form of the other coefficient, i.e.,
\begin{equation}
	\partial_t q_{(0)} = -  \partial_i \int_{S^1} \left[ q v^i \dS - \rho \vartheta l^i \dS \right].
\end{equation}
\end{proof}
We note that $ q_0 $ is a conserved quantity, the integral of which is indicative of the total number of dislocations. This is a consequence of Hopf's Umflaufsatz (see e.g. \citet{pressley01}) which states that the integral over the scalar curvature of any closed planar curve without self intersections is $ 2 \mathrm{\pi} $. As $ q_{(0)} $ derives from averaging (i.e. adding) the curvature of the dislocations, the integral over it equals the total number of dislocations times $ 2 \mathrm{\pi} $. A similar argument directly employing the solenoidality of $ \R $ may be found in \citet{hochrainer13b}.

We again introduce a symbolic reformulation of Proposition \ref{Evolution Vector Fourier} which reads
\begin{align} \label{Eq: symbolic evolution coefficients Fourier}
	\partial_t \rho_\mm  &=  - \divergence \int_{S^1} Y_\mm \rho \v \dS + \int_{S^1} \left[ \rho \V \cdot \nabla  Y_\mm +  v q  Y_\mm \right] \dS \\
	\partial_t q_\mm &= - \divergence \int_{S^2} \left[ \R \cdot \Psibf_\mm \v - \V \cdot \Psibf_\mm \rho \l \right] \dS.
\end{align}

\subsection{Evolution of the irreducible tensors} \label{Sec: Evolution irreducible}
We now move on to the evolution equations of the tensor coefficients. The first set of equations are subject of
\begin{proposition} \label{Evolution irreducible} The evolution of the irreducible tensor coefficients of the SODT introduced in \emph{Definition \ref{Def: Irreducible tensor expansion}} is given by
\begin{align} \label{Eq: evolution irreducible} 
	\partial_t \hat{\rho}^{i_1 \cdots i_k}  &=  - \partial_i \int_{S^2}  \rho v^i \hat{l}^{i_1 \cdots i_k} \dS + \int_{S^2} \left[ \rho \vartheta^\mu \partial_\mu \hat{l}^{i_1 \cdots i_k} -  g_{\mu \nu} v^\mu q^\nu \hat{l}^{i_1 \cdots i_k} \right] \dS  \\
	\partial_t \hat{q}^{i_1 \cdots i_k} &= - \partial_i \int_{S^2} \left[  v^i g_{\mu \nu} q^\mu \hat{\Gamma}^{\nu i_1 \cdots i_k} - \rho l^i g_{\mu \nu} \vartheta^\mu \hat{\Gamma}^{\nu i_1 \cdots i_k}  \right] \dS  \\
	\partial_t \hat{q}^{* i_1 \cdots i_k} &= - \partial_i \int_{S^2} \left[ v^i g_{\mu \nu} q^\mu \hat{\Gamma}^{* \nu i_1 \cdots i_k} - \rho l^i \vartheta^\mu g_{\mu \nu} q^\mu \hat{\Gamma}^{* \nu i_1 \cdots i_k} \right] \dS - \sqrt{k\left(k+1\right)} \int_{S^2} \varepsilon_{\mu \nu} \vartheta^\mu q^\nu \hat{l}^{i_1 \cdots i_k} \dS. 
\end{align}
\end{proposition}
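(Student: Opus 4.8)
The plan is to reproduce the derivation of Proposition~\ref{Evolution VSH} essentially verbatim, replacing the scalar spherical harmonic $Y_\km$ everywhere by the tensor component $\lhat^{i_1 \cdots i_k}$ of $\l^{\totimes k}$. First I would introduce the covariant (one-form) versions of the base tensors of Definition~\ref{Def: Irreducible tensor expansion}, taken component-wise as tensor-valued one-forms,
\begin{align*}
  \hat{\mathbfit\Lambda}^\flat_{(k)} &= \lhat^{i_1 \cdots i_k}\, \l^\flat, \\
  \hat{\mathbfit\Gamma}^\flat_{(k)} &= \tfrac{1}{\sqrt{k(k+1)}}\, \dex \lhat^{i_1 \cdots i_k}, \\
  \hat{\mathbfit\Gamma}^{*\flat}_{(k)} &= \tfrac{1}{\sqrt{k(k+1)}}\, \dex^* \!\left( \lhat^{i_1 \cdots i_k}\, \dS \right),
\end{align*}
which are exactly the analogues of $\Y_\km^\flat$, $\Psibf_\km^\flat$, $\Phibf_\km^\flat$. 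Using the identity $\H^\flat \wedge \kappaII = \H \cdot \R\, \dVS$ component-wise, the coefficient integrals then rewrite as $\hat\rho^{i_1\cdots i_k}\dV = \int_{S^2} \hat{\mathbfit\Lambda}^\flat_{(k)} \wedge \kappaII$, and likewise for $\hat{q}^{i_1\cdots i_k}$ and $\hat{q}^{*\,i_1\cdots i_k}$, so that the time-derivative formula \eqref{Eq: kappa product evol general} and the two product identities following it apply unchanged.

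The whole computation then hinges on reproducing the three exterior-derivative identities of the earlier proof, namely
\begin{align*}
  \dex \hat{\mathbfit\Lambda}^\flat_{(k)} &= \sqrt{k(k+1)}\, \hat{\mathbfit\Gamma}^\flat_{(k)} \wedge \l^\flat + \lhat^{i_1 \cdots i_k}\, \dex \l^\flat, \\
  \dex \hat{\mathbfit\Gamma}^\flat_{(k)} &= 0, \\
  \dex \hat{\mathbfit\Gamma}^{*\flat}_{(k)} &= -\sqrt{k(k+1)}\, \lhat^{i_1 \cdots i_k}\, \dS.
\end{align*}
The first follows from the Leibniz rule together with the definition of $\hat{\mathbfit\Gamma}^\flat_{(k)}$, and the second from $\dex\dex=0$. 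Crucially, the preparatory computations of $\i_\V$ and of $\i_\R \i_\V \dex \l^\flat = - g_{\mu\nu} v^\mu q^\nu$ (the line-length source term) used \emph{only} the generic structure of a scalar angular function multiplying $\l^\flat$, so they carry over with $Y_\km$ replaced by $\lhat^{i_1\cdots i_k}$. Substituting these into the three closing calculations of the proof of Proposition~\ref{Evolution VSH} and translating $\dex\,\i_X\dV = \divergence X\,\dV$ back into coordinates then yields exactly the three stated evolution equations, including the extra antisymmetric source term $-\sqrt{k(k+1)}\int_{S^2} \varepsilon_{\mu\nu}\vartheta^\mu q^\nu \lhat^{i_1\cdots i_k}\dS$ in the $\hat{q}^{*}$ equation, which descends from the third identity above.

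The single genuinely non-trivial point -- and the expected main obstacle -- is that third identity, which demands that $\lhat^{i_1 \cdots i_k}$ be an eigenfunction of the Laplace--Beltrami operator $\dex^*\dex$ on $S^2$ with eigenvalue $-k(k+1)$, so that $\lhat^{i_1\cdots i_k}\dS$ is an eigenform of $\dex\dex^*$ to the same eigenvalue. This is precisely the statement that the components of the symmetric traceless tensor $\l^{\totimes k}$ are degree-$k$ harmonic polynomials restricted to the sphere, hence fixed linear combinations of the order-$k$ spherical harmonics $Y_\km$ with $-k\le m\le k$ (cf.\ \citet{applequist89}). Granting this, a cleaner route than the direct computation becomes available: since each $\lhat^{i_1\cdots i_k}$ is such a combination at a \emph{single} order $k$, every irreducible tensor coefficient is the corresponding linear combination of the vector-spherical-harmonic coefficients $\rho_\km$, $q_\km$, $q^*_\km$ of that same order, and the claimed equations follow immediately by linearity from Proposition~\ref{Evolution VSH}. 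The prefactors $\sqrt{k(k+1)}$ match without any reshuffling exactly because all $Y_\km$ of a given order share the eigenvalue $-k(k+1)$, so no coupling across orders is introduced by taking the combination.
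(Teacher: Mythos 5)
Your proposal is correct and follows essentially the same route as the paper: the paper's own (very terse) proof likewise rests on the observation that $\hat{\mathbfit\Lambda}_{(k)}$, $\hat{\mathbfit\Gamma}_{(k)}$, $\hat{\mathbfit\Gamma}^*_{(k)}$ are related by the same differential operations as $\Y_\km$, $\Psibf_\km$, $\Phibf_\km$ and that the components $\lhat^{i_1\cdots i_k}$ are Laplace--Beltrami eigenfunctions, and it explicitly offers both of the routes you describe (rerunning the derivation in analogy, or deducing the result as a corollary of Proposition~\ref{Evolution VSH} by linearity). You correctly state the eigenvalue as $-k(k+1)$, where the paper's proof contains an apparent typo ($-k(k-1)$).
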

\begin{proof}
The base tensors $\hat{\mathbfit\Lambda}_{(k)} $, $\hat{\mathbfit\Gamma}_{(k)} $ and $\hat{\mathbfit\Gamma}^*_{(k)}$ are related by the same differential operations as $ \Y_\km $, $\Psibf_\km$ and $ \Psibf_\km $, respectively. Furthermore, the tensor coefficients are as well eigenfunctions of the Laplace-Beltrami operator to the according eigenvalue $-k(k-1)$. One may thus either proof the current proposition in analogy to Proposition \ref{Evolution VSH} or regard the current as a corollary to the latter.
\end{proof}
In symbolic notation the result of Proposition \ref{Evolution irreducible} takes the form
\begin{align}
	\partial_t \hat{\rhobf}_{(k)}  &=  - \divergence \int_{S^2} \v \otimes \R \cdot \hat{\mathbfit\Lambda}_{(k)} \dS + \int_{S^2} \left[ \rho \V \cdot \nabla \l^{\totimes k} -  \v \cdot \q \, \l^{\totimes k} \right] \dS  \label{Eq: evolution coefficients VSH symbol 1} \\
	\partial_t \hat{\q}_{(k)} &= - \divergence \int_{S^2} \left[ \v \otimes \R \cdot \hat{\mathbfit\Gamma}_{(k)}  -  \rho \l \otimes \V \cdot \hat{\mathbfit\Gamma}_{(k)}  \right] \dS \label{Eq: evolution coefficients VSH symbol 2} \\
	\partial_t \hat{\q}^*_{(k)} &= \left( - \divergence \int_{S^2} \left[ \v \otimes \R \cdot \hat{\mathbfit\Gamma}^*_{(k)} - \rho \l \otimes \V \cdot \hat{\mathbfit\Gamma}^*_{(k)} \right] \dS - \sqrt{k\left(k+1\right)} \int_{S^2} \thetabf \cdot \q^\perp \l^{\totimes k} \, \dS \right). \label{Eq: evolution coefficients VSH symbol 3}
\end{align}
The evolution of the (vector) Fourier coefficients in the case of glide only are given in
\begin{proposition} \label{Evolution irreducible (pure glide)} The evolution of the irreducible tensor coefficients in pure glide introduced in \emph{Definition \ref{Irreducible tensor expansion (pure glide)}} is given by
\begin{align} \label{Eq: evolution Fourier coefficients} 
	\partial_t \hat{\rho}^{i_1 \cdots i_k}  &=  - \partial_i \int_{S^1}  \rho v^i \hat{l}^{i_1 \cdots i_k} \dS + \int_{S^2} \left[ \rho \vartheta \partial_\varphi \hat{l}^{i_1 \cdots i_k} +  v q \hat{l}^{i_1 \cdots i_k} \right] \dS  \\
	\partial_t q_{(0)} &= -  \partial_i \int_{S^1} \left[ q v^i \dS - \rho \vartheta l^i \dS \right] \\
	\partial_t \hat{q}^{i_1 \cdots i_k} &= - \partial_i \int_{S^1} \left[ q v^i \hat{\Gamma}^{\varphi i_1 \cdots i_k} - \rho l^i \vartheta \hat{\Gamma}^{\varphi i_1 \cdots i_k}  \right] \dS, \qquad k \geq 1. 
\end{align}
\end{proposition}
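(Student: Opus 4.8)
The plan is to derive Proposition \ref{Evolution irreducible (pure glide)} as the pure-glide specialization of the vector spherical harmonics result (Proposition \ref{Evolution VSH}), mirroring the way Proposition \ref{Evolution Vector Fourier} was obtained from Proposition \ref{Evolution VSH}, and the way Proposition \ref{Evolution irreducible} was obtained as a corollary using the structural analogy between the base tensors and the vector spherical harmonics. The key observation, stated in Section \ref{Sec: Irreducible tensor expansion}, is that the base tensors $\hat{\mathbfit\Lambda}_{(k)}$ and $\hat{\mathbfit\Gamma}_{(k)}$ are built from the traceless symmetric tensor $\l^{\totimes k}$ by exactly the same differential operations ($\l\otimes(\cdot)$ and $\frac{1}{k}\partial_\phi(\cdot)\,\bpartial_\phi\otimes(\cdot)$) that produce $\Y_\mm$ and $\Psibf_\mm$ from $Y_\mm$. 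Consequently, every manipulation in the proof of Proposition \ref{Evolution VSH} carries over coefficient-wise once the scalar harmonic $Y_\km$ is replaced by the tensor coefficient $\hat{l}^{i_1\cdots i_k}$ and the domain of integration $S^2$ is replaced by $S^1$.

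Concretely, first I would introduce the covariant (one-form) versions of the base tensors, $\hat{\mathbfit\Lambda}_{(k)}^\flat = \l^\flat \otimes \l^{\totimes k}$ and $\hat{\mathbfit\Gamma}_{(k)}^\flat = \frac{1}{k}\partial_\phi \l^{\totimes k}\,\dex\phi$, so that the tensor coefficients in Definition \ref{Irreducible tensor expansion (pure glide)} can be rewritten as integrals of a wedge product $\int_{S^1}\hat{\mathbfit\Lambda}_{(k)}^\flat\wedge\kappaII$ and $\int_{S^1}\hat{\mathbfit\Gamma}_{(k)}^\flat\wedge\kappaII$ against the SODT, exactly as the scalar coefficients were rewritten via $\H^\flat\wedge\kappaII = \H\cdot\R\,\dVS$. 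Then I would apply the time-derivative identity \eqref{Eq: kappa product evol general} together with $\partial_t\kappaII = -\dex\i_\V\kappaII$ and the two product-rule identities used in the proof of Proposition \ref{Evolution VSH}, substituting $\H^\flat$ by each base one-form in turn. The crucial simplification in the pure glide case is that $S^1$ is one-dimensional, so the directional index $\mu$ collapses to the single value $\phi$; using the notational conventions $\vartheta = \vartheta^\varphi$, $q = q^\varphi$, $v = -v^\varphi$, and $g_{\varphi\varphi}=1$ from \eqref{Eq: V cons} and the proof of Proposition \ref{Evolution Vector Fourier}, the interior-multiplication terms reduce to the stated scalar expressions. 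For the $\hat\rho$ equation the term $\g(\bpartial_i,\thetabf_\mathrm{h})\dx{i}$ yields the rotational transport $\rho\vartheta\partial_\varphi\hat l^{i_1\cdots i_k}$ and $\i_\R\i_\V\dex\l^\flat = -g_{\mu\nu}v^\mu q^\nu$ yields the line-length term, which after the sign convention becomes $+vq\,\hat l^{i_1\cdots i_k}$.

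The zeroth curvature coefficient $q_{(0)}$ must be handled as a separate case because, as the footnote in Section \ref{Sec: Expansion} emphasizes, $\Psibf_{(0)}=\bpartial_\phi$ is a nontrivial constant vertical field with no three-dimensional analogue; its associated one-form is simply $\dex\phi$, which is closed, so the source term present in the general $q^*_\km$ evolution drops out and one recovers the conservation form $\partial_t q_{(0)} = -\partial_i\int_{S^1}[qv^i - \rho\vartheta l^i]\,\dS$ already established in Proposition \ref{Evolution Vector Fourier}. For $k\ge 1$ the curvature equation inherits the prefactor $1/k$ (in place of $1/\sqrt{k(k+1)}$) that distinguishes the two-dimensional normalization, and since $\dex\hat{\mathbfit\Gamma}_{(k)}^\flat$ involves $\dex\dex$ of the scalar coefficient it vanishes, so no source term appears — consistent with the $q_\km$ equation in Proposition \ref{Evolution VSH} having no source.

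The main obstacle I anticipate is not any single hard estimate but rather the careful bookkeeping of the pure-glide sign and normalization conventions: the factor discrepancy between $\frac{1}{\sqrt{k(k+1)}}$ (3D) and $\frac{1}{k}$ (2D), the sign flip buried in $v = -v^\varphi$, and the exceptional status of the $k=0$ curvature mode must all be tracked consistently so that the resulting coordinate expressions match the claimed forms. Because the abstract differential-form derivation is insensitive to dimension, the cleanest route is to state explicitly that the proof proceeds \emph{mutatis mutandis} exactly as in Propositions \ref{Evolution VSH} and \ref{Evolution Vector Fourier}, and then simply record the three resulting coordinate equations, treating the proposition as a corollary of the combination of those two earlier results rather than redoing the full computation.
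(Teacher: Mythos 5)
Your proposal is correct and takes essentially the same route as the paper: the paper's own proof is a one-line corollary stating that the result follows from Proposition \ref{Evolution Vector Fourier} in the same way that Proposition \ref{Evolution irreducible} follows from Proposition \ref{Evolution VSH}, i.e., precisely the \emph{mutatis mutandis} argument via the structural analogy between the base tensors and the trigonometric functions that you describe. Your additional bookkeeping (the $1/k$ versus $1/\sqrt{k(k+1)}$ normalization, the $v=-v^{\varphi}$ sign convention, and the exceptional closed one-form $\dex\phi$ for the $k=0$ curvature mode) is consistent with, and merely more explicit than, what the paper leaves implicit.
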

\begin{proof}
The current proposition is a consequence of Proposition \ref{Evolution Vector Fourier} in the same way as Proposition \ref{Evolution irreducible} is a consequence of Proposition \ref{Evolution VSH}.
\end{proof}
In symbolic notation the formulae remain unaltered as compared to the general case, Eqs.\ \eqref{Eq: evolution coefficients VSH symbol 1} and \eqref{Eq: evolution coefficients VSH symbol 2} but lack the second curvature expansion:
\begin{align} \label{Eq: evolution Fourier coefficients symbol Appendix}
	\partial_t \hat{\rhobf}_{(k)}  &=  - \divergence \int_{S^2} \v \otimes \R \cdot \hat{\mathbfit\Lambda}_{(k)} \dS + \int_{S^2} \left[ \rho \V \cdot \nabla \l^{\totimes k} -  \v \cdot \q \, \l^{\totimes k} \right] \dS \\
		\partial_t \hat{\q}_{(0)} &= - \divergence \int_{S^1} \left[ q \v -  \rho \vartheta \l \right] \dS \\
	\partial_t \hat{\q}_{(k)} &= - \divergence \int_{S^1} \left[ \v \otimes \R \cdot \hat{\mathbfit\Gamma}_{(k)}  -  \rho \l \otimes \V \cdot \hat{\mathbfit\Gamma}_{(k)}  \right] \dS, \qquad k \geq 1
\end{align}
\subsection{Evolution of the reducible tensors} \label{Sec: Evolution reducible}
Also the evolution equations for the reducible tensors are easily derived from the evolution of $ \kappaII$. This is the content of 
\begin{proposition} The evolution of the reducible tensor coefficients introduced in \emph{Definition \ref{Def: Reducible tensor expansion}} is given by
\begin{align} \label{Eq: evolution reducible} 
	\partial_t \rho^{i_1 \cdots i_k}  &=  - \partial_i \int_{S^2}  \rho v^i l^{i_1} \cdots l^{i_k} \dS + \int_{S^2} \left[ \rho \vartheta^\mu \partial_\mu \left( l^{i_1} \cdots l^{i_k} \right) -  g_{\mu \nu} v^\mu q^\nu l^{i_1} \cdots l^{i_k} \right] \dS  \\
	\partial_t q^{i_1 \cdots i_k} &= - \partial_i \int_{S^2} \left[  v^i g_{\mu \nu} q^\mu \Gamma^{\nu i_1 \cdots i_k} - \rho l^i g_{\mu \nu} \vartheta^\mu \Gamma^{\nu i_1 \cdots i_k}  \right] \dS  \\
	\partial_t q^{* i_1 \cdots i_k} &= - \partial_i \int_{S^2} \left[ v^i g_{\mu \nu} q^\mu \Gamma^{* \nu i_1 \cdots i_k} - \rho l^i \vartheta^\mu g_{\mu \nu} q^\mu \Gamma^{* \nu i_1 \cdots i_k} \right] \dS - \nonumber \\ &{} \qquad \qquad \qquad \qquad \qquad \qquad \qquad \int_{S^2} \varepsilon_{\mu \nu} \vartheta^\mu q^\nu  \frac{1}{\sqrt{|g|}} \partial_\lambda \left[ \sqrt{|\g|} g^{\lambda \eta} \partial_\eta \left( l^{i_1} \cdots l^{i_k} \right) \right] \dS,
\end{align}
where $ |\g| = \det(\g) = \sin \theta$.
\end{proposition}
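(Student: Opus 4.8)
The plan is to reproduce, almost verbatim, the differential-forms derivation used in the proof of Proposition \ref{Evolution VSH}, with the vector spherical harmonics replaced by the reducible base tensors $\mathbfit\Lambda_{(k)}$, $\mathbfit\Gamma_{(k)}$ and $\mathbfit\Gamma^*_{(k)}$. Concretely, for each fixed multi-index $i_1\cdots i_k$ I would introduce the covariant (one-form) counterparts built from the \emph{non-traceless} product $l^{i_1}\cdots l^{i_k}$,
\begin{align*}
	\Lambda^{\flat\,i_1\cdots i_k} &= \left( l^{i_1}\cdots l^{i_k} \right)\l^\flat,\\
	\Gamma^{\flat\,i_1\cdots i_k} &= \frac{1}{\sqrt{k(k+1)}}\,\dex\!\left( l^{i_1}\cdots l^{i_k} \right),\\
	\Gamma^{*\flat\,i_1\cdots i_k} &= \frac{1}{\sqrt{k(k+1)}}\,\dex^{*}\!\left[\left( l^{i_1}\cdots l^{i_k} \right)\dS\right],
\end{align*}
so that $\rho^{i_1\cdots i_k}\dV = \int_{S^2}\Lambda^{\flat\,i_1\cdots i_k}\wedge\kappaII$ and analogously for $q^{i_1\cdots i_k}$ and $q^{*\,i_1\cdots i_k}$. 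With these at hand I would reuse the product rule Eq.~\eqref{Eq: kappa product evol general}, the two wedge identities, and the interior-multiplication formulas $\i_\V\l^\flat=0$, $\i_\V(\dex f\wedge\l^\flat)=\vartheta^\mu\partial_\mu f\,\l^\flat$ and $\i_\R\i_\V\dex\l^\flat=-g_{\mu\nu}v^\mu q^\nu$ already established there, now applied with $f=l^{i_1}\cdots l^{i_k}$.

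For the $\rho^{i_1\cdots i_k}$ and $q^{i_1\cdots i_k}$ equations the derivation carries over with the single substitution $\hat l^{i_1\cdots i_k}\mapsto l^{i_1}\cdots l^{i_k}$. The exterior derivative of $\Gamma^{\flat\,i_1\cdots i_k}$ still vanishes because $\dex\dex=0$, and the angular dependence entering the flux and rotation terms now appears only through $\partial_\mu\!\left(l^{i_1}\cdots l^{i_k}\right)$ rather than through the traceless derivative $\partial_\mu\hat l^{i_1\cdots i_k}$. I therefore expect the first two lines of the proposition to follow immediately as a corollary of Proposition \ref{Evolution VSH}, in exactly the way Proposition \ref{Evolution irreducible} does; no genuinely new computation is required.

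The real point of departure is the source term of the $q^{*\,i_1\cdots i_k}$ equation. In Proposition \ref{Evolution VSH} this term arose from $\dex\Phibf_\km^\flat=-\sqrt{k(k+1)}\,Y_\km\,\dS$, which used that $Y_\km$ is an eigenfunction of the Laplace--Beltrami operator $\dex^{*}\dex$ with eigenvalue $-k(k+1)$. The reducible products $l^{i_1}\cdots l^{i_k}$ are \emph{not} eigenfunctions — on the sphere they split into harmonic pieces of degrees $k,k-2,\dots$ — so the analogous step cannot be collapsed to a single eigenvalue. Instead one must keep $\dex\Gamma^{*\flat\,i_1\cdots i_k}=\frac{1}{\sqrt{k(k+1)}}\,\dex\dex^{*}\!\left[\left(l^{i_1}\cdots l^{i_k}\right)\dS\right]$, recognize $\dex\dex^{*}$ acting on the top-degree form $f\,\dS$ on the sphere factor as $\left(\dex^{*}\dex f\right)\dS$, and write the latter in spherical coordinates as the full Laplace--Beltrami operator $\frac{1}{\sqrt{|\g|}}\partial_\lambda\!\left[\sqrt{|\g|}\,g^{\lambda\eta}\partial_\eta\!\left(l^{i_1}\cdots l^{i_k}\right)\right]$ with $|\g|=\sin\theta$. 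Feeding this back through the term $-\int_{S^2}\dex\Gamma^{*\flat}\wedge\i_\V\kappaII$, together with the identity $\dS\wedge\i_\V\kappaII=\left(\i_\R\i_\V\dS\right)\dVS$ and $\i_\R\i_\V\dS=-\varepsilon_{\mu\nu}\vartheta^\mu q^\nu$ from the proof of Proposition \ref{Evolution VSH}, produces precisely the stated Laplace--Beltrami source term; as a sanity check, substituting a traceless $f=\hat l^{i_1\cdots i_k}$ with eigenvalue $-k(k+1)$ recovers the $-\sqrt{k(k+1)}$ prefactor of Proposition \ref{Evolution irreducible}.

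I expect the main obstacle to be exactly this last step: the careful bookkeeping of the normalization factor $1/\sqrt{k(k+1)}$ against the \emph{non-constant} eigenvalue structure of the non-traceless polynomial, and the verification that $\dex\dex^{*}\!\left[f\,\dS\right]$ genuinely reduces to $\left(\dex^{*}\dex f\right)\dS$ on the sphere factor of the product bundle $M\times S^2$. Once the Laplace--Beltrami operator is correctly identified in coordinates, the remaining manipulations — translating the exterior-derivative results into the divergence via $\divergence X\,\dV=\dex\!\left(\i_X\dV\right)$ — are identical to those already carried out for the vector spherical harmonics.
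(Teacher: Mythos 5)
Your proposal is correct and follows essentially the same route as the paper: the paper's own proof simply observes that the reducible base tensors are related by the same differential operations as the vector spherical harmonics, so the argument of Proposition \ref{Evolution VSH} carries over, except that the non-traceless products $l^{i_1}\cdots l^{i_k}$ are not eigenfunctions of the Laplace--Beltrami operator, which therefore survives explicitly in the source term of the $q^{*}$ equation. You have correctly identified and worked out exactly this one point of departure, so no further comment is needed.
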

\begin{proof}
The base tensors $\mathbfit\Lambda_{(k)} $, $\mathbfit\Gamma_{(k)} $ and $\mathbfit\Gamma^*_{(k)}$ are again related by the same differential operations as $ \Y_\km $, $\Psibf_\km$ and $ \Psibf_\km $, respectively. However, the tensor coefficients are no eigenfunctions of the Laplace-Beltrami operator, which therefore occurs in the last integral. Aside from this, the result is again a consequence of Proposition \ref{Evolution VSH}.
\end{proof}
The symbolic version for the reducible tensors reads
\begin{align} \label{Eq: evolution coefficients VSH symbol Appendix}
	\partial_t \rhobf_{(k)}  &=  - \divergence \int_{S^2} \v \otimes \R \cdot \mathbfit\Lambda_{(k)} \dS + \int_{S^2} \left[ \rho \V \cdot \nabla \l^{\otimes k} -  \v \cdot \q \, \l^{\otimes k} \right] \dS \\
	\partial_t \q_{(k)} &= - \divergence \int_{S^2} \left[ \v \otimes \R \cdot \mathbfit\Gamma_{(k)}  -  \rho \l \otimes \V \cdot \mathbfit\Gamma_{(k)}  \right] \dS \\
	\partial_t \q^*_{(k)} &= \left( - \divergence \int_{S^2} \left[ \v \otimes \R \cdot \mathbfit\Gamma^*_{(k)} - \rho \l \otimes \V \cdot \mathbfit\Gamma^*_{(k)} \right] \dS - \int_{S^2} \thetabf \cdot \q^\perp \divergence \mathbfit\Gamma_{(k)} \, \dS \right).
\end{align}
The evolution equations for the reducible tensors in the case of glide only are subject of
\begin{proposition} \label{Prop: evolution reducible tensors (pure glide)} The evolution of the reducible tensor coefficients in the case of pure glide introduced in \emph{Definition \ref{Def: Reducible tensor expansion (pure glide)}} is given by
\begin{align}
	\partial_t \rho^{i_1 \cdots i_k}  &=  - \partial_i \int_{S^1}  \rho v^i l^{i_1} \cdots l^{i_k} \dS + \int_{S^1} \left[ \rho \vartheta \partial_\varphi \left( l^{i_1} \cdots l^{i_k} \right) + v q l^{i_1} \cdots l^{i_k} \right] \dS  \label{Eq: evolution reducible pg rho} \\
	\partial_t q_{(0)} &= -  \partial_i \int_{S^1} \left[ q v^i \dS - \rho \vartheta l^i \dS \right] \label{Eq: evolution reducible pg q0}\\
	\partial_t q^{i_1 \cdots i_k} &= - \partial_i \int_{S^1} \left[  q v^i \Gamma^{\varphi i_1 \cdots i_k} - \rho l^i \vartheta \Gamma^{\varphi i_1 \cdots i_k}  \right] \dS, \qquad k \geq 1. \label{Eq: evolution reducible pg q}
\end{align}
\end{proposition}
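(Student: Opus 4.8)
The plan is to obtain this proposition in exactly the way the reducible evolution in the general case was obtained from Proposition~\ref{Evolution VSH}: namely as a specialization of the Fourier result, Proposition~\ref{Evolution Vector Fourier}, to the reducible base tensors (equivalently, as the $S^1$ restriction of the preceding general reducible proposition with the $\q^*$ series deleted). The structural fact that makes this work is that the reducible pure-glide base tensors $\mathbfit\Lambda_{(k)}$ and $\mathbfit\Gamma_{(k)}$ are produced from $\l^{\otimes k}$ by the very same two operations --- multiplication by the position vector $\l$, and the angular gradient carrying $\tfrac{1}{k}\partial_\phi\!\left(l^{i_1}\cdots l^{i_k}\right)$ into a $\bpartial_\phi$-valued object --- that produce $\Y_\mm$ and $\Psibf_\mm$ from the scalar $Y_\mm$. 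Since on the trivial tensor bundle of Section~\ref{Sec: Vector valued} the exterior derivative and the basal projection act coefficient-wise and still commute, the whole differential-form manipulation of the proof of Proposition~\ref{Evolution Vector Fourier} transfers verbatim to each fixed component $\rho^{i_1\cdots i_k}$ and $q^{i_1\cdots i_k}$ in the standard basis.

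First I would rewrite the coefficients in wedge-product form as tensor-valued integrals over $S^1$, writing $\rho^{i_1\cdots i_k}\,\dV=\int_{S^1}\left(\l^\flat\otimes\l^{\otimes k}\right)\wedge\kappaII$ and, for $k\ge 1$, $q^{i_1\cdots i_k}\,\dV=\tfrac{1}{k}\int_{S^1}\bigl(\partial_\phi(l^{i_1}\cdots l^{i_k})\,\dex\phi\bigr)\wedge\kappaII$, i.e.\ the covariant one-form versions of $\mathbfit\Lambda_{(k)}$ and $\mathbfit\Gamma_{(k)}$ wedged against $\kappaII$. I would then insert $\partial_t\kappaII=-\dex\i_\V\kappaII$ from Eq.~\eqref{Eq: dt kappaII}, apply the product rule Eq.~\eqref{Eq: kappa product evol general} together with the two interior-multiplication identities used before, and reuse the three key computations unchanged: $\i_\V\l^\flat=\V\cdot\l=0$ because $\v\perp\l$ and $\thetabf$ is vertical; $\dex(\mathbfit\Gamma_{(k)})^\flat=\tfrac{1}{k}\dex\dex(l^{i_1}\cdots l^{i_k})=0$; and $\i_\R\i_\V\dex\l^\flat=-g_{\mu\nu}v^\mu q^\nu$, which under the pure-glide sign convention $v=-v^\phi$, $q=q^\phi$, $g_{\phi\phi}=1$ collapses to the line-length source term $+vq$. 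Converting the remaining flux contributions $\dex\i_X\dVS$ into spatial divergences through $\divergence X\,\dV=\dex(\i_X\dV)$ then produces Eqs.~\eqref{Eq: evolution reducible pg rho} and \eqref{Eq: evolution reducible pg q}; the zeroth curvature coefficient, Eq.~\eqref{Eq: evolution reducible pg q0}, falls out of the identical computation run with $\mathbfit\Gamma_{(0)}=\bpartial_\phi$, whose covariant form $\dex\phi$ carries no $1/k$ normalization and hence yields the prefactor-free law.

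The one point that needs care --- and the only genuine difference from the general reducible proposition --- is the complete absence of a Laplace--Beltrami contribution. In the general ($S^2$) case the $\q^*_{(k)}$ evolution acquired an explicit Laplace--Beltrami term precisely because $l^{i_1}\cdots l^{i_k}$, unlike its traceless counterpart $\hat l^{i_1\cdots i_k}$, is \emph{not} an eigenfunction of $\dex^*\dex$, so the eigenvalue step that collapsed $\dex\Phibf_\km^\flat$ in Proposition~\ref{Evolution VSH} no longer applies. I would emphasize that in pure glide this difficulty never arises: the one-dimensional directional space $S^1$ admits only the two base fields $\Y_\mm$ and $\Psibf_\mm$ per mode and therefore no $\Phibf$/$\q^*$ series at all. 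Consequently only the $\rho$ and $q$ equations survive, and for these the clean identity $\dex(\mathbfit\Gamma_{(k)})^\flat=0$ holds exactly as $\dex\Psibf_\mm^\flat=0$ did, so no eigenvalue or Laplace--Beltrami input is needed and the derivation closes without any extra integral term. The residual work is then pure bookkeeping of the prefactors, which is routine.
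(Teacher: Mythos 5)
Your proposal is correct and follows the same route as the paper, which simply states that the result is a consequence of Proposition~\ref{Evolution Vector Fourier} (obtained by substituting the reducible base tensors for the Fourier base fields in that proof, exactly as you describe). Your additional remarks --- the verification that $\dex(\mathbfit\Gamma_{(k)})^\flat=0$ plays the role of $\dex\Psibf_\mm^\flat=0$, the sign bookkeeping giving the $+vq$ source term, the prefactor-free treatment of $\mathbfit\Gamma_{(0)}=\bpartial_\phi$, and the explanation of why no Laplace--Beltrami term appears in pure glide --- are all consistent with the paper's framework and merely make explicit what the paper leaves implicit.
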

\begin{proof}
This result is a consequence of Proposition \ref{Evolution Vector Fourier}.
\end{proof}
In symbolic notation we rewrite the last Proposition as
\begin{align} \label{Eq: evolution reducible symbol Appendix}
	\partial_t \rhobf_{(k)}  &=  - \divergence \int_{S^1} \v \otimes \R \cdot \mathbfit\Lambda_{(k)} \dS + \int_{S^1} \left[ \rho \V \cdot \nabla \l^{\otimes k} +  v q \, \l^{\otimes k} \right] \dS \\
		\partial_t \hat{\q}_{(0)} &= - \divergence \int_{S^1} \left[ q \v -  \rho \vartheta \l \right] \dS \\
	\partial_t \q_{(k)} &= - \divergence \int_{S^1} \left[ \v \otimes \R \cdot \mathbfit\Gamma_{(k)}  -  \rho \l \otimes \V \cdot \mathbfit\Gamma_{(k)}  \right] \dS, \qquad k \geq 1.
\end{align}
\subsection{Evolution of the reducible differential forms} \label{Sec: Evolution Diff forms}
When regarding the evolution equations for the reducible first order alignment tensor (which is the same in the irreducible and the reducible definition) derived in the last Subsection,
\begin{equation} \label{Eq: evolution reducible alternative} 
	\partial_t \rho^{i}  =  - \partial_j \int_{S^2}  \rho v^j l^i \dS + \int_{S^2} \left[ \rho \vartheta^\mu \partial_\mu l^i -  g_{\mu \nu} v^\mu q^\nu l^i \right] \dS,
\end{equation}
it seems notable that this is not the well known evolution equation which dates at least back to \citet{mura63}, stating
\begin{equation}
	\partial_t \rho^{i} = - \varepsilon_{ijk} \partial_j \int_{S^2} \rho \varepsilon_{kmn} v^n l^m \dS.
\end{equation}	
One may indeed transform the last two equations into one another by using the solenoidality of $ \R $ and the definition of $ \thetabf $. Alternatively, we obtain a full set of evolution equations for the tensors $ \rhobf_{(k)} $ for $ k \geq 1 $ from the evolution of the tensors $ \mathbfit{K}_{(k)} $ which are provided in
\begin{proposition}
The evolution for the components of the tensor valued differential forms introduced in \emph{Definition \ref{Reducible differential form expansion}} is given by
\begin{multline}
	\partial_t \tensor{K}{_{ij} ^{i_1 \cdots i_k}} = - \left( \partial_i \int_{S^2} \rho \varepsilon_{jmk} v^k l^m l^{i_1} \cdots l^{i_k} \dS - \partial_j \int_{S^2} \rho \varepsilon_{imk} v^k l^m l^{i_1} \cdots l^{i_k} \dS \right) - \\ 
	 \int_{S^2} \left\{  \varepsilon_{ijk} v^k q^\mu \partial_\mu \left( l^{i_1} \cdots l^{i_k} \right) - \varepsilon_{ijk} \rho l^k \vartheta^\mu \partial_\mu \left( l^{i_1} \cdots l^{i_k} \right) \right\} \dS.
\end{multline}
\end{proposition}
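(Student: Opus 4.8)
The plan is to differentiate the definition of $\mathbfit{K}_{(k)}$ directly and reduce everything to the evolution law \eqref{Eq: dt kappaII} for the SODT. Since $\l^{\otimes k}$ depends only on the angular coordinates and is therefore time independent, I would first write
\[
\partial_t \mathbfit{K}_{(k)} = \int_{S^2} \left( \partial_t \kappaII \right) \otimes \l^{\otimes k} = - \int_{S^2} \left( \dex \i_\V \kappaII \right) \otimes \l^{\otimes k},
\]
using $\partial_t \kappaII = - \dex \i_\V \kappaII$. The whole task is then to move the exterior derivative outside the tensor product with $\l^{\otimes k}$ and to identify the two resulting contributions with the two groups of terms in the claimed formula.

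The tool for this is the product rule \eqref{Eq: exterior VVDF} for the exterior derivative on vector valued differential forms. Applied to the decomposable form $\left( \i_\V \kappaII \right) \otimes \l^{\otimes k}$, in which $\i_\V \kappaII$ is a $3$-form so that $(-1)^p = -1$, it rearranges to
\[
\left( \dex \i_\V \kappaII \right) \otimes \l^{\otimes k} = \dex \left( \left( \i_\V \kappaII \right) \otimes \l^{\otimes k} \right) + \left( \i_\V \kappaII \right) \wedge \grad \l^{\otimes k}.
\]
For the first term I would use that the basal projection commutes with $\dex$ (Section \ref{Sec: Vector valued}), so that $\int_{S^2} \dex \left( \left( \i_\V \kappaII \right) \otimes \l^{\otimes k} \right) = \dex \int_{S^2} \left( \i_\V \kappaII \right) \otimes \l^{\otimes k}$. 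The basal projection annihilates every piece of $\i_\V \kappaII$ that does not carry two vertical legs, leaving exactly the higher dimensional flux already computed in \eqref{Eq: dt beta}, now decorated by $\l^{\otimes k}$, whose coefficient is the spatial one-form with components $\int_{S^2} \rho \varepsilon_{imj} v^j l^m l^{i_1} \cdots l^{i_k} \dS$. Taking its exterior derivative and reading off the antisymmetric $2$-form components (via the convention $\tensor{K}{_{ij}} = \varepsilon_{ijl} \rho^l$) reproduces the curl-like first group $-\left( \partial_i \int \rho \varepsilon_{jmk} v^k l^m (\cdots) \dS - \partial_j \int \rho \varepsilon_{imk} v^k l^m (\cdots) \dS \right)$.

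The remaining work is the source term $-\int_{S^2} \left( \i_\V \kappaII \right) \wedge \grad \l^{\otimes k}$. Here I would insert the explicit expressions \eqref{Eq: kappaII} for $\kappaII$ and \eqref{Eq: V} for $\V = v^i \bpartial_i + \vartheta^\mu \bpartial_\mu$, expand $\i_\V \kappaII$ by the graded Leibniz rule for $\i$, and keep only the terms carrying exactly one vertical leg, since $\grad \l^{\otimes k} = \partial_\mu ( l^{i_1} \cdots l^{i_k} ) \dex \mu \otimes \bpartial_{i_1} \otimes \cdots$ supplies one further vertical leg while the $S^2$ integration demands two. Exactly two contractions survive: the $\vartheta^\mu \bpartial_\mu$ contraction of the density part of $\kappaII$ and the $v^i \bpartial_i$ contraction of the curvature part (the $\i_\mathbf{v}$ density term already fed the flux, and the $\i_{\thetabf}$ curvature term becomes a pure volume form that the projection kills). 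Wedging each survivor with $\grad \l^{\otimes k}$ collapses the two angular legs to the solid angle element and produces the weights $\vartheta^\mu \partial_\mu ( l^{i_1} \cdots l^{i_k} )$ and $q^\mu \partial_\mu ( l^{i_1} \cdots l^{i_k} )$ multiplying the spatial $2$-forms $\i_{\rho \l} \dV$ and $\i_\v \dV$; extracting components yields the second group $- \int_{S^2} \{ \varepsilon_{ijk} v^k q^\mu \partial_\mu (\cdots) - \varepsilon_{ijk} \rho l^k \vartheta^\mu \partial_\mu (\cdots) \} \dS$. The same computation with $S^1$ in place of $S^2$ gives the pure-glide analogue verbatim.

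I expect the main obstacle to be precisely this last bookkeeping: correctly tracking the signs generated by the graded interior multiplication, verifying which contraction terms carry one versus two vertical legs, and matching the resulting antisymmetric arrays to the component convention for $\tensor{K}{_{ij}^{i_1 \cdots i_k}}$ (including the factor hidden in $\rho^i = \tfrac12 \varepsilon^{ijk} \tensor{K}{_{jk}}$). None of these steps is conceptually deep, but a single misplaced sign or an overlooked vertical-leg count would spoil the match; everything else follows mechanically from \eqref{Eq: dt kappaII}, \eqref{Eq: exterior VVDF}, and the commutation of the basal projection with $\dex$.
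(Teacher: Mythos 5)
Your proposal is correct and follows essentially the same route as the paper: differentiate under the integral, apply the product rule \eqref{Eq: exterior VVDF} to $\left(\i_\V \kappaII\right) \otimes \l^{\otimes k}$, pull $\dex$ through the basal projection for the flux term, and identify the surviving contractions in $\left(\i_\V \kappaII\right) \wedge \nabla \l^{\otimes k}$ with the source terms. The only cosmetic difference is that the paper evaluates that last wedge via a second application of the graded Leibniz rule for $\i_\V$ (using $\kappaII \wedge \nabla \l^{\otimes k} = q^\mu \partial_\mu \l^{\otimes k}\, \dVS$), whereas you do the equivalent bookkeeping by direct expansion and vertical-leg counting; your identification of the two surviving contractions is correct.
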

\begin{proof}
The derivation is straight forward in symbolic notation: 
\begin{align}
	\partial_t \mathbfit{K}_{(k)} =& - \int_{S^2} \dex \i_\Vel \kappaII \otimes \l^{\otimes k}  \nonumber \\
		=& - \int_{S^2} \left[ \dex \left( \i_\Vel \kappaII \otimes \l^{\otimes k} \right) - (-1)^3\i_\Vel \kappaII \wedge \nabla \l^{\otimes k} \right] \nonumber \\
		=& - \dex \int_{S^2} \left( \i_\Vel \kappaII \otimes \l^{\otimes k} \right)  - \int_{S^2} \left\{ \i_\Vel \left[ \kappaII \wedge \nabla \l^{\otimes k} \right] - (-1)^4 \kappaII \wedge \i_\Vel \nabla  \l^{\otimes k} \right\} \nonumber \\
		=& - \dex  \int_{S^2} \left( \i_\Vel \i_\R \dVS \otimes \l^{\otimes k} \right)  - \int_{S^2} \left\{ \i_\Vel \dVS \otimes q^\mu \partial_\mu \l^{\otimes k}  - \i_\R \dVS \otimes \vartheta^\mu \partial_\mu  \l^{\otimes k} \right\}
\end{align}
With regard to the last step we note that
\begin{equation}
	\kappaII \wedge \nabla \l^{\otimes k} = \left( -1 \right)^{(4\cdot1)} \nabla \l^{\otimes k} \wedge \kappaII = \R \cdot \nabla \l^{\otimes k} \dVS = q^\mu \partial_\mu \l^{\otimes k} \dVS.
\end{equation}
The translation into index notation follows from the calculus of differential forms.
\end{proof}
By employing the relation $ \rho^{i_1 \cdots i_k} = 0.5 \varepsilon^{i_1 ij} \tensor{K}{_{ij} ^{i_2 \cdots i_k}} $ we obtain the alternative evolution equation for the reducible tensor coefficients
\begin{align}
	\partial_t \rho^{i_1 \cdots i_k}		=& - \varepsilon^{i_1ij} \partial_i \int_{S^2} \rho  \varepsilon_{jmk} v^k l^m l^{i_2} \cdots l^{i_k} \dS  - \int_{S^2} \left(  v^{i_1} q^\mu \partial_\mu \left( l^{i_2} \cdots l^{i_k} \right)  -   \rho l^{i_1} \vartheta^\mu \partial_\mu \left( l^{i_2} \cdots l^{i_k} \right)  \right) \dS,
\end{align}
which holds for $ k \geq 1 $. From this we read the symbolic version
\begin{align}
	\partial_t \rhobf_{(k)}		=& - \nabla \times \int_{S^2} \rho  \l \times \v \otimes \l^{\otimes (k-1)} \dS  - \int_{S^2} \left( \v \otimes  \R \cdot \nabla \l^{\otimes (k-1)} - \rho \l \otimes \V \cdot \nabla \l^{\otimes (k-1)}  \right) \dS.
\end{align}
We note that in the special case $ k = 1 $ only the curl part is non-zero and the equation recovers Eq.\ \eqref{Eq: evolution kappa intro}. The case $ k = 0 $ cannot be covered by this derivation.

The evolution of the coefficients of the tensor valued differential forms in the case of glide only is subject of
\begin{proposition}
In the case of glide only the evolution for the components of the tensors introduced in \emph{Definition \ref{Reducible differential form expansion}} is given by
\begin{multline}
	\partial_t \tensor{K}{_{ij} ^{i_1 \cdots i_k}}		= - \left( \partial_i \int_{S^1} \rho \varepsilon_{jmk} v^k l^m l^{i_1} \cdots l^{i_k} \dS - \partial_j \int_{S^1} \rho \varepsilon_{imk} v^k l^m l^{i_1} \cdots l^{i_k} \dS \right) - \\ 
 \int_{S^1} \left\{  \varepsilon_{ijk} v^k q \partial_\varphi \left( l^{i_1} \cdots l^{i_k} \right) - \varepsilon_{ijk} \rho l^k \vartheta \partial_\varphi  l^{i_1} \cdots l^{i_k} \right\} \dS,
\end{multline}
\end{proposition}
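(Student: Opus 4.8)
The statement to prove is the glide-only version of the evolution equation for the tensor-valued differential forms $\mathbfit{K}_{(k)}$. The strategy is to mimic exactly the proof just given for the general case, replacing every integration over $S^2$ by one over $S^1$ and every directional-index contraction $q^\mu\partial_\mu(\cdots)$ by its single-angle counterpart $q\,\partial_\varphi(\cdots)$. The backbone is the higher-dimensional evolution law Eq.\ \eqref{Eq: dt kappaII}, $\partial_t\kappaII=-\dex\i_\V\kappaII$, together with the product rule Eq.\ \eqref{Eq: exterior VVDF} for the exterior derivative on tensor-valued differential forms and the fact (established in Section \ref{Sec: Vector valued}) that the basal projection $\int_{S^1}(\cdot)$ commutes with $\dex$.

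First I would write $\partial_t\mathbfit{K}_{(k)}=\int_{S^1}\partial_t\kappaII\otimes\l^{\otimes k}=-\int_{S^1}\dex\i_\V\kappaII\otimes\l^{\otimes k}$, using that $\l^{\otimes k}$ is time-independent. Next I would apply the product rule \eqref{Eq: exterior VVDF} to rewrite $\dex(\i_\V\kappaII\otimes\l^{\otimes k})$, so that
\begin{equation}
\dex\i_\V\kappaII\otimes\l^{\otimes k}=\dex\left(\i_\V\kappaII\otimes\l^{\otimes k}\right)-(-1)^{p}\,\i_\V\kappaII\wedge\nabla\l^{\otimes k},
\end{equation}
where now $\i_\V\kappaII$ is a $2$-form on $M\times S^1$ (since $\kappaII$ is a $3$-form here, as opposed to the $4$-form in the general case), giving the sign $(-1)^2$. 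The first term, after basal projection, becomes a total exterior derivative $\dex\int_{S^1}(\i_\V\i_\R\dVS\otimes\l^{\otimes k})$, which produces the antisymmetrized spatial-divergence (curl-type) term involving $\rho\,\varepsilon_{jmk}v^k l^m$. The second term I would simplify exactly as in the general proof via $\kappaII\wedge\nabla\l^{\otimes k}=\R\cdot\nabla\l^{\otimes k}\,\dVS=q\,\partial_\varphi\l^{\otimes k}\,\dVS$, and then use $\i_\V$ together with the decomposition $\V=\v+\thetabf$ to split it into the curvature-flux term $\varepsilon_{ijk}v^k q\,\partial_\varphi(\cdots)$ and the rotation term $\varepsilon_{ijk}\rho\,l^k\vartheta\,\partial_\varphi(\cdots)$.

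The only genuine bookkeeping is translating the abstract symbolic expression into the component form asserted in the statement; this rests on the relation $\tensor{K}{_{ij}^{i_1\cdots i_k}}=\varepsilon_{ijl}\rho^{l i_1\cdots i_k}$ and on the calculus of interior multiplication applied to $\dVS=dV\wedge\dex\phi$ in the glide-only metric, where $g_{\varphi\varphi}=1$ and the sign convention $v=-v^\varphi$, $\vartheta=-\nabla_\L v$ is in force. I expect the main obstacle — such as it is — to be sign and orientation consistency: the lower differential-form degree in the planar case changes the Koszul sign $(-1)^p$ relative to the general proof, and the glide-only sign convention for the scalar velocity must be tracked carefully so that the $v\,q$ coupling carries the correct sign. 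Since all the structural ingredients are identical to the already-proven general case, the cleanest presentation is to state that the derivation proceeds \emph{mutatis mutandis} as before, merely with $S^2\to S^1$ and $q^\mu\partial_\mu\to q\,\partial_\varphi$, and to exhibit only the one symbolic computation above as confirmation.
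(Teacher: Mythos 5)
Your overall route is exactly the paper's: project the higher-dimensional law $\partial_t\kappaII=-\dex\i_\V\kappaII$ against $\l^{\otimes k}$, apply the product rule \eqref{Eq: exterior VVDF}, use the commutation of the basal projection with $\dex$, and repeat the general-case computation with $S^2\to S^1$. You also correctly identify the first Koszul sign as $(-1)^2$, since $\i_\V\kappaII$ is now a $2$-form. However, the one identity you actually commit to, $\kappaII\wedge\nabla\l^{\otimes k}=\R\cdot\nabla\l^{\otimes k}\,\dVS=q\,\partial_\varphi\l^{\otimes k}\,\dVS$, is the general-case version and is wrong here: in the glide-only setting $\kappaII$ is a $3$-form, so commuting it past the $1$-form $\nabla\l^{\otimes k}$ costs $(-1)^{3\cdot 1}=-1$, and the correct identity is $\kappaII\wedge\nabla\l^{\otimes k}=-\R\cdot\nabla\l^{\otimes k}\,\dVS=-q\,\partial_\varphi\l^{\otimes k}\,\dVS$. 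This is precisely the flip the paper's proof is built around (its opening sentence is that ``a few signs get flipped in comparison to the general case''), and carrying your version through would reverse the sign of the curvature term $\varepsilon_{ijk}v^kq\,\partial_\varphi\left(l^{i_1}\cdots l^{i_k}\right)$ in the final formula. A second sign of the same origin hides in the interior-multiplication product rule, $\i_\V\left(\kappaII\wedge\nabla\l^{\otimes k}\right)=\i_\V\kappaII\wedge\nabla\l^{\otimes k}+(-1)^3\,\kappaII\wedge\i_\V\nabla\l^{\otimes k}$ (versus $(-1)^4$ in the general case), which controls the sign of the $\rho\,l^k\vartheta$ term; you do not address it. You rightly flag sign and orientation consistency as the main obstacle, but since tracking exactly these two degree-dependent signs is the entire content of this proof beyond the already-established general case, exhibiting the commutation identity with the unflipped sign is a genuine gap rather than a cosmetic slip.
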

\begin{proof}
A few signs get flipped in comparison to the general case: 
\begin{align}
	\partial_t \mathbfit{K}_{(k)} =& - \int_{S^2} \dex \i_\Vel \kappaII \otimes \l^{\otimes k} \nonumber \\
		=& - \int_{S^2} \left[ \dex \left( \i_\Vel \kappaII \otimes \l^{\otimes k} \right) - (-1)^2\i_\Vel \kappaII \wedge \nabla \l^{\otimes k} \right] \nonumber \\
		=& - \dex \int_{S^2} \left( \i_\Vel \kappaII \otimes \l^{\otimes k} \right)  + \int_{S^2} \left\{ \i_\Vel \left[ \kappaII \wedge \nabla \l^{\otimes k} \right] - (-1)^3 \kappaII \wedge \i_\Vel \nabla  \l^{\otimes k} \right\} \nonumber \\
		=& - \dex  \int_{S^1} \left( \i_\Vel \i_\R \dVS \otimes \l^{\otimes k} \right)  - \int_{S^1} \left\{ \i_\Vel \dVS \otimes q \partial_\varphi \l^{\otimes k} -  \i_\R \dVS \otimes \vartheta \partial_\varphi  \l^{\otimes k}  \right\} 
\end{align}
With regard to the last step we note that
\begin{equation}
	\kappaII \wedge \nabla \l^{\otimes k} = \left( -1 \right)^{(3\cdot1)} \nabla \l^{\otimes k} \wedge \kappaII = - \R \cdot \nabla \l^{\otimes k} \dVS = - q \partial_\varphi \l^{\otimes k} \dVS,
\end{equation}
which finishes the proof.
\end{proof}
We again translate this by employing $ \rho^{i_1 \cdots i_k} = 0.5 \varepsilon^{i_1 ij} \tensor{K}{_{ij} ^{i_2 \cdots i_k}} $ into the alternative representation for the evolution of the coefficients $ \rho^{i_1 \cdots i_k} $ for $ k \geq 1 $ as
\begin{align} \label{Eq: evolution rho from K}
	\partial_t \rho^{i_1 \cdots i_k}		=& - \varepsilon^{i_1ij} \partial_i \int_{S^1} \rho  \varepsilon_{jmk} v^k l^m l^{i_2} \cdots l^{i_k} \dS  - \int_{S^1} \left(  v^{i_1} q \partial_\varphi \left( l^{i_2} \cdots l^{i_k} \right)  -   \rho l^{i_1} \vartheta \partial_\varphi \left( l^{i_2} \cdots l^{i_k} \right)  \right) \dS.
\end{align}
For the symbolic notation we note that $ \l \times \v = - v \n $ with the glide plane normal $ \n $ and we obtain
\begin{align}
	\partial_t \rhobf_{(k)}		=& \nabla \times \int_{S^1} \rho  v \n \otimes \l^{\otimes (k-1)} \dS  - \int_{S^1} \left( \v \otimes  \R \cdot \nabla \l^{\otimes (k-1)} - \rho \l \otimes \V \cdot \nabla \l^{\otimes (k-1)}  \right) \dS, \quad k \geq 1. 
\end{align} 
The evolution equations presented in this section actually define an infinite hierarchy of equations: that is, the evolution of the $k$-th order tensors typically involves information from the higher order tensors. This means that for practical purposes we did not yet improve the situation of evolving the dislocation state as compared to solving the higher dimensional theory. In order to arrive at a manageable theory one seeks to work with only a few low order tensors. This requires closure assumptions in the sense that the unknown higher order tensors appearing in the evolution equation of the regarded tensors need to be approximated using information of the lower order tensors. However, before dealing with the closure assumptions we need to make assumptions or rather approximations for the orientation dependent velocity $ \v $ and their lift $ \V = \v + \mathbfit\vartheta $.
\section{Expansion of the velocity} \label{Sec: Velocity expansion}
The lifted velocity $ \V $ is an orientation dependent vector field on the configuration space $ U = M \times S^2 $. It is composed of a spatial velocity $ \v $ and the rotational velocity vector $ \mathbfit\vartheta $. Both parts are vectors orthogonal to the canonical line direction $ \l $. Therefore, both can be expanded in terms of the vector spherical harmonics $ \Psibf_\km $ and $ \Phibf_\km $ or the basis tensors $ \hat{\mathbfit\Gamma}_{(k)} $ and $ \hat{\mathbfit\Gamma}^*_{(k)} $. Note, however, that for the expansion of $ \v $ we need to define horizontal versions of theses fields, which we introduce as
\begin{align}
	\Psibf_\km^\mathrm{h} &= \frac{1}{\sqrt{k\left(k+1\right)} } g^{\mu \nu} \partial_\nu Y_\km \e_{\mu} \\
	\Phibf_\km^\mathrm{h} &= \frac{1}{\sqrt{k\left(k+1\right)} } \varepsilon^{\mu \nu} \partial_\nu Y_\km \e_{\mu}
\end{align}
in case of the vector spherical harmonics and as
\begin{align}
	\hat{\mathbfit\Gamma}_{(k)}^\mathrm{h} &= \frac{1}{\sqrt{k(k+1)}} g^{\mu \nu} \partial_\nu \lhat^{i_1 \cdots i_k} \e_{\mu} \otimes \bpartial_{i_1} \otimes \cdots \otimes \bpartial_{i_k} \\ 
	\hat{\mathbfit\Gamma}^{\mathrm{h}*}_{(k)} &= \frac{1}{\sqrt{k(k+1)}} \varepsilon^{\mu \nu} \partial_\nu \lhat^{i_1 \cdots i_k} \e_{\mu} \otimes \bpartial_{i_1} \otimes \cdots \otimes \bpartial_{i_k}.
\end{align} 
for the base tensors. We can then give the expansion of $ \V $ which reads
\begin{equation} \label{Eq: expansion V}
	\V =\sum_{k=0}^\infty\sum_{m=-k}^k\left(v_\km \Psibf_\km^\mathrm{h}+v^*_\km \Phibf_\km^\mathrm{h}+\vartheta_\km \Psibf_\km+\vartheta^*_\km \Phibf_\km\right),
\end{equation}
where the coefficients derive as
\begin{align}
	v_\km &= \int_{S^2} \V \cdot \Psibf_\km^\mathrm{h} \dS, \qquad	v^*_\km = \int_{S^2} \V \cdot \Phibf_\km^\mathrm{h} \dS \\
	\vartheta_\km &= \int_{S^2} \R \cdot \Psibf_\km \dS, \qquad \vartheta^*_\km = \int_{S^2} \R \cdot \Phibf_\km \dS.
\end{align}
As a tensor expansion we find
\begin{multline}
	\V(\l) = \frac{1}{4\mathrm{\pi}} \left[ \sum_{k=1}^\infty  \frac{\left(2k+1\right)!!}{k!} \left( \hat{v}_{i_1 \cdots  i_k} \hat{\Gamma}^{\mathrm{h} \, i_1 \cdots i_k \mu } \e_\mu  + \hat{v}^{*}_{i_1 \cdots  i_k} \hat{\Gamma}^{\mathrm{h}* \, i_1 \cdots i_k \mu } \e_\mu + \right. \right. \\  \left. \left. \hat{\vartheta}_{i_1 \cdots  i_k} \hat{\Gamma}^{i_1 \cdots i_k \mu } \bpartial_\mu  + \hat{\vartheta}^{*}_{i_1 \cdots  i_k} \hat{\Gamma}^{*i_1 \cdots i_k \mu } \bpartial_\mu \right) \right].
\end{multline}
with the tensor coefficients obtained from
\begin{align}
	\hat{\v}_{(k)} &=  \int_{S^2} \R \cdot \hat{\mathbfit\Gamma}_{(k)}^\mathrm{h} \dS, \qquad \hat{\v}^{*}_{(k)} = \int_{S^2} \R \cdot \hat{\mathbfit\Gamma}^{\mathrm{h}*}_{(k)} \dS \\
	\hat{\mathbfit\vartheta}_{(k)} &=  \int_{S^2} \R \cdot \hat{\mathbfit\Gamma}_{(k)} \dS, \qquad \hat{\mathbfit\vartheta}^{*}_{(k)} = \int_{S^2} \R \cdot \hat{\mathbfit\Gamma}^*_{(k)} \dS. 
\end{align} 

So far we derived independent expansions of $ \v $ and $ \mathbfit\vartheta $. However the rotational velocity $ \mathbfit\vartheta $ derives from $ \v $ as 
\begin{equation}
   \thetabf_\mathrm{h} = \left( \nabla_\L \v \right)^\perp= \nabla_\L \v - \g \left(  \nabla_\L \v, \l \right) \l,
\end{equation}
such that we may define an alternative expansion of $  \mathbfit\vartheta $ from the expansion of $ \v $ as
\begin{equation} \label{Eq: expansion theta alternative}
   \thetabf_\mathrm{h} = \sum_{k=1}^\infty \thetabf^\mathrm{h}_{(k)} = \sum_{k=1}^\infty \frac{\left(2k+1\right)!!}{k!}  \left[ \nabla_\L \left( \hat{v}_{i_1 \cdots  i_k} \hat{\Gamma}^{\mathrm{h} \, i_1 \cdots i_k \mu } \e_\mu  + \hat{v}^{*}_{i_1 \cdots  i_k} \hat{\Gamma}^{\mathrm{h}* \, i_1 \cdots i_k \mu } \e_\mu \right)^\perp \right]. 
\end{equation}
The exact relation between this series expansion and $ \hat\thetabf_{(k)} $ will have to be established in future work. But it seems natural to us that when working with a truncated expansion of the velocity, the latter expansion ensures consistency when considering the same number of elements in the expansion of $ \thetabf $. In the following we will only use expansion Eq.\ \eqref{Eq: expansion theta alternative} for $\thetabf$.

We conclude this section by noting that the reducible expansion of the velocity in the general case and the treatment of the case of only conservative dislocation motion are obvious from the definitions introduced so far. We will therefore not go in any detail in this regard.  

\section{Closure approximations} \label{Sec: Closure approximations}
As noted at the end of the last but one Section, a useful plasticity theory can only be achieved if the tensor expansion is truncated at low order. Such a truncation will always involve two distinct decisions: the first decision is to which level of detail the dislocation velocity has to be modeled; the second decision is which is the highest order dislocation density tensor to be considered. The angular dependent dislocation velocity will in general have a complex and possibly non-linear dependence on the current dislocation state. To model this relation is a very challenging part of continuum dislocation modeling which we will only briefly review in the Discussion of this paper. In any case, the decision for truncating the velocity expansion involves the crystal structure and possibly information on the microstructure (part of which is the dislocation density description). The crystal structure (and temperature) will for example tell us whether it makes sense to consider a nearly isotropic velocity (in most fcc materials) or a markedly different velocity for dislocations of different types (screw and edges) as observed, e.g., in bcc materials. In the first case it may be sensible to only consider the zeroth order terms of the velocity (at least in the pure glide case) while in the bcc case one needs to include the second order velocity tensor. We note that the velocity will typically be symmetric (dislocations of opposite direction move with the same velocity in opposite directions) such that only even order velocity tensors need to be considered. In the following we will only discuss closure assumptions for the pure glide case. Meaningful assumptions for non-conservative dislocation motion remain a topic of future research.
\subsection{Preliminaries in planar dislocation motion} \label{Sec: Preliminaries}
We begin with a few notations specific to the case of planar dislocation motion in a plane perpendicular to the unit normal $ \n $. Because the dislocation velocity $ \v $ is always perpendicular to the dislocation line direction $ \l $ and lies within the glide plane we have
\begin{equation} 
	\v = - v \n \times \l = - v \e_\varphi = - v \partial_\varphi \l.
\end{equation}
In order to shorten the index notation we introduce the tensor $ \varepsilon_{ij} = \varepsilon_{ikj} n^k $ such that
\begin{equation}
	v^i = - v \tensor{\varepsilon}{^{i}_{kj}} n^k l^j = - v \tensor{\varepsilon}{^{i}_{j}} l^j.
\end{equation}
Furthermore, we note that we have
\begin{equation}
	\varepsilon_{jmk} v^m l^k = v n_j,
\end{equation}
such that we rewrite the evolution equations of Proposition \ref{Prop: evolution reducible tensors (pure glide)} (using variant Eq.\ \eqref{Eq: evolution rho from K} for the expansion of $ \rho$) for the reducible tensor coefficients in pure glide  as
\begin{align} \label{Eq: evolution reducible pure glide} 
	\partial_t \rho_{(0)}  &=  - \partial_i \int_{S^2}  - v \rho  \tensor{\varepsilon}{^{i}_{j}} l^j \dS + \int_{S^1} v q \dS \\
	\partial_t \rho^{i_1 \cdots i_k}		&= \varepsilon^{i_1ij} \partial_i \int_{S^1} v \rho n_j l^{i_2} \cdots l^{i_k} \dS  - \int_{S^1} \left(  - vq \tensor{\varepsilon}{^{i_1}_{j}} l^j \partial_\varphi \left( l^{i_2} \cdots l^{i_k} \right)  -   \rho \vartheta l^{i_1} \partial_\varphi \left( l^{i_2} \cdots l^{i_k} \right)  \right) \dS, \qquad k \geq 1. \label{Eq: evolution rho i1...ik}\\
	\partial_t q_{(0)} &= -  \partial_i \int_{S^1} \left( - v q \tensor{\varepsilon}{^{i}_{j}} l^j \dS - \rho \vartheta l^i \right) \dS \\
	\partial_t q^{i_1 \cdots i_k} &= - \frac{1}{k} \partial_i \int_{S^1} \left( - v q \tensor{\varepsilon}{^{i}_{j}} l^j \partial_\varphi \left( l^{i_1} \cdots l^{i_k} \right) - \rho \vartheta l^i \partial_\varphi \left( l^{i_1} \cdots l^{i_k} \right)  \right) \dS, \qquad k \geq 1. 
\end{align}
By derivation the right hand side of the evolution equation (\ref{Eq: evolution rho i1...ik}) is symmetric in all indices although this is not true for the single terms. But this means, that the equation remains valid when we perform a symmetrizing operation on the right hand side, which applies to each term. It remains valid as well, if we insert in the symmetrizing operation a non-symmetric expression with the same symmetric part. This is what we use when we now substitute
\begin{equation}
	\partial_\varphi \left( l^{i_1} \cdots l^{i_k} \right) = k \left( \partial_\varphi l^{i_1} l^{i_2} \cdots l^{i_k} \right)_{\mathrm{sym}} = k \left( \tensor{\varepsilon}{^{i_1}_{j}} l^j l^{i_2} \cdots l^{i_k} \right)_{\mathrm{sym}},
\end{equation}
because $ \tensor{\varepsilon}{^{i}_{j}} l^j = \partial_\varphi l^i $ this provides us with the following form of the evolution equations (the zeroth order terms remain unaltered)
\begin{align}
	\partial_t \rho^{i_1 \cdots i_k}		&= - \left[ \varepsilon^{i_1i} \partial_i \int_{S^1} v \rho l^{i_2} \cdots l^{i_k} \dS  - (k-1) \int_{S^1} \left(  - vq \tensor{\varepsilon}{^{i_1}_{j}} l^j \tensor{\varepsilon}{^{i_2}_{m}} l^m l^{i_3} \cdots l^{i_k}  - \right. \right. \label{Eq: evolution reducible rotation} \nonumber \\ &{} \left. \left. \qquad \qquad \qquad \qquad \qquad \qquad \qquad \qquad \qquad \qquad \qquad \rho \vartheta l^{i_1} \tensor{\varepsilon}{^{i_2}_{m}} l^m l^{i_3} \cdots l^{i_k}  \right) \dS \right]_{\mathrm{sym}} \\
	\partial_t q^{i_1 \cdots i_k} &= \left[ - \partial_i \int_{S^1} \left(  - v q \tensor{\varepsilon}{^{i}_{j}} l^j \tensor{\varepsilon}{^{i_1}_{m}} l^m l^{i_2} \cdots l^{i_k} - \rho \vartheta l^i \tensor{\varepsilon}{^{i_1}_{m}} l^m l^{i_2} \cdots l^{i_k}  \right) \dS \right]_{\mathrm{sym}}
\end{align}
Note that the normal $ \n $ is of course independent of orientation and can be pulled out of the integrals. In doing so we additionally introduced the tensor $ \varepsilon^{ij} = \varepsilon^{ikj} n_k $. With the latter form of the evolution equations we will be easily able to recognize the higher order tensors appearing in the hierarchy of equations under simplifying assumptions on the dislocation velocity.

For completeness we also provide the evolution equations for the density tensors as obtained from Proposition \ref{Prop: evolution reducible tensors (pure glide)}. We will discuss why these equations are less well suited for closing the hierarchy at low order. Given the current preparation for glide only we find from Proposition \ref{Prop: evolution reducible tensors (pure glide)}
\begin{equation} \label{Eq: evolution reducible divergence} 
	\partial_t \rho^{i_1 \cdots i_k}  =  -  \left[ \partial_i \int_{S^1}  v \rho \tensor{\varepsilon}{^{i}_{j}} l^j l^{i_1} \cdots l^{i_k} \dS - \int_{S^1} \left( k \vartheta \rho \tensor{\varepsilon}{^{i_1}_{j}} l^j l^{i_2} \cdots l^{i_k} + v q l^{i_1} \cdots l^{i_k} \right) \dS \right]_{\mathrm{sym}}
\end{equation}
We note that in the divergence term of the evolution of the $k$-th order tensor the line direction $ \l $ appears $ k+1 $ times while in the corresponding rotation term of Eq.\ \eqref{Eq: evolution reducible rotation} $ \l $ only appears $ k-1 $ times. We may therefore anticipate that all terms in Eq.\ \eqref{Eq: evolution reducible divergence} will require information on higher order tensors and therefore are subject to closure approximations regardless of the assumptions made for the dislocation velocity.
\subsection{Isotropic dislocation velocity} \label{Isotropic}
We begin with the most restrictive assumption on the velocity, namely that the magnitude of the dislocation velocity does not depend on the segment orientation. This is for example employed in DDD simulations of fcc-materials (cf.\ \citet{weygand_etal02}) and underlies most other dislocation density based models as well. We note that this assumption only makes sense in the case of planar dislocation motion. Because now $ v $ does not depend on the orientation $ \varphi$  we can pull $ v $ out of all the integrals in the evolution equations derived in the last subsection. Furthermore, we note that the rotational velocity takes on the simple shape $ \vartheta = - l^i \partial_i v $ and also the partial derivatives $ \partial_i v $ can be pulled out of the integrals. We thus find
\begin{align} \label{Eq: evolution reducible pure glide iso v} 
	\partial_t \rho_{(0)}  &=  \partial_i \left( v \int_{S^2}  \rho  \tensor{\varepsilon}{^{i}_{j}} l^j \dS \right) + v \int_{S^1} q \dS  \\
	\partial_t \rho^{i_1 \cdots i_k}		&= \left[ - \varepsilon^{i_1 i} \partial_i \left( v \int_{S^1} \rho l^{i_2} \cdots l^{i_k} \dS \right) + (k-1) v \int_{S^1} q \tensor{\varepsilon}{^{i_1}_{j}} l^j \tensor{\varepsilon}{^{i_2}_{m}} l^m l^{i_3} \cdots l^{i_k} \dS  - \right.  \nonumber \\ 
	&{} \qquad \qquad \qquad \qquad \qquad \qquad \qquad \qquad \qquad \left. (k-1) \int_{S^1} \rho l^i l^{i_1} \tensor{\varepsilon}{^{i_2}_{m}} l^m l^{i_3} \cdots l^{i_k}  \dS \; \partial_i v \right]_{\mathrm{sym}}  \\
	\partial_t q_{(0)} &=  \partial_i \left( v \int_{S^1} q \tensor{\varepsilon}{^{i}_{j}} l^j \dS - \int_{S^1} \rho l^j l^i \dS \partial_j v \right)  \\
	\partial_t q^{i_1 \cdots i_k} &= \left[ \partial_i \left( v \int_{S^1} q \tensor{\varepsilon}{^{i}_{j}} l^j \tensor{\varepsilon}{^{i_1}_{m}} l^m l^{i_2} \cdots l^{i_k} \dS - \int_{S^1} \rho l^j l^i \tensor{\varepsilon}{^{i_1}_{m}} l^m l^{i_2} \cdots l^{i_k} \dS \; \partial_j v  \right) \right]_{\mathrm{sym}} .
\end{align}
In this form we can recognize the tensors appearing in the integrals and arrive at
\begin{align}  
	\partial_t \rho_{(0)}  &=  \partial_i \left( v \tensor{\varepsilon}{^{i}_{j}} \rho^j \right) + v q_{(0)} \label{Eq: evolution reducible pure glide iso v tensors 1} \\
	\partial_t \rho^{i_1 \cdots i_k}		&= \left[ - \varepsilon^{i_1 i} \partial_i \left( v \rho^{i_2 \cdots i_k} \right)  + (k-1) v Q^{i_1 \cdots i_k}  -  (k-1) \tensor{\varepsilon}{^{i_2}_{m}} \rho^{i_1 m i_3 \cdots i_k i} \;  \partial_i v \right]_{\mathrm{sym}} \label{Eq: evolution reducible pure glide iso v tensors 2} \\
	\partial_t q_{(0)} &=  \partial_i \left( v  q^i - \rho^{ji} \; \partial_j v \right) \label{Eq: evolution reducible pure glide iso v tensors 3} \\
	\partial_t q^{i_1 \cdots i_k} &= \left[ \partial_i \left( v Q^{i_1 \cdots i_k} - \tensor{\varepsilon}{^{i_1}_{m}} \rho^{i m i_2 \cdots i_k j} \; \partial_j v \right) \right]_{\mathrm{sym}} \label{Eq: evolution reducible pure glide iso v tensors 4}
\end{align}
Notably, the tensors $ Q^{i_1 \cdots i_k} $ containing the curvature information on the right hand side are not the curvature tensors $ q^{i_1 \cdots i_k} $ (except for $k=1$) but are defined as
\begin{align} 
	Q^{i_1} &= \int_{S^1} \left(  q \tensor{\varepsilon}{^{i_1}_{j}} l^j \right) \dS = q^{i_1} \label{Eq: Alt Q tensors 1} \\
	Q^{i_1 i_2} &= \int_{S^1} \left(  q \tensor{\varepsilon}{^{i_1}_{j}} l^j \tensor{\varepsilon}{^{i_2}_{m}} l^m \right) \dS \label{Eq: Alt Q tensors 2} \\
	Q^{i_1 \cdots i_k} &= \int_{S^1} \left(  q \tensor{\varepsilon}{^{i_1}_{j}} l^j \tensor{\varepsilon}{^{i_2}_{m}} l^m l^{i_3} \cdots l^{i_k} \right) \dS, \quad k \geq 3. \label{Eq: Alt Q tensors 3}
\end{align}
These tensors will usually be subject to closure assumptions or will have to be evolved separately. In the following we will introduce appropriate closure assumptions. Evolution equations for the tensors $ \Q $ were not yet derived.

We conclude this section with a short summary of the obtained hierarchy of evolution Eqs.\ \eqref{Eq: evolution reducible pure glide iso v tensors 1} -- \eqref{Eq: evolution reducible pure glide iso v tensors 4}. The evolution of the total dislocation density needs information on the first order density tensor and the total scalar curvature. The evolution of a tensor of order $k$ contains a flux term based on the tensor of order $ k-1 $, a curvature tensor of the same order $ k$ and a contraction of the next higher order tensor with the gradient of the velocity. The term with the curvature tensor accounts for line length increase while the one with the gradient of the velocity takes care of dislocation reorientation. The evolution of the curvature tensor of order $ k$ is obtained from the divergence of the alignment tensor of next higher order. It therefore contains a curvature tensor of the next higher order ($k+1$) and the alignment tensor of order $(k+2)$ contracted with the gradient of the velocity. A manageable continuum dislocation dynamics theory requires working with a few low order tensors. To achieve this one needs to express some unknown next higher order tensors with known ones within the evolution equations. 
\subsubsection{Lowest order closure} \label{Sec: Lowest order closure}

The lowest order theory still able to take care of dislocation fluxes and line length increase will contain $ \rho_{(0)} $, $ \kappavec = \rhobf_{(1)} $ and $ q_{(0)} $. The according evolution equations read
\begin{align} 
	\partial_t \rho_{(0)}  &=  \partial_i \left( v \tensor{\varepsilon}{^{i}_{j}} \kappa^j \right) + v q_{(0)} \label{Eq: evol rho iso v} \\
	\partial_t \kappa^{i}	&= -\varepsilon^{i j} \partial_j \left( v \rho \right)  \label{Eq: evol kappa iso v} \\
	\partial_t q_{(0)} &=  \partial_i \left( v  q^i - \rho^{ji} \partial_j v \right). \label{Eq: evol q iso v} 
\end{align}
Closure assumptions are therefore needed for $ \q_{(1)} $ and $ \rhobf_{(2)} $.

Regarding the closure assumptions we first note that we know the spherical part of $ \rhobf_{(2)} $ because the trace of it equals $ \rho_{(0)} $. The traceless part $ \hat\rhobf_{(2)} $, however, is not known and will be constructed from appropriate assumptions. The second oder dislocation density tensor $ \rhobf_{(2)} $ is a symmetric tensor and accordingly has two orthogonal eigendirections within the glide plane, while the third eigendirection is the glide plane normal $\n$ with eigenvalue $0$. In the current approach the net dislocation vector $ \kappavec $ is the only available directional information. We therefor assume the majority of the dislocations to be aligned with the net dislocation density vector. In other words, we assume the eigendirection of $ \rhobf_{(2)} $ to be aligned with $\kappavec$ and the orthogonal direction $ \kappavec^\perp = - \n \times \kappavec $. Note that in the case of only geometrically necessary dislocations (when all dislocation are aligned in one direction) the higher dimensional density is a Dirac delta distribution on the orientation space such that $\rhobf_{(2)} = (\kappavec \otimes \kappavec)/|\kappavec| $. In that case the (in-plane) traceless part is given by
\begin{equation} 
	\hat\rho^{ij} = \frac{1}{\sqrt{ \kappa_m \kappa^m }} \kappa^i \kappa^j - \frac{1}{2} \left(\delta^{ij} - n^i n^j \right) \sqrt{ \kappa_m \kappa^m},
\end{equation}
which may be rewritten as 
\begin{equation} 
	\hat\rho^{ij} = \frac{1}{2\sqrt{\kappa_m \kappa^m }} \left( \kappa^i \kappa^j - \tensor{\varepsilon}{^{i}_{k}} \kappa^k \tensor{\varepsilon}{^{j}_{l}} \kappa^l \right).
\end{equation}
In symbolic notation the last equation reads
\begin{equation} 
	\hat\rhobf_{(2)} = \frac{1}{2|\kappavec|} \left( \kappavec \otimes \kappavec - \kappavec^\perp \otimes \kappavec^\perp \right).
\end{equation}
As the special case of only GND needs to be included in any higher order CDD we take this equation as closure assumption for the traceless part. We summarize that we approximate the \emph{non-traceless} tensor of second order as
\begin{equation} \label{Eq: closure rho2}
	\rhobf_{(2)} := \frac{1}{2|\kappavec|^2} \left[ \left( \rho_{0} + |\kappavec| \right) \kappavec \otimes \kappavec + \left( \rho_{0} - |\kappavec| \right) \kappavec^\perp \otimes \kappavec^\perp \right] .
\end{equation}
We now turn to the closure assumption for the average curvature tensor $ \q_{(1)} $. Again, the only available directional information is the dislocation density vector $ \kappavec$. But the curvature vector is always perpendicular to the dislocation line direction, such that we assume the curvature vector orthogonal to the dislocation density vector. Furthermore, the length of the curvature vector should be given by the length of $ \kappavec$ times the average curvature of the dislocations $ \bar{k} $. Note that on the higher dimensional configuration space we interpret the quotient of $ q $ and $ \rho $ as the orientation dependent dislocation curvature $ k = q / \rho $ \citep{hochrainer_zg07}. Accordingly we define
\begin{equation} 
	\bar{k} := \frac{\int_{S^1}q\dS}{\int_{S^1} \rho \dS } = \frac{q_{(0)}}{\rho_{(0)}}.
\end{equation}
The curvature vector is consequently estimated as
\begin{equation} 
	\q_{(1)} := - \bar{k} \kappavec^\perp.
\end{equation}
In coordinate notation we obtain
\begin{equation} 
	q^i := \frac{q_{(0)}}{\rho_{(0)}} \tensor{\varepsilon}{^{i}_{j}} \kappa^j.
\end{equation}
Again, this assumption is exact in the case of only geometrically necessary dislocations.

Finally, we introduce the two assumptions into the evolution equation of $q_{(0)}$ (\ref{Eq: evol q iso v}) and obtain
\begin{equation} \label{Eq: evol q simplified}
	\partial_t q_{(0)} =  -\divergence \left( v  \bar{k} \kappavec^\perp + \frac{1}{2|\kappavec|^2} \left[ \left( \rho_{0} + |\kappavec| \right) \kappavec \otimes \kappavec + \left( \rho_{0} - |\kappavec| \right) \kappavec^\perp \otimes \kappavec^\perp \right] \cdot \nabla v \right), 
\end{equation}
which we additionally provide in coordinate notation,
\begin{equation} \label{Eq: evol q simplified coord} 
	\partial_t q_{(0)} =  - \partial_i \left( - v  \frac{q_{(0)}}{\rho_{(0)}} \tensor{\varepsilon}{^{i}_{j}} \kappa^j + \frac{1}{ 2\kappa_m \kappa^m } \left[ \left( \rho_{0} + \sqrt{ \kappa_m \kappa^m } \right) \kappa^i \kappa^j + \left( \rho_{0} - \sqrt{ \kappa_m \kappa^m } \right) \tensor{\varepsilon}{^{i}_{k}} \kappa^k \tensor{\varepsilon}{^{j}_{l}} \kappa^l \right] \partial_j v \right).
\end{equation}
With this closure we arrived at a system of equations for the evolution of planar systems of dislocations with isotropic mobility with only three internal variables, that are $ \rho_{(0)} $, $ \kappavec = \rhobf_{(1)} $ and $ q_{(0)} $. The evolution equations care for fluxes of all dislocations and produce dislocation line length (\ref{Eq: evol rho iso v}), are consistent with the classical continuum theory of dislocations (\ref{Eq: evol kappa iso v}) but keep the total number of dislocations fixed (\ref{Eq: evol q iso v}), as $ q_{(0)} $ is a conserved quantity. Because of the conservative form this equation lends itself for implementation with conservative numerical schemes, as for example, finite volume or discontinuous Galerkin methods. Numerical calculations with the current formulation have been presented by \citet{ebrahimi_mh14} and \citet{monavari_zs14}.

Before we move on to the closure at next higher order, we first recover an earlier version of this theory as presented in \citet{hochrainer_etal14}, which was derived in a somewhat more elementary manner. We start from (\ref{Eq: evol q iso v}) and use the fact that $ \q_{(1)} = \divergence \rhobf_{(2)} $, such that employing the product rule we find
\begin{align} \label{Eq: transformation q v iso} 
	\partial_t q_{(0)} &=  \partial_i \left( v  q^i - \rho^{ji} \partial_j v \right)  \nonumber  \\
										 &=  q^i \partial_i v + v \partial_i q^i - \partial_i \rho^{ji} \partial_j v - \rho^{ji} \partial_i \partial_j v \nonumber \\
										 &=   v \partial_i q^i- \rho^{ji} \partial_i \partial_j v 
\end{align}
Introducing now the closure approximation from above we arrive at the symbolic equation
\begin{equation} \label{Eq: evolution qt}
 \partial_t q_{(0)}^\mathrm{old} = - v \divergence\left(\bar{k} \kappavec^\perp \right)  -  \frac{1}{2|\kappavec|^2} \left[ \left( \rho_{0} + |\kappavec| \right) \kappavec \otimes \kappavec + \left( \rho_{0} - |\kappavec| \right) \kappavec^\perp \otimes \kappavec^\perp \right] : \nabla \nabla v,
\end{equation}
which recovers the evolution equation provided in \citet{hochrainer_etal14} in the current notation. Note that $ \nabla \nabla v $ denotes the Hessian of $ v $. The former equations therefore are very closely related to the new ones and employ the same closure approximation, which had been derived with somewhat different arguments before. The essential difference between the new Eq.\ \eqref{Eq: evol q simplified} and the former one, Eq.\ \eqref{Eq: evolution qt}, is that the new one keeps the conservative form also after making the closure approximation. By contrast, due to the separate closure assumptions for $ \q_{(1)} $ and $ \rhobf_{(2)} $ the assumed tensors are no longer related by a the divergence relation employed in calculation Eq.\ \eqref{Eq: transformation q v iso} such that the right hand side of Eq.\ \eqref{Eq: evolution qt} is in general no longer the divergence of a vector. Note that for the case of only geometrically necessary dislocations both approximations are the same.
\subsubsection{Second order closure} \label{Sec: Second order closure}
In the above subsection we closed the hierarchy of evolution equations for an isotropic velocity at the lowest possible order. The resulting theory and its older variant are already surprisingly complete and display many effects not contained in any other dislocation based plasticity theory available in the literature. Nevertheless, some details get lost in the simplified theory as compared to the higher dimensional one \citep{hochrainer_etal14, monavari_zs14}. We will now close the system at next higher order, that is, by additionally following the evolution of the second order dislocation density tensor as obtained from Eq.\ \eqref{Eq: evolution reducible pure glide iso v tensors 2}, 
\begin{equation} \label{Eq: evolution A iso v} 
	\partial_t \rho^{ij}		= \left[ - \varepsilon^{ik} \partial_k \left( v \kappa^{j} \right)  +  v Q ^{i j}  -  \tensor{\varepsilon}{^{j}_{k}} \rho^{i k m} \partial_m v \right]_{\mathrm{sym}}.
\end{equation}
This needs to be seconded by (\ref{Eq: evol kappa iso v}) for the evolution of $ \kappavec = \rhobf_{(1)} $ and (\ref{Eq: evol q iso v}) for $ q_{(0)} $. Note that $ \rho_{(0)} $ is obtained as the trace of $ \rhobf_{(2)} $ and $ \q_{(1)} $ as its divergence such that their evolution does not need to be traced separately. In this case we need closure assumptions for $ \Q_{(2)} $ and $\rhobf_{(3)}$. 

The second order curvature tensor $ \Q_{(2)} $ is a symmetric tensor given through
\begin{equation} \label{Eq: definition q_2 perp} 
	Q^{i j}		= \int_{S^1} q \tensor{\varepsilon}{^{i}_{k}} l^k \tensor{\varepsilon}{^{j}_{m}} l^m \dS.
\end{equation}
This tensor is structurally similar to the second order dislocation density tensor $ \rhobf_{(2)} $ such that we employ an analogous closure assumption as for the latter before. Note that the trace of this tensor is $ q_{(0)} $ and that the vector corresponding to $ \kappavec $ in the closure assumption Eq.\ \eqref{Eq: closure rho2} now is $ \q_{(1)} $. Accordingly we make the closure assumption
\begin{equation} \label{Eq: closure q_2 perp}
  Q^{i j} := \frac{1}{ 2q_m q^m } \left[ \left( q_{0} + \sqrt{ q_m q^m } \right) q^i q^j + \left( q_{0} - \sqrt{ q_m q^m } \right) \tensor{\varepsilon}{^{i}_{k}} q^k \tensor{\varepsilon}{^{j}_{l}} q^l \right].
\end{equation}
From the third order tensor $\rhobf_{(3)}$ we know that its trace is $ \rhobf_{(1)} = \kappavec $. On the other hand we do again know that for the case of only geometrically necessary dislocations we have
\begin{equation} \label{Eq: closure rho3}
  \rhobf_{(3)} := \frac{1}{ |\kappavec|^2} \kappavec \otimes \kappavec \otimes \kappavec,
\end{equation}
which we directly adopt as the closure assumption. The complete set of evolution equations closed at second order for an isotropic velocity is
\begin{align}  
	\partial_t \kappa^{i}	&= -\varepsilon^{i j} \partial_j \left( v \tensor{\rho}{^{k}_{k}} \right),  \label{Eq: evolution second order iso v 1} \\
	\partial_t \rho^{ij} &= \left\lbrace - \varepsilon^{ik} \partial_k \left( v \kappa^{j} \right)  + \frac{v}{ 2q_l q^l } \left[ \left( q_{0} + \sqrt{ q_l q^l } \right) q^i q^j + \left( q_{0} - \sqrt{ q_l q^l } \right) \tensor{\varepsilon}{^{i}_{k}} q^k \tensor{\varepsilon}{^{j}_{l}} q^l \right]  -  \right. \nonumber \\ &{} \qquad \qquad \qquad \qquad \qquad \qquad \qquad \qquad \qquad \qquad \qquad \qquad \qquad \qquad \left. \frac{1}{ \kappa_l \kappa^l } \tensor{\varepsilon}{^{j}_{k}} \kappa_i\kappa_k\kappa_m \partial_m v \right\rbrace_{\mathrm{sym}},  \label{Eq: evolution second order iso v 2} \\
	\partial_t q_{(0)} &=  \partial_i \left( v  q^i - \rho^{ji} \partial_j v \right), \label{Eq: evolution second order iso v 3} \\
	q^i &= \partial_j \rho^{ji}  .
\end{align}
In order to connect to equations for edge and screw dislocations (cf. \citet{arsenlis_etal04}) we now fix the coordinate system such that the 1-direction points in Burgers vector direction while the 2-direction is the positive edge direction. We then find for the components of the second order tensor
\begin{align}  
	\partial_t \rho^{11} &= \partial_2 \left( v \kappa^{1} \right)  + \frac{v}{ 2 } \left[ q_{0} + \frac{1}{\sqrt{ q_l q^l }} \left( q^1 q^1 - q^2 q^2 \right) \right]  +  \frac{ \kappa_1\kappa_2}{ \kappa_l \kappa^l } \left( \kappa_1 \partial_1 v + \kappa_2 \partial_2 v \right) ,  \label{Eq: evolution second order iso v coord 1} \\
	\partial_t \rho^{12} &= \frac{1}{2} \left[ \partial_2 \left( v \kappa^{2} \right) - \partial_1 \left( v \kappa^{1} \right) + \frac{2vq^1 q^2}{\sqrt{ q_l q^l }}  -  \frac{\kappa_1 \kappa_1 -\kappa_2 \kappa_2}{ \kappa_l \kappa^l } \left( \kappa_1 \partial_1 v + \kappa_2 \partial_2 v \right) \right], \label{Eq: evolution second order iso v coord 2} \\
	\partial_t \rho^{22} &=  - \partial_1 \left( v \kappa^{2} \right)  + \frac{v}{ 2 } \left[ q_{0} - \frac{1}{\sqrt{ q_l q^l }} \left( q^1 q^1 - q^2 q^2 \right) \right] -  \frac{ \kappa_1\kappa_2}{ \kappa_l \kappa^l } \left( \kappa_1 \partial_1 v + \kappa_2 \partial_2 v \right). \label{Eq: evolution second order iso v coord 3}   
\end{align}
Equations \eqref{Eq: evolution second order iso v coord 1}--\eqref{Eq: evolution second order iso v coord 3} answer a very important question which had only received very unsatisfying answers in earlier screw--edge based theories. Typically, the fluxes of edge and screw dislocations naturally had the first term in the diagonal elements but otherwise were not coupled aside from the other dislocation type possibly influencing the dislocation velocity. Moreover, no exchange between the types was considered and, if at all, line length changes were modeled by a rectangular dislocation model, where the motion of one type produces line length of the other, see, for example, \citet{arsenlis_etal04}. All this can be handled consistently with Eqs.\ \eqref{Eq: evolution second order iso v coord 1}--\eqref{Eq: evolution second order iso v coord 3}.

We finally discuss the two closure assumptions made in the current second order theory. The closure assumption for the second order curvature tensor, Eq.\ \eqref{Eq: closure q_2 perp}, has a spherical part, which ensures the correct line length increase (i.e., $v q_{(0)}$) represented in the total dislocation density $ \rho = \rho^{11} + \rho^{22} $ obtained as the trace of the second order tensor. The deviatoric part of the tensor enhances the line length increase in the direction of the curvature vector $ \q^{(1)} $ and decreases it in the orthogonal direction. The closure assumption for the third order tensor, Eq.\ \eqref{Eq: closure rho3}, implies a rotation of the main axes of the second order tensor proportional to the gradient of the velocity in the direction of the dislocation density vector $ \kappavec $. Note that the  deviatoric tensor multiplied with this gradient expression is given by $ \kappavec \sotimes \kappavec^\perp $ which has the eigendirections $ \kappavec + \kappavec^\perp $ and $ \kappavec - \kappavec^\perp $. We recall that in the first order closure approximation Eq.\ \eqref{Eq: closure rho2} we assumed the main axes of  the second order tensor to be aligned with $\kappavec$ and $  \kappavec^\perp $, which is exact for pure GND configurations. The eigendirections of the tensor in front of the gradient expression in Eq.\ \eqref{Eq: evolution second order iso v 2} are tilted by $45^\circ$ to these directions. Consequently, this term introduces the maximal change of main axes direction and thus a rotation.

A first proof of principal for the above equations has been provided in \citet{hochrainer14} where we simulated the expansion of a single loop in an inhomogeneous velocity field. A more detailed comparison between the higher dimensional theory, the second order theory and the lowest order closure will be provided elsewhere.
\subsubsection{Elliptic velocity} \label{Sec: Elliptic velocity}
In this subsection we present a brief outlook on what closure assumptions will be needed when considering less trivial assumptions on the angular dependence of the dislocation velocity. In many materials, as e.g.\ in bcc crystals, the mobility of screw and edge dislocations is very different. Translated to the current treatment this results in an anisotropic, that is, angular dependent dislocation velocity $ v \left( \varphi \right) $. A simple approach on modeling such anisotropy is given (cf. \citet{cai_b04}) by
\begin{equation}
	v \left( \varphi \right) = v_{\mathrm{s}} \cos^2 \varphi + v_{\mathrm{e}} \sin^2 \varphi,
\end{equation}    
where $v_{\mathrm{s}}$ and $v_{\mathrm{e}}$ are the velocities of screw and edge dislocations, respectively. This can be equivalently rewritten as
\begin{equation}
	v \left( \varphi \right) = v_{ij} l^i\left(\varphi\right) l^j\left(\varphi\right)
\end{equation}    
with $ v_{11} = v_{\mathrm{s}} $, $ v_{22} = v_{\mathrm{e}} $, and $ v_{12} = v_{21} = 0 $ in the coordinate system used in the case of glide only. In other words, this assumes a representation of the dislocation velocity by its second order tensor in the expansion discussed in Section \ref{Sec: Velocity expansion} with the main axes coinciding with edge and screw directions.

In this case we obtain the rotation velocity as
\begin{equation}
	\vartheta \left( \varphi \right) = - \nabla_{\L} v \left( \varphi \right) = - \nabla_{\l} v_{ij} l^i\left(\varphi\right) l^j\left(\varphi\right) - \frac{q}{\rho} v_{ij} \left[ \tensor{\varepsilon}{^{i}_{k}} l^k\left(\varphi\right) l^j\left(\varphi\right) + l^i\left(\varphi\right) \tensor{\varepsilon}{^{j}_{k}} l^k \left(\varphi\right) \right],
\end{equation}    
For the evolution we therefore obtain
\begin{align} \label{Eq: evolution reducible pure glide elliptic v}
	\partial_t \rho_{(0)}  &=  \partial_i \left( v_{kl} \int_{S^2}  \rho  l^k l^l \tensor{\varepsilon}{^{i}_{j}} l^j \dS \right) + v_{kl} \int_{S^1} q l^k l^l \dS  \\
	\partial_t \rho^{i_1 \cdots i_k}		&= \left[ - \varepsilon^{i_1 i} \partial_i \left( v_{kl} \int_{S^1} \rho l^k l^l l^{i_2} \cdots l^{i_k} \dS \right) + (k-1) v_{kl} \int_{S^1} q l^k l^l \tensor{\varepsilon}{^{i_1}_{i}} l^i \tensor{\varepsilon}{^{i_2}_{m}} l^m l^{i_3} \cdots l^{i_k} \dS  - \right.  \nonumber \\
	&{} \qquad \qquad \left. (k-1) \int_{S^1} \rho l^k l^l l^i l^{i_1} \tensor{\varepsilon}{^{i_2}_{m}} l^m l^{i_3} \cdots l^{i_k} \dS \; \partial_i v_{kl} - \right. \nonumber \\
	&{} \qquad \qquad \qquad \qquad \left. (k-1) v_{kl} \int_{S^1} q  \left( \tensor{\varepsilon}{^{k}_{m}} l^m l^l + l^k \tensor{\varepsilon}{^{l}_{m}} l^m  \right) l^{i_1} \tensor{\varepsilon}{^{i_2}_{n}} l^n l^{i_3} \cdots l^{i_k}  \dS \right]_{\mathrm{sym}}  \\
	\partial_t q_{(0)} &=  \partial_i \left( v_{kl} \int_{S^1} q l^k l^l \tensor{\varepsilon}{^{i}_{j}} l^j \dS - \int_{S^1} \rho l^k l^l l^j l^i \dS \; \partial_j v_{kl} \right)  \\
	\partial_t q^{i_1 \cdots i_k} &= \left[ \partial_i \left( v_{kl} \int_{S^1} q l^k l^l \tensor{\varepsilon}{^{i}_{j}} l^j \tensor{\varepsilon}{^{i_1}_{m}} l^m l^{i_2} \cdots l^{i_k} \dS  - \right. \right.  \nonumber  \\
	&{} \qquad \qquad \left. \left. \int_{S^1} \rho l^k l^l l^j l^i \tensor{\varepsilon}{^{i_1}_{m}} l^m l^{i_2} \cdots l^{i_k} \dS \; \partial_j v_{kl} - \right. \right.  \nonumber  \\
	&{} \qquad \qquad \qquad \qquad \left. \left. v_{kl} \int_{S^1} q  \left( \tensor{\varepsilon}{^{k}_{m}} l^m l^l + l^k \tensor{\varepsilon}{^{l}_{m}} l^m  \right) l^{i} \tensor{\varepsilon}{^{i_1}_{n}} l^n l^{i_2} \cdots l^{i_k}  \dS \right) \right]_{\mathrm{sym}} 
\end{align}
Because these formulae are very cumbersome in the general form we write down the evolution of the first three tensors explicitly as
\begin{align} \label{Eq: evolution reducible pure glide elliptic v 1 and 2}
	\partial_t \rho_{(0)}  &=  \partial_i \left( v_{kl} \tensor{\varepsilon}{^{i}_{j}} \rho^{klj} \right) + v_{kl} \tensor{\varepsilon}{^{k}_{m}}  \tensor{\varepsilon}{^{l}_{n}} Q^{mn} \\
	\partial_t \kappa^i		&= - \varepsilon^{i j} \partial_j \left( v_{kl} \rho^{kl} \right) \\
	\partial_t \rho^{ij}		&= \left[ - \varepsilon^{i m} \partial_m \left( v_{kl} \rho ^{klj} \right) + v_{kl} Q^{ijkl} - \rho^{klmin}\tensor{\varepsilon}{^{j}_{n}} \partial_m v_{kl} - v_{kl}\left( Q^{kjli} + Q^{ljki} \right)  \right]_{\mathrm{sym}},
\end{align}
with the curvature tensors $ \Q $ introduced in Eqs.\ \eqref{Eq: Alt Q tensors 1}--\eqref{Eq: Alt Q tensors 3}. The closure approximations which are needed to close this hierarchy at low order are not obvious and will be subject of further research. We note that a promising approach for deriving general closure approximations from a maximum entropy principle on the orientation space was recently successfully employed for the lowest order closure assumption discussed in Section \ref{Sec: Lowest order closure} \citep{monavari_zs14}.

\section{Discussion and outlook} \label{Sec: Discussion}

In the current paper we transferred the concept of alignment tensors used for example in the theory of liquid crystals \citep{hess75}, polymers \citep{kroeger1998332} and fiber reinforced composites \citep{advani_t87} to the higher dimensional continuum dislocation dynamics theory of \citet{hochrainer_zg07}. In contrast to the cited applications of alignment tensors the dislocation density is described by a vectorial object on the higher dimensional space. This requires a vectorial expansion done in vector spherical harmonics and accordingly defined basis tensors. As another distinction to most other applications of alignment tensors the dislocation case also requires the consideration of alignment tensors of odd degree. From the evolution equation of the higher dimensional theory we derived hierarchies of evolution equations for expansions in vector spherical harmonics, Fourier series (in pure glide), and various irreducible and reducible tensors. These infinite hierarchies have to be terminated at low order by means of suitable closure assumptions to arrive at manageable CDD theories. For these closure assumptions we used low order terminations of an analogous expansion of the higher dimensional dislocation velocity. Considering only the scalar term of the velocity expansion we presented closure assumptions for theories terminated at first or second order for the dislocation density. We sketched what kind of closure would be needed when considering anisotropic dislocation velocities in terms of a second order velocity tensor.

The evolution equations obtained from the multipole expansion solve the basic kinematic problem in averaging dislocation systems \citep{kroener01}. The current theory unifies earlier proposed continuum dislocation density theories. Every so derived CDD contains the pseudo-continuum theory based solely on the Kröner-Nye tensor. The simplified CDD of \citet{hochrainer_etal14} is obtained as the lowest order closure with isotropic velocity and the evolution of the second order dislocation density tensor comprises and clarifies earlier introduced models based on an edge--screw decomposition of the total dislocation density. Above all, the evolution equations for the alignment tensors provide the foundation for developing CDD theories of principally arbitrary resolution of the angular dependence of dislocation state and dislocation velocity.

However, the closure assumptions made in the current paper are rather simplistic linear combinations of low order tensors which are known to hold for systems which only contain GNDs. More sophisticated closure assumptions may, for example, be obtained from a maximum information entropy principle applied to the orientation distribution function of the dislocation density. Subject to the auxiliary condition of reproducing the known low order moments (i.e. alignment tensors) one can derive a positive function on the orientation space which maximizes the information entropy and in turn minimizes the additional amount of information introduced through the closure assumption. Such a closure has recently been applied to the lowest order CDD theory by \citet{monavari_zs14} with promising results when compared to the higher dimensional CDD. However, it is unclear whether this provides a promising route for closing systems at higher order. The reason is that the case treated by \citet{monavari_zs14} could be handled analytically but the maximum entropy approach in general leads to strongly non-linear equations which can only be solve numerically. It is also worth investigating how a maximum entropy approach can be meaningfully transfered to the vectorial case we deal with in dislocation theory.

Obviously, the purely kinematic treatment presented in the current paper does not yet really constitute a continuum theory of dislocations in the sense that also the dynamics would be derived from rigorous averaging. This has so far only been done for simplified systems of parallel straight edge dislocations by \citet{groma_cz03} where pair correlations were obtained from numerical simulations. The finding that the pair correlations are short ranged allowed for the derivation of averaged velocity laws containing information on the current dislocation state. Velocity laws largely motivated from those obtained for parallel edge dislocations were used in \citet{sandfeld_etal11} and \citet{hochrainer_etal14} within the earlier developed simplified theory.  However, obtaining the averaged dynamics of three-dimensional dislocation systems will be a huge challenge. For once, already obtaining and evaluating pair correlations is very demanding and has so far only been attempted in preliminary form \citep{csikor_etal07,deng_e07}. But aside form pair correlations further three-dimensional effects as cross slip, dislocation reactions and the formation of jogs complicate the picture in three-dimensional dislocation configurations. Aside from rather generic ideas of how to deal with these issues in a higher dimensional theory \citep{elazab00} such phenomena are as yet only incorporated in the semi-phenomenological model of \citet{devincre_hk08}. Part of the mentioned effects might be contained in the pair correlations but other phenomena lead to immobilization and the subsequent emergence of dislocation sources. Although we have a concept for including dislocation sources into the higher dimensional theory \citep{hochrainer07} and the lowest order closure \citep{sandfeld_h11} this has to be coupled to the evolving dislocation state. As for the evolution of jogs a theory of single dislocations moving through a continuously dislocated medium has been presented in \citet{hochrainer13}. As a result we found that in multiple slip we can hardly expect the majority of the dislocations to be confined to a single glide plane but rather expect them to be mostly composed of interconnects between jogs introduced both by cutting polarized stationary forests as by being cut by other moving dislocations. Nevertheless, we hope that the availability of a promising kinematic framework as presented here will stimulate new efforts for developing statistical continuum theories of plasticity. 

We conclude the paper with an historical remark. Even before suggesting to use the concept of pair correlations in dislocation theory, Ekkehard Kröner suggested the so called moment theory of dislocations \citep{kroener63}. This is also an expansion of the dislocation density into a series of tensors of increasing order. However, he himself discarded the idea in favor of the pair correlation tensors \citep{kroener69}. But apparently independent of Kr\"oner's preliminary work a moment theory for dislocations has been used as a tool for stress calculations in discrete dislocation simulations by \citet{lesar_r02} and was suggested for further use by \citet{lesar_r04}. Also this time, however, the moment approach was not worked out into an actual dislocation density theory. The relation of the current alignment tensor expansion to the moment theory is not yet sufficiently clear. The moment theory expands a local distribution of dislocation lines into geometric moments similar to the inertia matrix in dynamics. It would be very interesting to see whether this sort of spatial coarse graining may be translated into the current tensor expansion which derives rather from an ensemble averaging. If a correspondence could be obtained one may infer that the eigenstress expansion introduced in \citet{lesar_r02} might proof useful also in continuum dislocation dynamics.

\bibliographystyle{elsarticle-harv}

\end{document}